\def\USEBIBLATEX{} %
\def\ARTICLEFORMAT{}
\newblock \iftoggle{bbx:eprint} {\iffieldundef{url}{\iffieldundef{doi}{\usebibmacro{eprint}}{}}{}} {}%
\newblock \iftoggle{bbx:doi} {\printfield{doi}} {}}  
\author{
\alignauthor
Christoph Berkholz\\
       \affaddr{Humboldt-Universität zu Berlin}\\
       \email{berkholz@informatik.hu-berlin.de}
\alignauthor
Jens Keppeler\\
       \affaddr{Humboldt-Universität zu Berlin}\\
       \email{keppelej@informatik.hu-berlin.de}
\alignauthor Nicole Schweikardt\\
       \affaddr{Humboldt-Universität zu Berlin}\\
       \email{schweika@informatik.hu-berlin.de}
}
  \author{Christoph Berkholz, Jens Keppeler, Nicole Schweikardt \\
    Humboldt-Universität zu Berlin \\
    \texttt{\{berkholz,keppelej,schweika\}@informatik.hu-berlin.de}
  }
\title{%
Answering Conjunctive Queries under Updates%
\thanks{This is the full version of the conference contribution \cite{BKS_enumeration_PODS17}.}%
}%
       \newtheorem{theorem}{Theorem}[section] 
        \newtheorem{lemma}[theorem]{Lemma}
        \theoremstyle{definition}
        \newtheorem{definition}[theorem]{Definition}
        \newtheorem{claim}[theorem]{Claim}
        \newtheorem{example}[theorem]{Example}
        \newtheorem{conjecture}[theorem]{Conjecture}
\newcommand{\nc}[1]{\newcommand{#1}}
\newcommand{\rnc}[1]{\renewcommand{#1}}
\rnc{\leq}{\ensuremath{\leqslant}}
\rnc{\geq}{\ensuremath{\geqslant}}
\rnc{\le}{\leq}
\rnc{\ge}{\geq}
\nc{\isdef}{\ensuremath{:=}}
\nc{\deff}{\isdef}
\nc{\defi}{\isdef}
\nc{\set}[1]{\ensuremath{\{#1\}}}
\nc{\setsize}[1]{\ensuremath{|#1|}}
\nc{\Setsize}[1]{\ensuremath{\big|#1\big|}}
\nc{\Set}[1]{\ensuremath{\big\{#1\big\}}}
\nc{\setc}[2]{\set{#1 \ : \ #2}}
\nc{\Setc}[2]{\Set{#1 \ : \ #2}}
\nc{\aufgerundet}[1]{\ensuremath{\lceil #1 \rceil}}
\nc{\abgerundet}[1]{\ensuremath{\lfloor #1 \rfloor}}
\nc{\dcup}{\ensuremath{\dot\cup}}
\nc{\ov}[1]{\ensuremath{\overline{#1}}}
\nc{\NN}{\ensuremath{\mathbb{N}}}
\nc{\NNpos}{\ensuremath{\NN_{\scriptscriptstyle\geq 1}}}
\nc{\RR}{\ensuremath{\mathbb{R}}}
\nc{\RRpos}{\ensuremath{\RR_{\scriptscriptstyle\geq 0}}}
\nc{\und}{\ensuremath{\wedge}}
\nc{\Und}{\ensuremath{\bigwedge}}
\nc{\oder}{\ensuremath{\vee}}
\nc{\Oder}{\ensuremath{\bigvee}}
\nc{\nicht}{\ensuremath{\neg}}
\nc{\impl}{\ensuremath{\to}}
\nc{\gdw}{\ensuremath{\leftrightarrow}}
\newcommand{\eexists}{\,\exists}
\newcommand{\uund}{\,\und\,}
\newcommand{\body}[1]{\,\big(#1\big)} %
\newcommand{\bodyjoin}[1]{\big(#1\big)} 
\newcommand{\bbody}[1]{\;\big(\,#1\,\big)} %
\newcommand{\bbodyjoin}[1]{\big(\,#1\,\big)}
\nc{\free}{\ensuremath{\textrm{\upshape free}}}
\nc{\ar}{\ensuremath{\operatorname{ar}}}
\nc{\Structure}[1]{\ensuremath{\mathcal{#1}}}
\nc{\A}{\Structure{A}}
\nc{\B}{\Structure{B}}
\nc{\C}{\Structure{C}}
\nc{\isom}{\ensuremath{\cong}}
\nc{\querycont}{\ensuremath{\sqsubseteq}}
\nc{\eval}[2]{\ensuremath{#1(#2)}}
\nc{\semantik}[1]{\ensuremath{\left\llbracket#1\right\rrbracket}}
\nc{\CanDB}[1]{\ensuremath{\A_{#1}}} %
\nc{\CanTup}[1]{\ensuremath{t_{#1}}} %
\newcommand{\inds}{s}
\newcommand{\queryphi}{\varphi}
\newcommand{\varv}{v}
\newcommand{\varu}{u}
\newcommand{\varx}{x}
\newcommand{\vary}{y}
\newcommand{\varz}{z}
\newcommand{\varw}{w}
\newcommand{\sgpsi}{\psi} %
\newcommand{\relS}{S} %
\newcommand{\relT}{T} %
\newcommand{\relE}{E} %
\newcommand{\relR}{R} %
\newcommand{\smalleps}{\varepsilon}
\newcommand{\arityr}{r}
\newcommand{\actdomsize}{n}%
\newcommand{\setV}{V} %
\newcommand{\verta}{a} %
\newcommand{\vertb}{b} %
\newcommand{\vertc}{c} %
\nc{\Vars}{\ensuremath{\textrm{\upshape vars}}}
\nc{\atoms}{\ensuremath{\textrm{\upshape atoms}}}
\nc{\Adom}{\ensuremath{\textrm{\upshape adom}}}
\nc{\adom}[1]{\ensuremath{\Adom(#1)}} %
\nc{\dom}[1]{\ensuremath{\textrm{\upshape dom}(#1)}} %
\newcommand{\poly}{\operatorname{\textit{poly}}}
\newcommand{\qhier}{q-hie\-rar\-chi\-cal\xspace}
\newcommand{\Querytree}{\text{q-tree}\xspace}
\newcommand{\Querytrees}{\text{q-trees}\xspace}
\newcommand{\good}{good\xspace}
\newcommand{\OMv}{OMv\xspace}
\newcommand{\OMvcon}{\OMv{}-conjecture\xspace}
\newcommand{\OuMv}{OuMv\xspace}
\newcommand{\OV}{OV\xspace}
\newcommand{\OVcon}{\OV-conjecture\xspace}
\newcommand{\SETH}{SETH\xspace}
\newcommand{\indi}{i}
\newcommand{\indj}{j}
\newcommand{\indt}{t} %
\newcommand{\vecu}{\vec{u}}
\newcommand{\vecv}{\vec{v}}
\newcommand{\vecuset}{U}
\newcommand{\vecvset}{V}
\newcommand{\matM}{M}
\newcommand{\dimn}{n}
\newcommand{\presentcount}[2]{C^{#1}_{#2}}
\newcommand{\listcount}[2]{C^{#1}_{#2}}
\newcommand{\fitcount}[1]{C^{#1}}
\newcommand{\startcount}{C_{\text{\upshape start}}}
\newcommand{\extensionsetof}[1]{\mathcal E^{#1}}
\newcommand{\xtuple}{\overline{\varx}}
\newcommand{\listcountfree}[2]{\widetilde{C}^{#1}_{#2}}
\newcommand{\fitcountfree}[1]{\widetilde{C}^{#1}}
\newcommand{\startcountfree}{\widetilde{C}_{\text{\upshape start}}}
\newcommand{\extensionsetfreeof}[1]{\widetilde{\mathcal E}^{#1}}
\newcommand{\DBzero}[1]{#1}
\newcommand{\DBone}[1]{}
\newcommand{\bigoh}{O}
\newcommand{\bigOh}{\bigoh}
\newcommand{\trans}{^{\,\mkern-1.5mu\mathsf{T}}}
\newcommand{\parent}{\pointerfont{parent}}
\newcommand{\pa}{\ensuremath{\textrm{\upshape path}}}
\newcommand{\start}{\textit{start}}
\nc{\arrayfont}[1]{\ensuremath{\texttt{#1}}}
\newcommand{\larray}[1]{\ensuremath{\arrayfont{A}_{#1}}}
\newcommand{\ite}[3]{\ensuremath{\texttt{\upshape[}#1,#2,#3\texttt{\upshape]}}}
\newcommand{\query}{\ensuremath{\varphi}}
\newcommand{\qatom}{\ensuremath{\psi}}
\newcommand{\size}[1]{\ensuremath{|\!|#1|\!|}}
\nc{\card}[1]{\ensuremath{|#1|}}
\newcommand{\assign}{\ensuremath{\alpha}}
\newcommand{\assignb}{\ensuremath{\beta}}
\newcommand{\satisfy}[3]{\ensuremath{(#1,#2) \models #3}}
\newcommand{\notsatisfy}[3]{\ensuremath{(#1,#2) \not\models #3}}
\newcommand{\atm}[1]{\ensuremath{\textrm{\upshape rep}(#1)}}
\newcommand{\tree}{T}
\newcommand{\querytree}{T_{\queryphi}}
\newcommand{\hypergraph}{\mathcal H}
\newcommand{\hypedg}{e}%
\newcommand{\edges}{\operatorname{E}}
\newcommand{\queryhyper}{\hypergraph_{\queryphi}}
\newcommand{\claimast}{\textbf{(\textasteriskcentered)}\xspace}
\newcommand{\setX}{X}
\newcommand{\potenzmengeof}[1]{2^{#1}}
\newcommand{\sgpsix}{\sgpsi^\varx}
\newcommand{\sgpsiy}{\sgpsi^\vary}
\newcommand{\sgpsixy}{\sgpsi^{\varx,\vary}}
\newcommand{\psix}{\sgpsix}
\newcommand{\psiy}{\sgpsiy}
\newcommand{\psixy}{\sgpsixy}
\newcommand{\partPfull}{\mathcal P_{\queryphi,\dimn}}
\newcommand{\iotasubij}{\iota_{\indi,\indj}} 
\newcommand{\homh}{h}
\newcommand{\homg}{g}
\newcommand{\homDBtoquery}{g_{\queryphi,\dimn}}
\newcommand{\queryphicore}{\queryphi_{\text{\upshape core}}}
\newcommand{\arityk}{k}
\newcommand{\perm}{{\pi}}
\newcommand{\setcalR}{{\mathcal R}}
\newcommand{\setI}{I}
\nc{\insertp}{\textsc{Insert}}
\nc{\cleanup}{\textsc{cleanUp}}
\nc{\cleanups}{\textsc{cleanUp'}}
\newenvironment{mi}{\begin{enumerate}[$\bullet$]}{\end{enumerate}}
\nc{\myparagraph}[1]{\noindent\textbf{#1}. }
{\renewcommand{\myparagraph}[1]{\medskip\noindent\textbf{#1}. }}
\nc{\Yes}{\texttt{yes}}
\nc{\No}{\texttt{no}}
\nc{\Dom}{\ensuremath{\textbf{dom}}}
\nc{\Var}{\ensuremath{\textbf{var}}}
\nc{\schema}{\ensuremath{\sigma}}
\nc{\DB}{\ensuremath{D}} %
\nc{\DBstrich}{\ensuremath{D'}} %
\nc{\DBstart}{\ensuremath{{\DB_0}}} %
\nc{\DBempty}{\ensuremath{{\DB_{\emptyset}}}} %
\nc{\DS}{\ensuremath{\mathtt{D}}} %
\rnc{\phi}{\queryphi}
\nc{\UpdateFont}[1]{\ensuremath{\textsf{#1}}}
\nc{\Delete}{\UpdateFont{delete}}
\nc{\Insert}{\UpdateFont{insert}}
\nc{\Update}{\UpdateFont{update}}
\nc{\AlgoFont}[1]{\ensuremath{\textbf{#1}}}
\nc{\PREPROCESS}{\AlgoFont{preprocess}}
\nc{\INIT}{\AlgoFont{init}}
\nc{\UPDATE}{\AlgoFont{update}}
\nc{\ENUMERATE}{\AlgoFont{enumerate}}
\nc{\COUNT}{\AlgoFont{count}}
\nc{\ANSWER}{\AlgoFont{answer}}
\nc{\EOE}{\texttt{EOE}\xspace} %
\nc{\preprocessingtime}{\ensuremath{t_p}}
\nc{\inittime}{\ensuremath{t_i}}
\nc{\delaytime}{\ensuremath{t_d}}
\nc{\updatetime}{\ensuremath{t_u}}
\nc{\answertime}{\ensuremath{t_a}}
\nc{\countingtime}{\ensuremath{t_c}}
\nc{\phiBTypical}{\ensuremath{\phi'_{\relS\text{-}\relE\text{-}\relT}}}
\nc{\phiJTypical}{\ensuremath{\phi_{\relS\text{-}\relE\text{-}\relT}}}
\nc{\phiET}{\ensuremath{\phi_{\relE\text{-}\relT}}}
\nc{\restrict}[2]{\ensuremath{{#1}_{|#2}}}
\nc{\extend}[3]{\ensuremath{{#1}\tfrac{#3}{#2}}}
\nc{\valuation}{\ensuremath{\beta}}
\nc{\emptyassign}{\ensuremath{\emptyset}}
\nc{\Assign}[2]{\ensuremath{\frac{#2}{#1}}}
\nc{\vroot}{\ensuremath{\varv_{\textsl{root}}}}
\nc{\pointerfont}[1]{\textit{#1}}
\nc{\varitem}[1]{\ensuremath{v^{#1}}}
\nc{\assitem}[1]{\ensuremath{\assign^{#1}}}
\nc{\constitem}[1]{\ensuremath{a^{#1}}}
\nc{\parentitem}[1]{\ensuremath{\parent^{#1}}}
\nc{\childitem}[2]{\ensuremath{\pointerfont{child}^{#1}_{#2}}}
\nc{\llist}[2]{\ensuremath{\mathcal{L}_{#2}^{#1}}}
\nc{\startlist}{\ensuremath{\mathcal{L}_{\text{\upshape start}}}\xspace}
\nc{\nextlistitem}[1]{\ensuremath{\pointerfont{next-listitem}^{#1}}}
\nc{\prevlistitem}[1]{\ensuremath{\pointerfont{prev-listitem}^{#1}}}
\nc{\countitem}[1]{\ensuremath{C_{\textit{below}}^{#1}}}
\nc{\desc}[1]{\ensuremath{\text{desc}}}
\nc{\Null}{\ensuremath{0}}
\nc{\arrayA}{\arrayfont{A}}
\nc{\arrayB}{\arrayfont{B}}
\nc{\arrayC}{\arrayfont{C}}
\nc{\arrayE}{\arrayfont{E}}
\nc{\ITEMS}{\mathcal{I}}
\nc{\NIL}{\textsc{nil}}
\newcounter{CommentCounterRed}
\newcounter{CommentCounterBlue}
\newcounter{CommentCounterGreen}
\newcounter{CommentCounterGray}
\begin{document}

\maketitle{}

\makeatletter{}%
\begin{abstract}
We consider the task of enumerating and counting answers to $k$-ary
conjunctive queries against relational databases that may be updated
by inserting or deleting tuples.

We exhibit a new notion of 
\emph{q-hierarchical} conjunctive
\allowbreak
queries and show that these can be maintained efficiently in the following sense.
During a linear time preprocessing phase, we can build a data
structure that enables constant delay enumeration 
of the query results; and when the database is updated, we can update
the data structure and restart the enumeration phase within constant time.
For the special case of self-join free conjunctive queries we obtain
a dichotomy: if a query is not q-hierarchical, then query enumeration
with sublinear$^\ast$ delay and sublinear update time (and arbitrary
preprocessing time) is impossible. 

For answering Boolean conjunctive queries and 
for the more general problem of counting the number of
solutions of $k$-ary queries we obtain complete dichotomies:
if the query's homomorphic core is q-hierarchical,
then size of the  
the query result can be computed in linear time and maintained with
constant update time. Otherwise, the size of the query result cannot
be maintained with sublinear update time.

All our lower bounds rely on the OMv-conjecture, a conjecture on the
hardness of online matrix-vector multiplication that has recently
emerged in the field of fine-grained complexity to characterise the
hardness of dynamic problems.
The lower bound for the counting problem additionally relies on the
orthogonal vectors conjecture, which in turn is implied
 by the
strong exponential time hypothesis.

$^\ast)$ By \emph{sublinear} we mean $\bigoh(n^{1-\smalleps})$ for some
$\smalleps>0$, where $n$ is the size of the active domain of the current
database.
\end{abstract}

\ifthenelse{\isundefined{\ARTICLEFORMAT}}{
\keywords{%
 query evaluation, 
 constant delay enumeration, 
 counting complexity, 
 dichotomy,
 dynamic algorithms,
 online matrix-vector multiplication
}%
}{}

\makeatletter{}%

\newcommand{\DBsize}{m}
\newcommand{\Qsize}{\size{\query}}
\newcommand{\funcf}{f}
\newcommand{\FPT}{\mbox{\upshape FPT}\xspace}
\newcommand{\Wone}{\mbox{\upshape W[1]}\xspace}
\newcommand{\sharpWone}{\mbox{\upshape \#W[1]}\xspace}
\newcommand{\questeq}{\stackrel{?}{=}}
\newcommand{\questmodels}{\stackrel{?}{\models}}

\section{Introduction}

We study the algorithmic problem of answering a conjunctive query $\query$
against a dynamically changing relational database $\DB$.
Depending on the problem setting, we want to
answer a Boolean query,
count the number of output tuples of a non-Boolean query, or
enumerate the query result with constant delay.
We consider finite relational databases over a possibly infinite domain
in the fully dynamic setting where new tuples can be inserted or
deleted.

At the beginning, a dynamic query evaluation algorithm gets a query
$\query$ together with an initial database $\DBstart$.
It starts with a
preprocessing phase 
where a suitable data structure is built to represent
the result of evaluating $\query$ against $\DBstart$.
Afterwards, when the database is updated by inserting or deleting a
tuple, the data structure is updated, too, and the result of evaluating $\queryphi$ on the
updated database is reported.

The \emph{update time} is the time needed
to compute the representation of the new query result. In order to
be efficient, we require
that the update time is way smaller than the time needed to recompute the
entire query result.
In particular, we consider \emph{constant} update time that 
only depends on the query but not on the database, as feasible.
One can even argue that update time that scales
polylogarithmically ($\log^{\bigoh(1)}\size{\DB}$) with the size
$\size{\DB}$ of the database is feasible.
On the other hand, we regard
update time that scales polynomially ($\size{\DB}^{\Omega(1)}$) with
the database as infeasible.

This paper's aim is to classify those conjunctive queries (CQs, for short) that can be
efficiently maintained under updates, and to distinguish
them from queries that are hard in this sense.

\subsection{Our Contribution} 
\label{sec:contribution}

We identify a subclass of conjunctive queries that can be efficiently
maintained by a 
dynamic evaluation algorithm.  We call these queries \emph{\qhier}, and
 this notion is strongly related to the \emph{hierarchical}
property that was introduced by Dalvi and Suciu in \cite{Dalvi.2007}
and has already played a central role for efficient query evaluation
in various contexts (see Section~\ref{sec:mainresults} for a
definition and discussion of related concepts).  We show that after a linear time
preprocessing phase the result of any \qhier conjunctive query can be
maintained with constant update time.
This means that after every update we can answer a Boolean \qhier
query and compute the number of result tuples of a non-Boolean query
in constant time.
Moreover, we can  enumerate the query result with
constant delay.

We are also able to prove matching
lower bounds. These bounds are conditioned on the \emph{\OMvcon}, a conjecture on the hardness of online
matrix-vector multiplication that 
was introduced
by Henzinger, Krinninger, Nanongkai, and Saranurak in
\cite{Henzinger.2015} to
characterise the hardness of many dynamic problems.
The lower bound for the counting problem additionally relies on the
\emph{\OVcon}, a conjecture on the hardness of the orthogonal vectors
problem which in turn is implied by the well-known
strong exponential time hypothesis \cite{Williams.2005}.
We obtain the following dichotomies, which are
stated from the perspective of data complexity (i.e., the query is
regarded to be fixed) and hold for any fixed $\smalleps>0$.
By $n$ we always denote the size of the active domain of the current
database $\DB$.
For the enumeration problem we restrict our attention to
\emph{self-join free} CQs, where every relation symbol occurs
only once in the query.

\begin{theorem}\label{cor:dichotomyEnum}
  Let $\queryphi$ be a self-join free CQ.
\\
If $\queryphi$ is \qhier, then after a linear time preprocessing phase
the query result $\query(\DB)$ can be enumerated with
 constant delay and
 constant update time.
\\
 Otherwise, unless the \OMvcon fails,
 there is no dynamic
 algorithm that enumerates $\queryphi$ with arbitrary preprocessing
 time, and $\bigOh(n^{1-\smalleps})$ delay and update time.
\end{theorem}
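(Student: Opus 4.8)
\myparagraph{The upper bound}
The statement combines an upper bound for \qhier queries (after linear-time preprocessing) with a lower bound for all other self-join free CQs that is conditional on the \OMvcon, so the plan is to treat the two directions separately. For the upper bound, assume $\queryphi$ is self-join free and \qhier. The hierarchical property arranges the variables of $\queryphi$ into a forest, a \Querytree, in which $\varx$ is an ancestor of $\vary$ exactly when $\atoms(\vary)\subsetneq\atoms(\varx)$ and variables with incomparable atom sets lie in disjoint subtrees; the additional ``q''-condition guarantees that no existentially quantified variable lies strictly above a free one, so on every root-to-leaf path the free variables come first and the quantified ones hang below. During preprocessing I would build, for every node $\varv$ of the \Querytree and every assignment to the ancestors of $\varv$ consistent with the atoms met so far, a doubly linked list of the admissible values for $\varv$, and attach to each such value a counter that is nonzero precisely when that value lies in some full answer of $\queryphi$ on $\DB$ (computed bottom-up, and maintainable by increments and decrements). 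Because $\queryphi$ is fixed, every tuple of a relation $\relR$ feeds the bookkeeping of only the boundedly many variables occurring in the unique atom over $\relR$, so the whole structure is built in $\bigOh(\size{\DB})$ time. Constant-delay enumeration is then a depth-first traversal of the free-variable part of the \Querytree: at each node one walks its list and descends to an alive child, the counters rule out dead ends, and the time between two successive output tuples is bounded by the constant height of the tree. An insertion or deletion of an $\relR$-tuple touches only the lists and counters attached to the constantly many variables of the atom over $\relR$ and then propagates the counter changes along the fixed-height tree, all in constant time, after which a fresh enumeration can start immediately.

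\myparagraph{The lower bound}
Now let $\queryphi$ be self-join free and not \qhier. Then either $\queryphi$ is not hierarchical, i.e.\ there are variables $\varx,\vary$ whose atom sets properly overlap --- an atom of $\queryphi$ containing $\varx$ but not $\vary$, an atom containing both, and an atom containing $\vary$ but not $\varx$ --- and I take $\queryphi_0$ to be $\relR(\varx)\und\relS(\varx,\vary)\und\relT(\vary)$, with $\varx,\vary$ quantified as in $\queryphi$; or $\queryphi$ is hierarchical but the ``q''-condition fails, which unwinds to an existentially quantified variable $\varx$ and a free variable $\vary$ with $\atoms(\vary)\subsetneq\atoms(\varx)$, hence an atom containing $\varx$ but not $\vary$ and an atom containing both, and I take $\queryphi_0(\vary)\isdef\eexists\varx\,\relR(\varx)\und\relS(\varx,\vary)$. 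In either case a dynamic enumeration algorithm for $\queryphi$ with delay and update time $\bigOh(n^{1-\smalleps})$ yields one for $\queryphi_0$ at asymptotically the same cost: store the relations of a $\queryphi_0$-instance in the argument positions of the chosen atoms that carry $\varx$ and $\vary$, fill every other relation of $\queryphi$ with all tuples over a single padding constant, and pin every other variable to that constant; since $\queryphi$ is self-join free, each relation symbol occurs once, so these paddings do not interfere and $\queryphi$'s answers on the resulting database are a trivial renaming of $\queryphi_0$'s. It thus suffices to refute sublinear dynamic algorithms for $\queryphi_0$, for which I reduce from \OuMv: given an $n\times n$ Boolean matrix $M$ and then, online, $n$ pairs $(\vecu_t,\vecv_t)$ of Boolean $n$-vectors, output the bits $\vecu_t\trans M\vecv_t$; the \OMvcon rules this out in total time $\bigOh(n^{3-\smalleps})$, and since the reduction only ever issues updates and enumeration calls --- starting from the empty database --- it applies even against arbitrary preprocessing. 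From the empty database, insert the $\bigOh(n^2)$ tuples of $\relS$ encoding $M$ (total cost $\bigOh(n^{3-\smalleps})$); then, for $t=1,\dots,n$, reset $\relR$ to encode $\vecu_t$ --- and, in the non-hierarchical case, also reset $\relT$ to encode $\vecv_t$ --- using $\bigOh(n)$ updates of total cost $\bigOh(n^{2-\smalleps})$, and read off $\vecu_t\trans M\vecv_t$: in the non-hierarchical case this bit is simply whether $\queryphi_0(\DB)$ is nonempty, obtained from one enumeration call, while otherwise $\queryphi_0(\DB)=\operatorname{supp}(\vecu_t\trans M)$ and we enumerate it, testing $\vecv_t[j]$ at each answer $j$ and stopping at the first $1$ --- at most $n$ answers at delay $\bigOh(n^{1-\smalleps})$, so cost $\bigOh(n^{2-\smalleps})$. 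Each round thus costs $\bigOh(n^{2-\smalleps})$, so the whole computation runs in $\bigOh(n^{3-\smalleps})$, contradicting the \OMvcon; this gives the lower bound, and the dichotomy follows.

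\myparagraph{Main obstacle}
The substantive part is the upper bound: engineering a single data structure that is simultaneously linear-time constructible, constant-delay enumerable, and constant-time updatable is what forces the \Querytree with its subtree counters and the argument that each individual tuple touches only a bounded portion of it. The lower bound is comparatively routine once the \OuMv reduction is in place; the only fiddly points are the self-join-freeness padding that passes from $\queryphi_0$ to $\queryphi$ and the accounting of the exponents.
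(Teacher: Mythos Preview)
Your upper-bound sketch is essentially the paper's construction (the \Querytree data structure of Section~\ref{sec:upperbound}), and the overall strategy for the lower bound --- split on which clause of Definition~\ref{def:consquanthier} fails, then encode \OuMv --- matches the paper's proof of Theorem~\ref{thm:enumerating_CQ_intro}. However, your reduction from the canonical query $\queryphi_0$ back to $\queryphi$ has a genuine gap.

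You write: ``fill every other relation of $\queryphi$ with all tuples over a single padding constant, and pin every other variable to that constant.'' This breaks as soon as $\varx$ or $\vary$ occur in an atom other than the three (resp.\ two) you selected. Concretely, take
\[
\queryphi \ \deff \ \exists\varx\eexists\vary\body{P\varx\uund Q\varx\uund E\varx\vary\uund T\vary}.
\]
Whatever you pick as $\sgpsix$, the other of $P\varx$, $Q\varx$ is a ``non-chosen'' atom; setting its relation to $\{(c)\}$ forces $\varx=c$, which is incompatible with letting $\varx$ range over $\{a_1,\ldots,a_n\}$ to encode $\vecu_t$. On this database your reduction always produces an empty answer, independent of $M,\vecu_t,\vecv_t$.

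The fix is exactly what the paper does (see the construction of $\DB(\queryphi,\matM,\vecu,\vecv)$ before Claim~\ref{claim:respectinghom_uMv}): do not single out three atoms and pad the rest, but for \emph{every} atom $\psi=R\varw_1\cdots\varw_r$ of $\queryphi$ insert the tuples $(\iotasubij(\varw_1),\ldots,\iotasubij(\varw_r))$, where $\iotasubij(\varx)=a_i$, $\iotasubij(\vary)=b_j$, $\iotasubij(\varz_s)=c_s$, with the range of $(i,j)$ restricted according to $\vecu$, $\vecv$, or $M$ only for the distinguished atoms $\sgpsix,\sgpsiy,\sgpsixy$ and unrestricted otherwise. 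Self-join freeness then guarantees that every homomorphism $\queryphi\to\DB$ equals some $\iotasubij$, which is precisely what makes the nonemptiness test (case~\eqref{item:hier-cond} fails) and the enumeration-of-a-column argument (case~\eqref{item:quant-cond} fails) go through. With this correction your argument is complete; the paper's version avoids the detour through $\queryphi_0$ by working directly with $\queryphi$'s atoms, but the content is the same.
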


\begin{theorem}\label{cor:dichotomyBool}
  Let $\queryphi$ be a Boolean CQ.
 \\
 If the homomorphic core of $\queryphi$ is \qhier, then 
 the query can
 be answered with
 linear
 preprocessing
 time and
 constant update time.
\\
 Otherwise, unless the \OMvcon fails,
 there is no
 algorithm that answers $\queryphi$ with arbitrary preprocessing time
 and $\bigOh(n^{1-\smalleps})$ update time.
\end{theorem}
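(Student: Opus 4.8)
The plan is to treat the two directions of the dichotomy separately. The positive direction is short and hinges on the paper's main maintenance result; essentially all the work---and the step I expect to be the main obstacle---lies in the lower bound, precisely because $\queryphicore$ may contain self-joins.

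\emph{Positive direction.} Assume $\queryphicore$ is \qhier. I would first use that $\queryphi$ and its homomorphic core $\queryphicore$ are equivalent, hence $\queryphi(\DB)=\queryphicore(\DB)$ for every database $\DB$, and that $\queryphicore$ is obtained from the fixed query $\queryphi$ by a computation that does not depend on the data. Thus it suffices to answer $\queryphicore$, and this is exactly what the paper's maintenance algorithm for \qhier conjunctive queries does: after a linear-time preprocessing on $\DBstart$ it maintains a data structure from which, after every single-tuple insertion or deletion, one reads off whether $\DB\models\queryphicore$ in constant time, with constant update time.

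\emph{Lower bound.} Now assume $\queryphicore$ is not \qhier. Since $\queryphicore$ is Boolean it has no free variables, so for it the \qhier property coincides with the hierarchical property of Dalvi and Suciu; hence $\queryphicore$ is not hierarchical, and there are distinct variables $x,y$ and atoms $\psi_1\in\atoms(x)\setminus\atoms(y)$, $\psi_2\in\atoms(x)\cap\atoms(y)$, $\psi_3\in\atoms(y)\setminus\atoms(x)$ of $\queryphicore$, with relation symbols $R_1,R_2,R_3$ that need not be distinct and may also occur in further atoms. The plan is to reduce from maintaining, under single-tuple updates, the Boolean answer of the self-join free query $\phiJTypical$ that asks whether $\exists x\,\exists y: S(x)\wedge E(x,y)\wedge T(y)$; as $\phiJTypical$ is not \qhier, Theorem~\ref{cor:dichotomyEnum} (equivalently, the fact that this is essentially the \OuMv problem) shows that it cannot be maintained with arbitrary preprocessing and $\bigOh(n^{1-\smalleps})$ update and answer time unless the \OMvcon fails. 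Given a database with $S\subseteq[n]$, $E\subseteq[n]^2$, $T\subseteq[n]$, I would build a database $\DB$ for $\queryphicore$ over the \emph{typed} domain $\Vars(\queryphicore)\times([n]\cup\{\star\})$, reading a pair $(v,i)$ as ``the element playing the role of variable $v$ with index $i$'' and using $\star$ as a placeholder index. For each atom $\psi=R(v_1,\dots,v_r)$ of $\queryphicore$, the relation $R^{\DB}$ receives only tuples whose vector of first coordinates equals $(v_1,\dots,v_r)$; their index coordinates are all $\star$ when $x,y\notin\Vars(\psi)$, and otherwise carry one common value in all $x$-positions, one common value in all $y$-positions, and $\star$ elsewhere. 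I add such a tuple for \emph{every} admissible choice of these values whenever $\psi\notin\{\psi_1,\psi_2,\psi_3\}$; for $\psi_1$ only when its $x$-value $i$ satisfies $i\in S$, for $\psi_3$ only when its $y$-value $j$ satisfies $j\in T$, and for $\psi_2$ only when $(i,j)\in E$; for $\psi_1,\psi_2,\psi_3$ I do \emph{not} add the all-$\star$ variant. A single-tuple change to $S$, $E$ or $T$ then induces $\bigOh(1)$ single-tuple changes of $\DB$, whose active domain stays of size $\bigOh(n)$.

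It then remains to verify $\DB\models\queryphicore$ if and only if $\exists i,j\in[n]: i\in S\wedge(i,j)\in E\wedge j\in T$, which transfers the lower bound for $\phiJTypical$ to $\queryphi\equiv\queryphicore$. The ``if'' direction is witnessed by the assignment $x\mapsto(x,i)$, $y\mapsto(y,j)$, and $v\mapsto(v,\star)$ for all other variables. For the ``only if'' direction, write a satisfying assignment as $h(v)=(\sigma(v),\iota(v))$; by construction the first coordinates of every tuple in $R^{\DB}$ spell out the variables of some $R$-atom of $\queryphicore$, so $\sigma$ maps every atom of $\queryphicore$ to an atom of $\queryphicore$, i.e.\ $\sigma$ is an endomorphism of $\queryphicore$, and hence---$\queryphicore$ being a core---an automorphism. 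A non-$\star$ index at the position of a variable $v$ is possible only when $\sigma(v)\in\{x,y\}$, so at most $\iota(\sigma^{-1}(x))$ and $\iota(\sigma^{-1}(y))$ are non-$\star$; and evaluating $h$ on the atoms $\sigma^{-1}(\psi_1),\sigma^{-1}(\psi_2),\sigma^{-1}(\psi_3)$---which contain $\sigma^{-1}(x)$, both $\sigma^{-1}(x)$ and $\sigma^{-1}(y)$, and $\sigma^{-1}(y)$, respectively---forces these two indices to be genuine elements $i,j\in[n]$ with $i\in S$, $(i,j)\in E$, $j\in T$. The step I expect to be the main obstacle is exactly this argument under self-joins: if $R_1$ (or $R_2$, $R_3$) is also the symbol of some unrelated atom, one cannot simply saturate all atoms outside $\{\psi_1,\psi_2,\psi_3\}$ (saturating $R_1$ would destroy the encoding of $S$); the typed product keeps the tuples of distinct atoms separated through their first coordinates, and the ``endomorphism $=$ automorphism'' property of cores is precisely what forces a satisfying assignment to respect this typing, so that $\psi_1,\psi_2,\psi_3$ cannot be short-circuited through other occurrences of their relation symbols.
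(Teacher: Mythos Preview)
Your proposal is correct and is essentially the paper's own argument in different notation: your typed domain $\Vars(\queryphicore)\times([n]\cup\{\star\})$ with first-coordinate projection $\sigma$ is, under the bijection $\verta_i\leftrightarrow(x,i)$, $\vertb_j\leftrightarrow(y,j)$, $\vertc_s\leftrightarrow(z_s,\star)$, exactly the database $\DB(\queryphicore,\matM,\vecu,\vecv)$ the paper builds, with $\sigma$ playing the role of the back-homomorphism $\homDBtoquery$, and your ``$\sigma$ is an endomorphism of a core, hence an automorphism'' step is precisely the content of Claims~\ref{claim:respectinghom_uMv} and~\ref{claim:respectinghom_core}. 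The only cosmetic difference is that you factor the reduction through $\phiBTypical$ whereas the paper reduces directly from \OuMv.
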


\begin{theorem}\label{cor:dichotomyCount}
  Let $\queryphi$ be a CQ.
 \\
 If $\queryphi$ is \qhier, then the number
 $\setsize{\eval{\query}{\DB}}$ of tuples in the query result can be computed with  
 linear preprocessing time and constant update time.
 \\
 Otherwise, assuming the \OMvcon and the \OVcon, there is no algorithm that
 computes $\setsize{\eval{\query}{\DB}}$ with arbitrary preprocessing
 time and $\bigOh(n^{1-\smalleps})$ update time.
\end{theorem}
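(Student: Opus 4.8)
The statement combines a positive maintenance result for \qhier CQs with a conditional lower bound for all others, so I would prove the two directions separately. For the \textbf{positive direction} the plan is to reuse the dynamic data structure underlying our enumeration result and to augment it with counters. One builds in linear time a data structure organised along a \Querytree of $\queryphi$, whose depth is bounded by a function of $\queryphi$ alone, and stores at every node the number of partial answer tuples living in its subtree. Since inserting or deleting a single tuple of a relation $\relR$ affects only the constantly many atoms of $\queryphi$ that use $\relR$, and the effect propagates upward along a root path of constant length, all of these counters — in particular the one at the root, which equals $\setsize{\eval{\queryphi}{\DB}}$ — are recomputed in constant time after each update. The only point that needs a little care is the presence of self-joins: an update must be replayed once for each occurrence of the affected relation symbol, which is still $\bigOh(1)$ work, and overlapping occurrences contribute overlapping corrections to the count that have to be accounted for. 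Apart from this, the argument is the book-keeping extension of the \qhier maintenance algorithm, and the constant-delay enumeration machinery is not needed in full.

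For the \textbf{lower bound}, assume $\queryphi$ is not \qhier. I would first invoke the structural characterisation of the \qhier property: $\queryphi$ fails to be \qhier precisely when either (i) $\queryphi$ is not hierarchical, i.e.\ there are variables $\varx,\vary$ with $\atoms(\varx)\cap\atoms(\vary)\neq\emptyset$ while $\atoms(\varx)\not\subseteq\atoms(\vary)$ and $\atoms(\vary)\not\subseteq\atoms(\varx)$; or (ii) $\queryphi$ is hierarchical but some free variable $\varx$ and some bound variable $\vary$ satisfy $\atoms(\varx)\subsetneq\atoms(\vary)$. In either case the plan is to exhibit a fixed hard ``pattern query'' $\queryphi_0$ together with a polynomial-time reduction that turns a dynamic algorithm for $\queryphi$ with $\bigOh(n^{1-\smalleps})$ update time into one for $\queryphi_0$ with the same asymptotic guarantee; a known lower bound for $\queryphi_0$ then closes the gap. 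In case~(i) take the Boolean query $\queryphi_0 = \eexists\varx\eexists\vary\,(\relS(\varx)\uund\relE(\varx,\vary)\uund\relT(\vary))$; then $\setsize{\eval{\queryphi_0}{\DB}}\in\{0,1\}$ answers the online Boolean vector--matrix--vector multiplication problem \OuMv: store the matrix in $\relE$ during the (arbitrary) preprocessing, and in each of the $n$ rounds overwrite $\relS$ and $\relT$ by the supports of the current vectors using $\bigOh(n)$ updates. Thus $n$ rounds at $\bigOh(n)\cdot\bigOh(n^{1-\smalleps})$ time per round total $\bigOh(n^{3-\smalleps})$, contradicting the \OMvcon.

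In case~(ii) the \OMv reduction is unavailable: the Boolean version of such a pattern is hierarchical and, having no free variables, automatically \qhier, hence efficiently maintainable by Theorem~\ref{cor:dichotomyBool} — so the hardness genuinely comes from counting rather than from deciding, which is exactly why the extra conjecture must enter. Here the plan is to reduce from orthogonal vectors, using the pattern $\queryphi_0(\varx) = \eexists\vary\,(\relR(\varx,\vary)\uund\relS(\vary))$. Given an \OV instance $A,B\subseteq\{0,1\}^d$, encode $A$ in $\relR$ during preprocessing via $\relR(\verta,k)$ iff $\verta_k=1$; then process the vectors of $B$ one at a time, in each round overwriting $\relS$ by $\operatorname{supp}(\vertb)$ with $\bigOh(d)$ updates and reading $\setsize{\eval{\queryphi_0}{\DB}}=\setsize{\setc{\verta\in A}{\operatorname{supp}(\verta)\cap\operatorname{supp}(\vertb)\neq\emptyset}}$. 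Some round produces a value strictly below $\setsize{A}$ iff there is an orthogonal pair; taking $d=\operatorname{polylog}(\setsize{B})$ keeps the active domain of size $\Theta(\setsize{B})$, so the total running time is $\bigOh(\setsize{B}^{2-\smalleps'})$, contradicting the \OVcon (and hence \SETH).

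The bulk of the work — and the main obstacle — is upgrading these two special reductions to reductions for an \emph{arbitrary} non-\qhier CQ, along the same lines as for the Boolean and enumeration dichotomies: one locates the witnessing variables and atoms inside $\queryphi$ (an atom containing the relevant variable is projected down to $\relR$, $\relS$, $\relE$ or $\relT$), sets every other free variable to a single dummy constant so that it contributes only a fixed multiplicative factor to the count, makes every ``spectator'' atom full over a tiny domain so that it does not constrain anything, and — the genuinely delicate part — separates occurrences of a repeated relation symbol by colouring gadgets, since with self-joins one cannot populate the two roles of a pattern relation independently. The one conceptual point to get right is that hierarchical-but-not-\qhier queries are easy to \emph{decide} but hard to \emph{count}, which is precisely the reason the counting dichotomy needs the orthogonal-vectors conjecture on top of the \OMvcon.
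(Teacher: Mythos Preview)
Your upper-bound sketch and your two prototype reductions (\OuMv for the non-hierarchical case, \OV for the hierarchical-with-bad-quantification case) match the paper. The gaps are in how you lift the reductions to an \emph{arbitrary} non-\qhier CQ.

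First, the paper does not work with $\queryphi$ directly but passes to its \emph{homomorphic core} $\queryphicore$ (Theorem~\ref{thm:counting_CQ_lowerbound}); since $\eval{\queryphi}{\DB}=\eval{\queryphicore}{\DB}$, a lower bound for $\queryphicore$ transfers. The core property is then used essentially: in the constructed database $\DB(\queryphicore,\matM,\vecu,\vecv)$ one must argue that every ``good'' homomorphism $\homh$ respects the intended partition $\partPfull$, and this holds precisely because $\homDBtoquery\circ\homh$ would otherwise fold $\queryphicore$ onto a proper subquery (Claims~\ref{claim:respectinghom_uMv} and~\ref{claim:respectinghom_core}). Without first passing to the core your reduction cannot succeed in general. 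Concretely, $\queryphi(\varx)=\exists\vary\,\bodyjoin{\relR\varx\varx\und\relR\varx\vary\und\relR\vary\vary}$ is not \qhier, yet its core $\relR\varx\varx$ is, and $\setsize{\eval{\queryphi}{\DB}}=\setsize{\setc{a}{(a,a)\in\relR^\DB}}$ is trivially maintainable --- no reduction from \OuMv or \OV can go through for this $\queryphi$, and your plan would get stuck precisely at the self-join step.

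Second, ``colouring gadgets'' is not how the paper handles self-joins, and it is unclear how you would colour without changing the schema. The actual device is Lemma~\ref{lem:simulating_unary_relations}: given disjoint target sets $\setX_{\varx_1},\ldots,\setX_{\varx_k}\subseteq\Dom$ for the free variables, one computes $\setsize{\eval{\queryphi}{\DB}\cap(\setX_{\varx_1}\times\cdots\times\setX_{\varx_k})}$ from $2^{\bigoh(k)}$ parallel instances of the \emph{unrestricted} counting problem --- on databases where elements of selected $\setX_{\varx_i}$ are replaced by $\ell$ copies --- via a Vandermonde system and inclusion--exclusion (this is the Chen--Mengel technique). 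That is what isolates the homomorphisms sending each free variable into its intended class without touching the query, and combined with the core assumption it is exactly what makes the reduction correct for arbitrary CQs with self-joins. Your ``set every other free variable to a single dummy constant'' is morally the special case $\setsize{\setX_{\varz_\inds}}=1$, but it only becomes effective through this lemma.
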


For the databases we construct in our lower bound proofs it holds that
$n\approx \sqrt{\size{\DB}}$.
Therefore all our lower bounds of the form $n^{1-\smalleps}$ translate
to $\size{\DB}^{\frac12-\smalleps}$ in terms of the size of the database.

\subsection{Related Work} 
\label{sec:related_work}

In more practically motivated papers
the task of answering a fixed query against a dynamic database has
been studied under the name
\emph{incremental view maintenance} (see e.\,g. \cite{Gupta.1993}).
Given the huge amount
of theoretical results on the complexity of query
evaluation in the static setting, surprisingly little is known about
the computational complexity of query evaluation under updates.

The \emph{dynamic descriptive complexity} framework introduced by
Patnaik and Immerman \cite{Patnaik.1997}
focuses on the expressive power of (fragments or extensions of)
first-order logic on dynamic databases and has led to a rich body of
literature (see \cite{Schwentick.2016} for a survey).
This approach, however, is quite different from the algorithmic
setting considered in the present paper,
as in every update step the internal data structure is updated by
evaluating a first-order query.
As this may take polynomial time even for the special case of conjunctive queries 
(as considered e.\,g.\ by Zeume and Schwentick in \cite{Zeume.2014}), this
is too expensive in the area of dynamic algorithms.

We are aware of only a few papers dealing with the computational
complexity of query evaluation under updates, namely
the study of XPath evaluation of Bj{\"{o}}rklund, Gelade, and Martens \cite{Bjorklund.2010} and
the studies of MSO queries on trees by
Balmin, Papakonstantinou, and Vianu \cite{DBLP:journals/tods/BalminPV04}
and by Losemann and Martens \cite{Losemann.2014}, the latter of which
is to the best of our knowledge the only work that deals with
efficient query enumeration under updates.%
\footnote{Let us mention that in a follow-up of the present paper, we
characterise the dynamic complexity of counting and
enumerating the results of first-order queries (and their extensions
by modulo-counting quantifiers) on bounded degree databases \cite{BKS-ICDT17}.}

In the static setting, a lot of research has been devoted to classify
those conjunctive queries that can be answered efficiently.
Below we give an overview of known results.

\myparagraph{Complexity of Boolean Queries}
The complexity of answering Boolean conjunctive
queries on a static database is fairly well understood.
For every fixed database schema $\schema$,
extending a result of \cite{DBLP:conf/stoc/GroheSS01}, Grohe
\cite{Grohe.2007} gave a tight 
characterisation of the \emph{tractable} CQs under the 
complexity theoretic assumption \FPT $\neq$ \Wone:
If we are given a Boolean CQ
$\query$ of size $\Qsize$ and a $\schema$-database $\DB$ of
size $\DBsize$, then $\query$ can
be answered against $\DB$ in time $\funcf(\Qsize)\cdot\DBsize^{\bigoh(1)}$ for
some computable function $\funcf$ if, and only if, the
homomorphic core of $\queryphi$ has bounded treewidth.
Marx \cite{Marx.2013} extended this classification to the case where
the schema is part of the input.

\myparagraph{Counting Complexity}
For computing the number of output tuples of a given join query (i.e.,
a quantifier-free CQ) 
over a fixed schema $\schema$, a characterisation
was proven by Dalmau and Jonsson \cite{Dalmau.2004}: Assuming \FPT $\neq$
\sharpWone, the output size $\Setsize{\eval{\queryphi}{\DB}}$ of a join query
$\queryphi$ evaluated on a $\schema$-database $\DB$ of
size $\DBsize$ can be
computed in time $\funcf(\Qsize)\cdot\DBsize^{\bigoh(1)}$ if, and only if,
$\queryphi$ has bounded treewidth.
The result has recently been extended to
all conjunctive queries over a fixed schema by Chen and Mengel \cite{Chen.2015}.
Structural properties that make the counting
problem for CQs tractable in the case where the schema
is part of the input have been identified in
\cite{Durand.2015,Greco.2014}.

\myparagraph{Join Evaluation}
When the entire result of a non-Boolean query has to be computed, 
the evaluation problem cannot be modelled as a decision or
counting problem and one has to come up with different measures to
characterise the hardness of query evaluation.
One approach that has been fruitfully applied to join evaluation
is to study the worst-case output size as a measure of
the hardness of a query.
Atserias, Grohe, and Marx\ \cite{Atserias.2013} identified the fractional edge
cover number of the join query
as a crucial measure for lower bounding its worst-case output size.
This bound was also shown to be optimal and is matched by so called ``worst-case optimal'' join
evaluation algorithms, see \cite{Ngo.2012,DBLP:conf/icdt/Veldhuizen14,DBLP:conf/pods/NgoNRR14,DBLP:conf/pods/KhamisNRR15}.

\myparagraph{Query Enumeration}
Another way of studying non-Boolean queries that is independent of the
actual or worst-case output size is \emph{query
enumeration}.
A query enumeration algorithm evaluates a non-Boolean query by
reporting, one by one without repetition, the tuples in the query
result.
The crucial measure to characterise queries that are tractable
w.r.t.\ enumeration is the delay between two output tuples.
In the context of constraint satisfaction, the combined complexity,
where
the query as well as the database are given as input, has been
considered. As the size
of the query
result might be exponential in the input size in this setting, queries that can be enumerated with \emph{polynomial
  delay} and polynomial preprocessing are regarded as ``tractable.''
Classes of conjunctive queries that can be enumerated with polynomial
delay have been identified in \cite{Bulatov.2012,Greco.2013}.
However, a complete
characterisation of conjunctive queries that are tractable in this
sense is not in sight.
 
More relevant to the database setting, where one 
evaluates a small query against a large database, is the notion of
\emph{constant delay enumeration} introduced by Durand and Grandjean
in \cite{DBLP:journals/tocl/DurandG07}. 
The preprocessing time is
supposed to be much smaller than
the time needed to evaluate the query
(usually,
linear in the size of the database), and the delay between two output
tuples may depend on the query, but not on the database.
A lot of research has been devoted to this subject, where one usually
tries to understand which structural restrictions on the query or on
the database allow constant delay enumeration.
For an introduction to this topic and an overview of the
state-of-the-art we refer the reader to the surveys
\cite{Segoufin.2015,Segoufin.2014,DBLP:conf/icdt/Segoufin13}. 

Bagan, Durand, and Grandjean \cite{Bagan.2007} showed that acyclic conjunctive
queries that are \emph{free-connex} can be enumerated with constant
delay after a linear time preprocessing phase (cf.\
\cite{DBLP:phd/hal/BraultBaron13} for a simplified proof of their
result).
They also showed that for self-join free acyclic conjunctive queries
the free-connex property is essential by proving the following lower bound.
Assume that multiplying two $n\times n$ matrices cannot
be done in time $\bigoh(n^2)$.
Then the result of a self-join free acyclic conjunctive query that is
not free-connex cannot be enumerated with constant delay after a
linear time preprocessing phase.

It turns out that our notion of \qhier conjunctive queries is a proper
subclass of the free-connex conjunctive queries.
Thus, there are queries that can be efficiently enumerated in the
static setting but are hard to maintain under database updates.

\myparagraph{Organisation} 
The remainder of the paper is structured as follows.
In Section~\ref{sec:preliminaries} we 
fix the basic notation along with the concept of dynamic algorithms
for query evaluation. Section~\ref{sec:mainresults} introduces
\emph{\qhier} queries and formally 
states our main theorems. 
We then present an alternative characterisation of \qhier queries in Section~\ref{sec:querytree} and prove our lower and upper bound theorems in
Sections~\ref{sec:lowerbounds} and \ref{sec:upperbound},
respectively. 
We conclude in Section~\ref{sec:discussion}.

\myparagraph{Acknowledgement}
We acknowledge the financial support by the German Research Foundation DFG under grant
SCHW~837/5-1.
The first author wants to thank Thatchaphol Saranurak and Ryan
Williams for helpful discussions on algorithmic conjectures.

\makeatletter{}%
\section{Preliminaries}\label{sec:preliminaries}

We write $\NN$ for the set of non-negative integers and let 
$\NNpos\deff\NN\setminus\set{0}$ and $[n]\deff\set{1,\ldots,n}$ for
all $n\in\NNpos$.
By $\potenzmengeof{M}$ we denote the power set of a set $M$.

\myparagraph{Databases}
We fix a countably infinite set $\Dom$, the \emph{domain} of potential
database entries. Elements in $\Dom$ are called \emph{constants}.
A \emph{schema} is a finite set $\schema$ of relation symbols, where
each $R\in\schema$ is equipped with a fixed \emph{arity} $\ar(R)\in\NNpos$. 
Let us fix a schema $\schema=\set{R_1,\ldots,R_s}$, and let
$r_i\deff\ar(R_i)$ for $i\in[s]$.
A \emph{database} $\DB$ of schema $\schema$ ($\schema$-db, for short), 
is of the form $\DB=(R_1^\DB,\ldots,R_s^\DB)$, where $R_i^\DB$ is a finite subset of 
$\Dom^{r_i}$.
The \emph{active domain} $\adom{\DB}$ of $\DB$ is the smallest subset
$A$ of $\Dom$ such that $R_i^\DB\subseteq A^{r_i}$ for all $i\in [s]$.

\myparagraph{Updates}
We allow to update a given database of schema $\schema$ by inserting or deleting
tuples as follows. An \emph{insertion} command is of the form
 \Insert\ $R(a_1,\ldots,a_r)$
for $R\in\schema$, $r=\ar(R)$, and $a_1,\ldots,a_r\in \Dom$. When
applied to a $\schema$-db $\DB$, it results in the updated $\schema$-db
$\DB'$ with $R^{\DB'}\deff R^{\DB}\cup\set{(a_1,\ldots,a_r)}$ and
$S^{\DB'}\deff S^{\DB}$ for all $S\in\schema\setminus\set{R}$.
A \emph{deletion} command is of the form
 \Delete\ $R(a_1,\ldots,a_r)$
for $R\in\schema$, $r=\ar(R)$, and $a_1,\ldots,a_r\in \Dom$. When
applied to a $\schema$-db $\DB$, it results in the updated $\schema$-db
$\DB'$ with $R^{\DB'}\deff R^{\DB}\setminus\set{(a_1,\ldots,a_r)}$ and
$S^{\DB'}\deff S^{\DB}$ for all $S\in\schema\setminus\set{R}$.
Note that both types of commands may change the database's active domain.

\myparagraph{Queries}
We fix a countably infinite set $\Var$ of \emph{variables}.
An \emph{atomic query} (for short: \emph{atom}) $\qatom$ of schema $\schema$ is of the form 
$R u_1 \cdots u_r$ with $R\in\schema$, $r=\ar(R)$, and
$u_1,\ldots,u_r\in\Var$.
The set of variables occurring in $\qatom$ is denoted by
$\Vars(\qatom)\deff \set{u_1,\ldots,u_r}$.
A \emph{conjunctive query} (CQ, for short) of schema $\schema$ is of the form 
\begin{equation}\label{eq:CQ}
  \exists y_1\,\cdots\,\exists y_\ell \bbody{
   \qatom_1 \uund \cdots \uund \qatom_d
  }
\end{equation}
where $\ell\in\NN$, $d\in\NNpos$, $\qatom_j$ is an atomic
query of schema $\schema$ for every $j\in [d]$, and
$y_1,\ldots,y_\ell$ are pairwise distinct elements in $\Var$.
\emph{Join queries} are quantifier-free CQs, i.e., CQs of the form
$\big(\qatom_1\und\cdots\und\qatom_d\big)$.
A CQ is called \emph{self-join free} (or \emph{non-repeating} or
\emph{simple}) if no relation symbol occurs more than once in the
query. 
For a CQ $\query$ of the form \eqref{eq:CQ} we let
$\Vars(\query)$ be the set of all variables occurring in
$\query$, and we let 
$\free(\query)\deff \Vars(\query)\setminus\set{y_1,\ldots,y_\ell}$ be
the set of \emph{free} variables. 

For $k\in\NN$, a \emph{$k$-ary conjunctive query} ($k$-ary CQ, for
short) is of the form
$\query(x_1,\ldots,x_k)$, where $\query$ is a CQ of schema $\schema$,
$k=|\free(\query)|$, and $x_1,\ldots,x_k$ is a list of the free
variables of $\query$. We will often assume that the tuple
$(x_1,\ldots,x_k)$ is clear from the context and simply write $\query$
instead of $\query(x_1,\ldots,x_k)$.
The semantics of CQs are defined as usual: \
A \emph{valuation} is a mapping $\valuation:\Var\to\Dom$.
For a $\schema$-db $\DB$ and an atomic query
$\qatom=Ru_1\cdots u_r$ 
we write $(\DB,\valuation)\models\qatom$ to indicate that 
$\big(\valuation(u_1),\ldots,\valuation(u_r)\big)\in
R^{\DB}$.
For a $k$-ary CQ $\query(x_1,\ldots,x_k)$ where $\query$ is of the
form \eqref{eq:CQ}, and for a tuple
$\ov{a}=(a_1,\ldots,a_k)\in\Dom^k$,
a valuation $\valuation$ is said to be \emph{compatible} with $\ov{a}$
iff $\valuation(x_i)=a_i$ for all $i\in [k]$.
We write $\DB\models\query[\ov{a}]$ to indicate that there is a valuation
$\valuation$ that is compatible with $\ov{a}$ such that $(\DB,\valuation)\models \qatom_j$
for all $j\in [d]$.
The \emph{query result} $\query(\DB)$ is
defined as the set of all tuples
$\ov{a}\in\Dom^k$ with $\DB\models\query[\ov{a}]$.
Clearly,
$\query(\DB)\subseteq\Adom(\DB)^k$.

A \emph{Boolean} CQ is a CQ $\query$ with
$\free(\query)=\emptyset$.
As usual, for Boolean CQs $\query$ we will write $\query(\DB)=\Yes$
instead of $\query(\DB)\neq\emptyset$, and $\query(\DB)=\No$ instead of 
$\query(\DB)=\emptyset$.

\myparagraph{Sizes and Cardinalities}
The \emph{size} $\size{\query}$ of a CQ $\query$ is defined as the
length of $\query$ when viewed as a word over the alphabet
$\schema\cup\Var\cup\set{\exists,\wedge,(,)}$. %
For a $k$-ary CQ $\query(x_1,\ldots,x_k)$ and a $\sigma$-DB $\DB$, the 
\emph{cardinality of the query result} is the number $|\query(\DB)|$ of
tuples in $\query(\DB)$.

The \emph{cardinality} $\card{\DB}$ of a $\schema$-db $\DB$ is defined
as the number of tuples stored in $\DB$, i.e.,
$\card{\DB}\deff\sum_{R\in\schema} |R^{\DB}|$. 
The \emph{size} $\size{\DB}$ of $\DB$ is defined as
$|\schema|+|\Adom(\DB)|+\sum_{R\in\schema} \ar(R){\cdot}
|R^D|$ and corresponds to the size of a reasonable encoding of $\DB$.
We will often write $n$
to denote the cardinality $|\Adom(\DB)|$ of
$\DB$'s active domain.

\myparagraph{Dynamic Algorithms for Query Evaluation}
We use Random Access Machines (RAMs) with $\bigoh(\log n)$ word-size and a uniform cost
measure as a model of computation.
In particular, adding and multiplying integers that are polynomial in
the input size can be done in constant time.
For our purposes it will be convenient to assume that $\Dom=\NNpos$.
We will assume that the RAM's memory is initialised to $\Null$. In
particular, if an algorithm uses an array, we will assume
that all array entries are initialised to $\Null$, and this initialisation
comes at no cost (in real-world computers this can be achieved by using the
\emph{lazy array initialisation technique}, cf.\ e.g.\ the textbook
\cite{MoretShapiro}). 
A further assumption 
that is unproblematic within the RAM-model, but
unrealistic for real-world computers, is that for every fixed
dimension $d\in\NNpos$ we have available an unbounded number of
$d$-ary arrays $\arrayA$ such that for given $(n_1,\ldots,n_d)\in\NN^d$
the entry $\arrayA[n_1,\ldots,n_d]$ 
at position $(n_1,\ldots,n_d)$ can be accessed in constant time.\footnote{While this can be accomplished easily in the RAM-model, 
for an implementation on real-world
computers one would probably have to resort to replacing our use of
arrays by using suitably designed hash functions.}

Our algorithms will take as input a $k$-ary CQ 
$\query(x_1,\ldots,x_k)$ and a $\schema$-db $\DBstart$.
For all query evaluation problems considered in this paper, we aim at
routines
$\PREPROCESS$ and $\UPDATE$ which achieve the following:
\begin{mi}
 \item upon input of $\query(x_1,\ldots,x_k)$ and $\DBstart$,
   $\PREPROCESS$ \allowbreak builds a data
   structure $\DS$ which represents $\DBstart$ (and which is designed in
   such a way that it supports efficient evaluation of $\query$ on $\DBstart$)
 \item upon input of a command
   $\Update\ R(a_1,\ldots,a_r)$ (with
   $\Update\in\set{\Insert,\Delete}$), 
   calling $\UPDATE$  modifies the data structure $\DS$ such that it
   represents the updated database $\DB$.
\end{mi}
The \emph{preprocessing time} $\preprocessingtime$ is the
time used for performing $\PREPROCESS$;
the \emph{update time} $\updatetime$ is the time used for performing
an $\UPDATE$.

In the following, $\DB$ will always denote the database that is
currently represented by the data structure $\DS$.
To solve the \emph{enumeration problem under updates}, apart from the
routines $\PREPROCESS$ and $\UPDATE$ we 
aim at a routine $\ENUMERATE$ such that
calling $\ENUMERATE$ invokes an enumeration of all tuples
(without repetition) that belong to
the query result $\query(\DB)$.
The \emph{delay} $\delaytime$ is the maximum time
used during a call of $\ENUMERATE$
\begin{mi}
\item until the output of the first tuple (or the end-of-enumeration
  message $\EOE$, if $\phi(D)=\emptyset$),
\item between the output of two consecutive tuples, and 
\item between the output of the last tuple and the end-of-enumeration
  message $\EOE$.
\end{mi}

To solve the \emph{counting problem under updates}, instead of
$\ENUMERATE$ we aim at a routine $\COUNT$ which outputs the cardinality
$|\query(\DB)|$ of the query result.
The \emph{counting time} $\countingtime$ is the time used for
performing a $\COUNT$.
To \emph{answer} a \emph{Boolean} conjunctive query under updates,
instead of $\ENUMERATE$ or $\COUNT$ we aim at a routine $\ANSWER$
that produces the answer $\Yes$ or $\No$ of $\query$ on $\DB$.
The \emph{answer time} $\answertime$ is the time used for
performing $\ANSWER$.
Whenever speaking of a \emph{dynamic algorithm}, we mean an algorithm
that has routines $\PREPROCESS$ and $\UPDATE$ and, depending on the
problem at hand, at least one of the routines $\ENUMERATE$, $\COUNT$,
and $\ANSWER$.

Throughout the paper, we often adopt the view of \emph{data
  complexity} and use the $\bigoh$-notation to suppress factors that
may depend on the query but not on the database. For example, 
``linear preprocessing time'' \allowbreak means 
$\preprocessingtime= f(\phi)\cdot\size{\DBstart}$ and 
``constant update time'' means  $\updatetime=f(\query)$, for a function
$f$ with codomain $\NN$. When writing $\poly(\queryphi)$ we mean $\size{\queryphi}^{\bigOh(1)}$.

\makeatletter{}%

\section{Main Results}
\label{sec:mainresults}

\newcommand{\queryphialt}{\queryphi'}
\newcommand{\queryalttree}{\tree_{\queryphialt}}

Our notion of \emph{\qhier} conjunctive queries is related to the
\emph{hierarchical} property that has already played a central role for
efficient query evaluation in various contexts. It has been introduced
by Dalvi and Suciu in \cite{Dalvi.2007} to characterise the 
Boolean CQs that can be answered in polynomial time on
probabilistic databases. 
They obtained a dichotomy stating for self-join free queries that the complexity of query
evaluation on probabilistic databases is in PTIME for hierarchical
queries and \#P-complete for non-hierarchical queries.
Fink and Olteanu \cite{Fink.2016} generalised the notion and the
dichotomy result to non-Boolean queries and to queries using negation. 
In the different context of query evaluation on massively parallel
architectures, Koutris and Suciu \cite{KoutrisSuciuPODS11} 
considered hierarchical join queries and singled out a subclass of
so-called tall-flat queries as exactly those queries that can be
computed with only one broadcast step in their \emph{Massively
  Parallel} model of query evaluation. For further information on the
various uses of the hierarchical property we refer the reader to \cite{Fink.2016}.

The definition of {hierarchical} queries relies on the following notion.
Consider a CQ $\queryphi$ of the form \eqref{eq:CQ}.
For every variable $\varx\in\Vars(\queryphi)$ we let $\atoms(\varx)$
be the set of all atoms $\sgpsi_j$ of $\queryphi$ such that  $\varx\in \Vars(\sgpsi_j)$.
Dalvi and Suciu \cite{Dalvi.2007} call a Boolean CQ $\queryphi$ \emph{hierarchical} iff
the condition 
\begin{itemize}
 \item[\ \ $(*)$: ]
  $\atoms(\varx)\subseteq\atoms(\vary)$ \ or \ 
  $\atoms(\varx)\supseteq\atoms(\vary)$ \ or \ 
  $\atoms(\varx)\cap\atoms(\vary)=\emptyset$
\end{itemize}
is satisfied by all variables $x,y\in\Vars(\queryphi)$.
An example for a hierarchical Boolean CQ is 
\;$\exists\varx\exists\vary\exists\varz\exists\vary'\exists\varz'\body{
\relR\varx\vary\varz \,\wedge\, \relR\varx\vary\varz' \,\wedge\,
 \relE\varx\vary \,\wedge\, \relE\varx\vary'}$.

In \cite{KoutrisSuciuPODS11}, Koutris and Suciu transferred the notion to join queries $\queryphi$, which they
call hierarchical iff condition $(*)$ is satisfied by all variables $x,y\in\Vars(\queryphi)$.
In \cite{Fink.2016}, Fink and Olteanu introduced a slightly different notion for a more general class of queries. Translated 
into the setting of CQs, their notion (only) requires that condition $(*)$ is satisfied by all \emph{quantified} variables, i.e., 
variables $x,y\in\Vars(\queryphi)\setminus\free(\queryphi)$.
Obviously, both notions coincide on \emph{Boolean} CQs, but on \emph{join queries} Koutris and Suciu's notion 
is more restrictive than Fink and Olteanu's notion (according to which \emph{all} quantifier-free CQs are hierarchical).
For example, the join query
\begin{equation}
    \label{eq:SxExyTy-join}
 \phiJTypical \quad \deff\quad 
 \bbodyjoin{\relS\varx \uund \relE\varx\vary  \uund \relT\vary}
\end{equation}
is hierarchical w.r.t.\ Fink and Olteanu's notion, and non-hierarchical w.r.t.\ Koutris and Suciu's notion.

In the context of answering queries under updates, our lower bound results show that
the join query $\phiJTypical$, as well as its Boolean version 
\begin{equation}
    \label{eq:SxExyTy}
 \phiBTypical \quad\deff\quad \exists\varx\eexists\vary
 \bbody{\relS\varx \uund \relE\varx\vary \uund \relT\vary}
\end{equation}
are intractable.
A further query that is hierarchical, but intractable in our setting is
\begin{equation}
  \label{eq:ExyTy}
\phiET \quad \deff \quad  \exists\vary \bbody{ \relE\varx\vary \uund
  \relT\vary}.
\end{equation}
To ensure tractability of a conjunctive query in our setting, we will require that its 
quantifier-free part is hierarchical in Koutris and Suciu's notion and, additionally,
the quantifiers respect the query's hierarchical form. We call such queries \emph{q-hierarchical}.

\begin{definition}\label{def:consquanthier}
  A CQ $\queryphi$ is \emph{\qhier} if for any two
  variables $\varx,\vary\in\Vars(\queryphi)$ the following is satisfied:
  \begin{enumerate}[(i)]
  \item\label{item:hier-cond}
  $\atoms(\varx)\subseteq\atoms(\vary)$ \ or \
  $\atoms(\varx)\supseteq\atoms(\vary)$ \ or \ 
  $\atoms(\varx)\cap\atoms(\vary)=\emptyset$\,,\ \  and
  \item\label{item:quant-cond}
  if $\atoms(\varx)\subsetneq\atoms(\vary)$ and 
  $\varx\in\free(\queryphi)$, then $\vary\in\free(\queryphi)$.
  \end{enumerate}
\end{definition}

Note that a \emph{Boolean} CQ is \qhier iff
it is hierarchical, and a join query is \qhier iff it is hierarchical w.r.t.\ Koutris and Suciu's notion.
The queries $\phiJTypical$ and $\phiET$ are minimal examples for queries that are not q-hierarchical because they do 
not satisfy condition~\eqref{item:hier-cond} and \eqref{item:quant-cond}, respectively.
Regarding the query $\phiET$, note
that all other versions such as the query
$\exists\varx \body{ \relE\varx\vary \, \und\, \relT\vary}$, 
the join query $\bodyjoin{ \relE\varx\vary \,\und\, \relT\vary}$,
and 
the Boolean query $\exists\varx\exists\vary \body{ \relE\varx\vary \,\und\,\relT\vary}$,
are \mbox{\qhier}.

It is not hard to see that we can decide in
polynomial time whether a given CQ is \qhier (see Lemma~\ref{lem:querytree}).
Our first main result 
shows that all \qhier CQs can be efficiently maintained under database updates:

\begin{theorem}\label{thm:upperbound}
  There is a dynamic algorithm that receives a \qhier conjunctive query $\queryphi$ and a $\schema$-db
  $\DBstart$, and computes within $\preprocessingtime =
  \poly(\queryphi){\cdot}\bigOh(\size{\DBstart})$ 
  preprocessing 
  time a data structure that can be updated in time
  $\updatetime = 
  \poly(\queryphi)$
  and allows to 
  \begin{enumerate}[(a)]
  \item enumerate $\eval{\queryphi}{\DB}$ with delay
    $\delaytime = 
     \poly(\queryphi)$, 
    \,and
  \item compute the cardinality $|\eval{\queryphi}{\DB}|$ in time
    $\countingtime = \bigoh(1)$,
  \end{enumerate}
  where $\DB$ is the current database.
\end{theorem}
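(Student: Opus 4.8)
The plan is to turn the structural description of \qhier queries into a bottom-up, Yannakakis-style data structure over the associated \Querytree\ and to maintain it under updates. By Lemma~\ref{lem:querytree}, condition~\eqref{item:hier-cond} makes $\set{\atoms(\varx):\varx\in\Vars(\queryphi)}$ a laminar family, so -- grouping variables with the same $\atoms$-set into one node and placing a virtual root above the connected components -- the variables form a forest $\querytree$ in which $\varx$ lies above $\vary$ whenever $\atoms(\varx)\supsetneq\atoms(\vary)$. Since the variables of a single atom are pairwise $\atoms$-comparable (any two of them share that atom, so condition~\eqref{item:hier-cond} applies), each atom $\qatom$ attaches to the node $v$ of its $\atoms$-deepest variable, and then $\Vars(\qatom)$ is exactly the set of variables on the root-to-$v$ path. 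Condition~\eqref{item:quant-cond} guarantees that the nodes carrying free variables form an ancestor-closed set, so within a component the free part is a connected prefix of the tree and no quantified variable sits below a free one.

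First I would set up the data structure. For each node $v$ (labelled by a variable $\varx_v$) and each valuation $\ov a$ of the variables strictly above $v$, store a doubly linked list $\llist{v}{\ov a}$ of those values $b$ for $\varx_v$ for which every atom attached to $v$ is satisfied by $(\ov a,b)$ and every child list $\llist{w}{\ov a,b}$ is nonempty; with the list item of $b$ store a count $\ccount(v,\ov a,b)$ that is $1$ if $\varx_v$ is quantified and otherwise is $\prod_{w}\Sigma(w,\ov a,b)$ over the \emph{free} children $w$ of $v$, where $\Sigma(w,\ov a,b)=\sum_{b'\in\llist{w}{\ov a,b}}\ccount(w,\ov a,b,b')$; thus quantified subtrees act only as a $0/1$ satisfiability gate for list membership and never as a factor, so $\ccount$ really counts distinct free valuations of the subtree below $v$. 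Also store, with the list item of $b$, a pointer into each child list $\llist{w}{\ov a,b}$ together with a back-pointer, and cache $\Sigma(v,\ov a)=\sum_{b\in\llist{v}{\ov a}}\ccount(v,\ov a,b)$. All lists are addressed through multidimensional arrays, so a given $\llist{v}{\ov a}$ is reached in $\bigoh(1)$. A telescoping argument bounds the total size by $\poly(\queryphi){\cdot}\bigoh(\size{\DBstart})$: a nonempty list either carries a database tuple witnessed by an attached atom, or is an intersection constraint dominated by the lists of one of its children. Hence \PREPROCESS\ builds everything in linear time -- one scan of $\DBstart$ to populate the lists, one bottom-up sweep to compute the counts -- and $\setsize{\eval{\queryphi}{\DB}}=\prod\Sigma(v,\emptyset)$, the product ranging over the component roots $v$, is stored at the virtual root, so $\COUNT$ reads it off in $\countingtime=\bigoh(1)$.

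For $\UPDATE$ on a command $\Update\ R(a_1,\dots,a_r)$, only the constantly many atoms over $R$ are affected, each attached to some node $v$; the tuple fixes the values of all variables on the root-to-$v$ path, so at most one list item is touched at each of the $\le\setsize{\Vars(\queryphi)}$ nodes of that path. Walking the path bottom-up, I would adjust the affected list (inserting or deleting $b$ depending on whether $\ccount(v,\ov a,b)$ becomes positive or drops to $0$, using the back-pointers so that deletion is $\bigoh(1)$), update the cached $\Sigma(v,\ov a)$, and propagate the resulting change one level up into the parent's count; to recompute a product over free children in constant time when a single factor changes, maintain with each list item the product of its nonzero factors together with the number of zero factors, so division by $0$ is avoided. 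Each node costs $\bigoh(1)$, giving $\updatetime=\poly(\queryphi)$. Finally, $\ENUMERATE$ first inspects the virtual-root count: if it is $0$ it emits \EOE, otherwise it runs a depth-first traversal that descends only into free children -- from the virtual root it walks $\llist{v}{\ov a}$, and for each value $b$ it jumps via the stored pointers into the free children's lists $\llist{w}{\ov a,b}$ and recurses, emitting the tuple of free values once all free variables are set. Because every list met on the way is nonempty and every count positive, no branch is a dead end, a full free valuation is produced after $\bigoh(\setsize{\Vars(\queryphi)})$ steps, the quantified subtrees are never entered (their satisfiability is already encoded in list membership), and distinctness is automatic since each free tuple is generated exactly once; hence $\delaytime=\poly(\queryphi)$.

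The main obstacle is the dynamic maintenance: keeping the linked-list-and-pointer structure globally consistent so that both insertions and -- crucially -- deletions run in constant time (which is what forces the doubly linked lists and the parent-child back-pointers), and recomputing the product-of-sums counts in constant time per node despite possible zero factors. Once that machinery is in place, the forest structure (via Lemma~\ref{lem:querytree}), the linear-time preprocessing, and the constant-delay depth-first enumeration are comparatively routine, the last being the by-now standard pointer-jumping enumeration argument.
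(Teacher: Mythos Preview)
Your approach is essentially the paper's: items indexed by a tree node and a valuation of its strict ancestors, doubly linked lists of ``fit'' items, product-of-child-sums weights propagated along the root-to-node path on each update, and enumeration by a depth-first walk over the free prefix of the tree. The variations you introduce (grouping variables with equal $\atoms$-set, the nonzero-product-plus-zero-counter trick, preprocessing by a single scan rather than by repeated updates) are implementation choices that do not change the argument.

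There is, however, one real gap in the update mechanism as you describe it. You define $\ccount(v,\ov a,b)$ to be $1$ when $\varx_v$ is quantified and $\prod_{w}\Sigma(w,\ov a,b)$ over the \emph{free} children $w$ otherwise, and then decide list membership by ``whether $\ccount(v,\ov a,b)$ becomes positive or drops to $0$.'' But your own membership criterion also demands that the atoms attached to $v$ be satisfied and that every \emph{quantified} child list be nonempty, and your $\ccount$ sees neither of these: a deletion that empties a quantified child list, or that destroys the last witness for an atom at $v$, would not change $\ccount(v,\ov a,b)$, so the unfit item would stay in the list and both enumeration and counting would be wrong. The paper handles this by maintaining \emph{two} weights per item: $C^i$ counts full extensions over \emph{all} variables below (free and quantified alike), obeys the recursion $C^i=\prod_{\psi\in\atm{v}}C^i_\psi\cdot\prod_{u\in N(v)}C^i_u$ (Lemma~\ref{lem:counting_composition}), and is used as the fit test ($i$ is in the list iff $C^i>0$); separately, $\widetilde C^i$ counts extensions restricted to $\free(\queryphi)$ via a product over free children only (Lemma~\ref{lem:counting_composition_free}) and yields $|\queryphi(\DB)|$ at the root. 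Both recursions can be maintained along the update path in $\poly(\queryphi)$ time, so with this separation your argument goes through. (Minor slip: ``no quantified variable sits below a free one'' is inverted---the free prefix being ancestor-closed means quantified variables sit \emph{below} free ones, never above.)
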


Note that 
this implies that \qhier \emph{Boolean} conjunctive queries can be
answered in constant time.
Our algorithm crucially relies on the tree-like structure of
hierarchical queries, which has already been used for efficient query
evaluation in \cite{DBLP:journals/vldb/DalviS07,Dalvi.2007,KoutrisSuciuPODS11,Fink.2016,DBLP:journals/tods/OlteanuZ15}.
In Section~\ref{sec:querytree} we present the notion of a \emph{\Querytree}
and show that it precisely characterises the \emph{\qhier} conjunctive queries.
These \Querytrees serve as a basis for the data structure
used in our dynamic algorithm for query answering.
Details on this algorithm along with a proof of
Theorem~\ref{thm:upperbound} can be found in 
Section~\ref{sec:upperbound}.
Let us mention that 
every \Querytree is an \emph{f-tree} in the sense of
\cite{DBLP:journals/tods/OlteanuZ15}, but there exist f-trees that are
no \Querytrees.
The 
dynamic data structure that is computed by our algorithm can be viewed
as an
\emph{f-representation} of the query result
\cite{DBLP:journals/tods/OlteanuZ15}, but not every f-representation
can be efficiently maintained under database updates.

\medskip

We now discuss our further main results, which show that the 
\emph{\qhier} property is necessary
for designing efficient dynamic algorithms, and that the results from
Theorem~\ref{thm:upperbound} cannot be extended to queries that are
not \qhier.
As discussed in the introduction, our lower bounds rely on the \OMvcon
and the \OVcon.
For more details on these conjectures, as well as proofs of our lower
bound theorems,
we refer the reader to
Section~\ref{sec:lowerbounds}. 
In the following, $\DBstart$ denotes the initial database that serves
as input for the $\PREPROCESS$ routine, and
$\actdomsize=\setsize{\Adom(\DB)}$ denotes the size  of the active
domain of a dynamically changing database $\DB$.
Our first lower
bound theorem states that non-\qhier \emph{self-join free} conjunctive queries
cannot be enumerated efficiently under updates.

\begin{theorem}\label{thm:enumerating_CQ_intro}
  Fix a number $\smalleps>0$ and a self-join free conjunctive query $\queryphi$.
  If
  $\queryphi$ is not \qhier, then there is no
  algorithm with arbitrary preprocessing time and
  $\bigoh(\actdomsize^{1-\smalleps})$ update time that enumerates 
  $\eval{\queryphi}{\DB}$ with
  $\bigoh(\actdomsize^{1-\smalleps})$ delay, unless the \OMvcon fails.
\end{theorem}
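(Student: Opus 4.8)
I would prove the lower bound by reducing the \OuMv problem (online vector-matrix-vector multiplication) to dynamic enumeration of $\queryphi$, using the by-now standard corollary of the \OMvcon due to Henzinger et al.~\cite{Henzinger.2015}: there is no algorithm that, given an $\dimn\times\dimn$ Boolean matrix $\matM$ and then a sequence of $\dimn$ pairs $(\vecu_1,\vecv_1),\dots,(\vecu_\dimn,\vecv_\dimn)$ of Boolean $\dimn$-vectors arriving one at a time, outputs $\vecu_t\trans\matM\vecv_t$ before receiving $(\vecu_{t+1},\vecv_{t+1})$, and runs in total time $\bigoh(\dimn^{3-\smalleps})$. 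Since $\queryphi$ is not \qhier, I fix two variables $\varx\neq\vary$ of $\queryphi$ witnessing this, i.e.\ violating condition~\eqref{item:hier-cond} or condition~\eqref{item:quant-cond} of Definition~\ref{def:consquanthier}, and treat the two cases separately; in each case I ``embed'' into $\queryphi$, exploiting self-join freeness, the respective minimal hard pattern ($\phiJTypical$, resp.\ $\phiET$).

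\textbf{The embedding when~\eqref{item:hier-cond} fails.} Here there are atoms $\sgpsi_1\in\atoms(\varx)\setminus\atoms(\vary)$, $\sgpsi_2\in\atoms(\varx)\cap\atoms(\vary)$ and $\sgpsi_3\in\atoms(\vary)\setminus\atoms(\varx)$, whose relation symbols are pairwise distinct because $\queryphi$ is self-join free. Over the domain $\set{\dummyelement}\cup\set{\verta_1,\dots,\verta_\dimn}\cup\set{\vertb_1,\dots,\vertb_\dimn}$ I would maintain a database in which the relation of $\sgpsi_2$ stores, for each $(\indi,\indj)$ with $\matM_{\indi\indj}=1$, the tuple carrying $\verta_\indi$ in every $\varx$-position of $\sgpsi_2$, $\vertb_\indj$ in every $\vary$-position and $\dummyelement$ elsewhere; the relation of $\sgpsi_1$ stores, for each $\indi$ with $\vecu_t[\indi]=1$, the tuple carrying $\verta_\indi$ in $\sgpsi_1$'s $\varx$-positions and $\dummyelement$ elsewhere, and symmetrically the relation of $\sgpsi_3$ encodes $\vecv_t$ through its $\vary$-positions; every other atom belonging to $\atoms(\varx)\cup\atoms(\vary)$ stores \emph{all} tuples of its shape (over all indices), and every remaining atom stores the single all-$\dummyelement$ tuple. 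A short argument shows that every valuation satisfying $\queryphi$ maps each variable other than $\varx,\vary$ to $\dummyelement$, so $\eval{\queryphi}{\DB}\neq\emptyset$ iff there are $\indi,\indj$ with $\vecu_t[\indi]=\matM_{\indi\indj}=\vecv_t[\indj]=1$, i.e.\ iff $\vecu_t\trans\matM\vecv_t=1$.

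\textbf{The embedding when~\eqref{item:hier-cond} holds but~\eqref{item:quant-cond} fails.} Here $\atoms(\varx)\subsetneq\atoms(\vary)$, $\varx\in\free(\queryphi)$ and $\vary\notin\free(\queryphi)$. I pick $\sgpsi_2\in\atoms(\varx)$ --- hence $\sgpsi_2\in\atoms(\vary)$ as well --- and $\sgpsi_3\in\atoms(\vary)\setminus\atoms(\varx)$, again with distinct relation symbols, and build the database exactly as before except that there is no $\sgpsi_1$ and the dynamic part consists only of $\sgpsi_3$ encoding $\vecv_t$. Since $\vary$ is quantified and every free variable other than $\varx$ is again forced to $\dummyelement$, the query result equals $\setc{(\verta_\indi,\dummyelement,\dots,\dummyelement)}{\indi\in\operatorname{supp}(\matM\vecv_t)}$, of size at most $\dimn$; enumerating it and testing whether some enumerated index $\indi$ satisfies $\vecu_t[\indi]=1$ gives $\vecu_t\trans\matM\vecv_t$. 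In either case the reduction then works as follows: to neutralise the arbitrary preprocessing I do \emph{not} feed $\matM$ (or the full auxiliary relations) to \PREPROCESS; instead I run \PREPROCESS\ on the empty $\schema$-db (a constant-size input, hence taking constant time) and install the static part of the database with $\bigoh(\dimn^2)$ \UPDATE\ operations. In round $t$ I replace the $\bigoh(\dimn)$ tuples encoding $\vecu_{t-1},\vecv_{t-1}$ by those for $\vecu_t,\vecv_t$ and then call \ENUMERATE: in the first case I only inspect the first output (a tuple vs.\ $\EOE$), i.e.\ run \ENUMERATE\ for time $\delaytime$; in the second case I enumerate all $\leq\dimn$ answers. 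As $\setsize{\Adom(\DB)}\leq 2\dimn+1$ throughout, every update and every delay is $\bigoh(\dimn^{1-\smalleps})$, so the static setup costs $\bigoh(\dimn^{3-\smalleps})$ and each of the $\dimn$ rounds costs $\bigoh(\dimn^{2-\smalleps})$, for a total of $\bigoh(\dimn^{3-\smalleps})$ --- contradicting the \OuMv lower bound, hence the \OMvcon.

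\textbf{Main obstacle.} The heart of the proof is the bookkeeping of the two embeddings. One has to verify, using self-join freeness, that the relations of the chosen ``core'' atoms $\sgpsi_1,\sgpsi_2,\sgpsi_3$ are pairwise distinct and can be populated independently, and, crucially, that filling every remaining relation (with the full set of tuples of its shape, or with the all-$\dummyelement$ tuple) simultaneously keeps those atoms always satisfied \emph{and} forces every non-core variable --- the extra free variables of $\queryphi$ as well as any auxiliary variables occurring inside $\sgpsi_1,\sgpsi_2,\sgpsi_3$ --- down to $\dummyelement$, so that $\eval{\queryphi}{\DB}$ really does mirror the \OuMv answer. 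The second, more conceptual point --- routing $\matM$ through \UPDATE\ rather than \PREPROCESS --- is what makes the lower bound robust against arbitrary preprocessing time.
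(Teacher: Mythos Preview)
Your proposal is correct and follows essentially the same approach as the paper: the same two-case split (violation of \eqref{item:hier-cond} versus \eqref{item:quant-cond}), the same idea of populating each relation via the canonical embeddings $\iota_{\indi,\indj}$ while routing $\matM$ through \UPDATE\ on the empty database, and the same ``enumerate once'' (case~\eqref{item:hier-cond}) versus ``enumerate all $\leq n$ tuples'' (case~\eqref{item:quant-cond}) dichotomy. The only cosmetic differences are that the paper uses distinct constants $\vertc_1,\ldots,\vertc_\ell$ for the non-core variables where you collapse to a single $\dummyelement$ (both are fine for self-join free $\queryphi$), that for case~\eqref{item:hier-cond} the paper factors the argument through Theorem~\ref{thm:modelchecking_joins_intro} rather than repeating the construction, and that for case~\eqref{item:quant-cond} the paper reduces directly from \OMv (reading off the full vector $\matM\vecv_t$ from the enumerated result) whereas you stay with \OuMv and post-filter against $\vecu_t$ --- both yield the same $\bigoh(\dimn^{3-\smalleps})$ total time.
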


For \emph{Boolean} CQs we obtain a lower bound for
\emph{all}  queries, i.e., 
also for queries that are not self-join free.
To state the result, we need the standard notion of a homomorphic core.
A \emph{homomorphism} from a conjunctive query 
$\queryphi(\varx_1,\ldots,\varx_\arityk)$ to a
conjunctive query $\queryphialt(\vary_1,\ldots,\vary_\arityk)$ is a
mapping 
$h$ from $\Vars(\queryphi)$ to $\Vars(\queryphialt)$ 
such that
$\homh(\varx_i) = \vary_i$ for all $i\in\set{1,\ldots,k}$, and if
$\relR \varu_1\cdots \varu_\arityr$ is an atom of $\queryphi$, then
$\relR\, \homh(\varu_1) \cdots \homh(\varu_\arityr)$ is an atom of
$\queryphialt$.
The \emph{homomorphic core} (for short, \emph{core}) of a conjunctive query $\queryphi$ is a minimal
subquery $\queryphialt$ of $\queryphi$ such that there is a homomorphism from
$\queryphi$ to $\queryphialt$, but no homomorphism from $\queryphialt$
to a proper subquery of $\queryphialt$.
By Chandra and Merlin's homomorphism theorem, every CQ $\phi$ has a unique (up to isomorphism)
core $\phi'$, and $\phi'(\DB)=\phi(D)$ for all databases $\DB$ (cf.,
e.\,g., \cite{AHV-Book}). 
While self-join free queries are their own cores, the situation is different
for general CQs.
Consider, for example, the queries
\begin{align*}
\queryphi \ &\deff \quad \exists\varx\eexists\vary \bbody{ \relE\varx\varx \uund
            \relE\varx\vary \uund \relE\vary\vary}\quad \text{ and } \\ 
\queryphialt \ &\deff \quad \exists\varx \bbody{ \relE\varx\varx }.
\end{align*}
Here, $\queryphialt$ is a core of $\queryphi$ and thus
$\eval{\queryphi}{\DB}=\eval{\queryphialt}{\DB}$ for every database $\DB$.
However, $\queryphialt$ is \qhier, whereas $\queryphi$ is not.
The next lower bound theorem states that the result of a Boolean
conjunctive query cannot
be maintained efficiently if the query's core is not \qhier.

\begin{theorem}\label{thm:modelchecking_joins_intro}
  Fix a number $\smalleps>0$ and a Boolean conjunctive query
  $\queryphi$.  If the homomorphic core of $\queryphi$ is not \qhier,
 then there is no
  algorithm with arbitrary preprocessing time and
  $\updatetime=\bigoh(\actdomsize^{1-\smalleps})$
  update time that
  answers 
  $\eval{\queryphi}{\DB}$ in time
$\answertime=\bigoh(\actdomsize^{2-\smalleps})$, unless the \OMvcon
  fails.
\end{theorem}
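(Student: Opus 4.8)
\emph{Reduction to the core.} The plan is to reduce the \OuMv{} problem to answering $\queryphi$ under updates, following the intuition that a non-hierarchical query ``contains'' the canonically hard query $\phiBTypical$. Let $\queryphialt$ be the homomorphic core of $\queryphi$. Since $\queryphialt(\DB)=\queryphi(\DB)$ for every database $\DB$ (Chandra--Merlin), any dynamic algorithm answering $\queryphi$ is, with the very same preprocessing, update, and answer time, also a dynamic algorithm answering $\queryphialt$; hence it suffices to prove the lower bound for $\queryphialt$, i.e.\ we may assume that $\queryphi$ is its own core. Being Boolean and not \qhier, $\queryphi$ is then not hierarchical, so we can fix variables $\varx,\vary\in\Vars(\queryphi)$ and atoms $\sgpsi_1\in\atoms(\varx)\setminus\atoms(\vary)$, $\sgpsi_2\in\atoms(\varx)\cap\atoms(\vary)$, and $\sgpsi_3\in\atoms(\vary)\setminus\atoms(\varx)$, which will play the roles of $\relS$, $\relE$, and $\relT$ in $\phiBTypical$.

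\emph{The reduction.} Recall the \OuMv{} problem: preprocess an $n\times n$ Boolean matrix $M$, then process online vector pairs $(\vecu_1,\vecv_1),\dots,(\vecu_n,\vecv_n)\in\set{0,1}^n\times\set{0,1}^n$, outputting in round $t$ whether there are $i,j\in[n]$ with $\vecu_t[i]=M_{ij}=\vecv_t[j]=1$; by \cite{Henzinger.2015}, unless the \OMvcon fails this cannot be done with polynomial preprocessing and $\bigOh(n^{2-\smalleps})$ time per round. Given such an instance I work over the domain $A\dcup B\dcup\set{\dummyelement}$ with $A=\set{a_1,\dots,a_n}$, $B=\set{b_1,\dots,b_n}$, and $\dummyelement$ a fresh dummy, and for $(i,j)\in[n]^2$ I let $\beta_{i,j}$ be the valuation with $\beta_{i,j}(\varx)=a_i$, $\beta_{i,j}(\vary)=b_j$, and $\beta_{i,j}(\varz)=\dummyelement$ for every other variable $\varz$ of $\queryphi$. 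The round-$t$ database $\DB_t$ contains, for every atom $\sgpsi$ of $\queryphi$, the tuples $\beta_{i,j}(\sgpsi)$ for exactly those $(i,j)$ that are meant to satisfy $\sgpsi$: all $(i,j)$ with $M_{ij}=1$ if $\set{\varx,\vary}\subseteq\Vars(\sgpsi)$ (in particular for $\sgpsi_2$); all $(i,j)$ with $\vecu_t[i]=1$ if $\varx\in\Vars(\sgpsi)$ and $\vary\notin\Vars(\sgpsi)$ (in particular for $\sgpsi_1$; here $j$ is irrelevant); all $(i,j)$ with $\vecv_t[j]=1$ if $\vary\in\Vars(\sgpsi)$ and $\varx\notin\Vars(\sgpsi)$ (in particular for $\sgpsi_3$); and the single all-$\dummyelement$ tuple otherwise. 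As $M$ is fixed and $\queryphi$ has only constantly many atoms, passing from $\DB_t$ to $\DB_{t+1}$ --- i.e.\ re-encoding $\vecu_t\mapsto\vecu_{t+1}$ and $\vecv_t\mapsto\vecv_{t+1}$ --- costs $\bigOh(n)$ \Insert/\Delete{} commands. So the \OuMv{} algorithm builds $\DB_1$ together with the $M$-part of the database ($\bigOh(n^2)$ tuples), runs $\PREPROCESS$, and then in round $t$ performs the $\bigOh(n)$ updates yielding $\DB_t$, calls $\ANSWER$, and reports ``yes'' iff the answer is $\Yes$.

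\emph{Correctness and accounting.} The key claim is that $\queryphi(\DB_t)=\Yes$ iff there are $i,j$ with $\vecu_t[i]=M_{ij}=\vecv_t[j]=1$. The direction ``$\Leftarrow$'' is immediate: for such $i,j$ the valuation $\beta_{i,j}$ satisfies every atom of $\queryphi$ on $\DB_t$ by construction. For ``$\Rightarrow$'', note that every tuple ever occurring in $\DB_t$ is \emph{coordinated} --- it carries an element of $A$ at the $\varx$-positions of its atom, an element of $B$ at the $\vary$-positions, and $\dummyelement$ at every other position; using this together with the pairwise disjointness of $A$, $B$, $\set{\dummyelement}$, one chases the membership constraints of a satisfying valuation $\beta$ of $\queryphi$ on $\DB_t$ through $\sgpsi_2$ and then through $\sgpsi_1$ and $\sgpsi_3$ --- exactly as one would for $\phiBTypical$, only now through atoms that may share a relation symbol --- and extracts $i,j$ with $M_{ij}=\vecu_t[i]=\vecv_t[j]=1$. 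Granting the claim, an algorithm as in the theorem, with $\updatetime=\bigOh(n^{1-\smalleps})$ and $\answertime=\bigOh(n^{2-\smalleps})$, solves each \OuMv{} round in time $\bigOh(n)\cdot\bigOh(n^{1-\smalleps})+\bigOh(n^{2-\smalleps})=\bigOh(n^{2-\smalleps})$, contradicting the \OMvcon; and since $n\le\setsize{\Adom(\DB_t)}\le 2n+1$, the bounds carry over to the active-domain size $\actdomsize$ up to constant factors.

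\emph{Main obstacle.} The subtle part is ``$\Rightarrow$'' when $\sgpsi_1,\sgpsi_2,\sgpsi_3$ do not all carry distinct relation symbols --- which genuinely occurs for cores; a minimal example is the transitive triangle $\exists\varx\exists\vary\exists\varz\body{\relE\varx\varz\und\relE\varx\vary\und\relE\varz\vary}$, all of whose non-hierarchicity witnesses use the symbol $\relE$ three times. One must then argue, by a case analysis over which atom contributed the database tuple matched by each of $\beta(\sgpsi_1),\beta(\sgpsi_2),\beta(\sgpsi_3)$, that the disjointness of $A$, $B$, $\set{\dummyelement}$ and the coordinated shape of all tuples admit no satisfying valuation except one that exhibits a genuine witness $i,j$; it is exactly here that $\queryphi$ being a \emph{core} enters, to exclude degenerate valuations identifying variables (e.g.\ $\beta(\varx)=\beta(\vary)$), which for non-cores such as $\exists\varx\exists\vary\body{\relE\varx\varx\und\relE\varx\vary\und\relE\vary\vary}$ would create spurious yes-instances.
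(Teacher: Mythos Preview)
Your overall strategy is the paper's: reduce from \OuMv, pass to the core, pick witnesses $\varx,\vary,\sgpsi_1,\sgpsi_2,\sgpsi_3$ of non-hierarchicity, and encode $M$, $\vecu$, $\vecv$ into the relations of the ``$xy$''-, ``$x$-only''- and ``$y$-only''-atoms. The accounting is fine. The gap is in the ``$\Rightarrow$'' direction of your key claim, and it stems from a single design choice: you send \emph{all} remaining variables $\varz_1,\ldots,\varz_\ell$ to one dummy constant $\dummyelement$.

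The paper instead uses pairwise distinct constants $c_1,\ldots,c_\ell$, one per $z_s$. This is exactly what makes the core property bite. With distinct $c_s$ there is a homomorphism $g\colon\DB\to\queryphi$ given by $a_i\mapsto\varx$, $b_j\mapsto\vary$, $c_s\mapsto\varz_s$; hence for any satisfying valuation $h\colon\queryphi\to\DB$ the composite $g\circ h$ is an endomorphism of the core $\queryphi$ and therefore an automorphism. Then $h\circ(g\circ h)^{-1}$ is literally one of the canonical maps $\iota_{i,j}$, and by construction $\iota_{i,j}$ is a homomorphism only when $\vecu_i=M_{ij}=\vecv_j=1$. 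That is the whole ``$\Rightarrow$'' argument (the paper's Claims~5.4 and~5.5).

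With a single $\dummyelement$ there is no such $g$ --- there is nowhere to send $\dummyelement$ --- so this route is closed. Your sketch (``case analysis over which atom contributed each of $\beta(\sgpsi_1),\beta(\sgpsi_2),\beta(\sgpsi_3)$; the core property excludes degenerate valuations identifying variables'') does not explain how the core property is actually invoked. Being a core rules out endomorphisms of $\queryphi$; it does not directly rule out a valuation $\beta$ into $\DB_t$ with, say, $\beta(\varx)=\dummyelement$. Because all $z_s$ share the value $\dummyelement$, such a $\beta$ need not correspond to any endomorphism of $\queryphi$ at all, and when $\sgpsi_1,\sgpsi_2,\sgpsi_3$ reuse relation symbols the tuples contributed by one atom can match another in ways that your three-atom case analysis does not cover. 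I do not see how to close this in general, and you have not indicated a mechanism. The fix is one line: give each $z_s$ its own constant.

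A secondary difference --- you restrict \emph{every} atom by its $\{\varx,\vary\}$-content, whereas the paper restricts only the three distinguished atoms and fills all others unconditionally --- is harmless for ``$\Leftarrow$'' and, once the $c_s$ are distinct, also for ``$\Rightarrow$''. But note that the paper's choice is what makes ``$\iota_{i,j}$ is a homomorphism $\Leftrightarrow$ $\vecu_i=M_{ij}=\vecv_j=1$'' immediate, since with distinct $c_s$ the tuple $\iota_{i,j}(\sgpsi)$ can only have been inserted on behalf of $\sgpsi$ itself.
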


Let us now turn to the problem of 
computing the cardinality $\setsize{\eval{\queryphi}{\DB}}$ of the
result of a 
query $\queryphi(x_1,\ldots,x_k)$.
From the Theorems~\ref{thm:upperbound} and
\ref{thm:modelchecking_joins_intro} we know that 
we can efficiently decide whether $\setsize{\eval{\queryphi}{\DB}}>0$
if, and only if, the homomorphic core of
$\exists x_1 \cdots \exists x_\arityk\,\queryphi$ is \qhier.
The complexity of actually counting the number of tuples in
$\queryphi(\DB)$, however, depends on whether
the core of the query $\queryphi(\varx_1,\ldots,\varx_\arityk)$ itself (rather
than the core of its Boolean version $\exists x_1 \cdots \exists
x_\arityk\,\queryphi$) is \qhier. %
As in the Boolean case, the next theorem (together with Theorem~\ref{thm:upperbound}) implies a dichotomy for
all conjunctive queries. One difference is that we have to additionally rely on the \OVcon.

\begin{theorem}\label{thm:counting_CQ_lowerbound}
Fix a number $\smalleps>0$ and a conjunctive query $\queryphi$.
If the homomorphic core of $\queryphi$ is not \qhier, then there is no algorithm with
arbitrary
preprocessing time and
$\updatetime=\bigoh(\actdomsize^{1-\smalleps})$ update time that computes
$\setsize{\eval{\queryphi}{\DB}}$ 
in time $\countingtime=\bigoh(\actdomsize^{1-\smalleps})$,
assuming the \OMvcon and the \OVcon.
\end{theorem}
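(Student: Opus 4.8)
Since the homomorphic core $\queryphicore$ satisfies $\eval{\queryphicore}{\DB}=\eval{\queryphi}{\DB}$ on every $\schema$-db $\DB$, it suffices to prove the statement with $\queryphi$ replaced by $\queryphicore$, so from now on I assume $\queryphi$ is a core that is not \qhier. Write $\queryphi_{\exists}$ for the Boolean query $\exists x_1\cdots\exists x_k\,\queryphi$. The preliminary remark I would use is that a dynamic algorithm maintaining $\setsize{\eval{\queryphi}{\DB}}$ with $\updatetime=\bigoh(\actdomsize^{1-\smalleps})$ and $\countingtime=\bigoh(\actdomsize^{1-\smalleps})$ is, in particular, a dynamic algorithm answering $\queryphi_{\exists}$ with $\bigoh(\actdomsize^{1-\smalleps})$ update time and $\bigoh(\actdomsize^{1-\smalleps})\subseteq\bigoh(\actdomsize^{2-\smalleps})$ answer time, because $\eval{\queryphi_{\exists}}{\DB}=\Yes$ iff $\setsize{\eval{\queryphi}{\DB}}>0$ and both problems run on the same databases over the same schema. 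I would then split according to whether the homomorphic core of $\queryphi_{\exists}$ is \qhier.

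\textbf{Case 1: the core of $\queryphi_{\exists}$ is not \qhier.} Here I would simply invoke Theorem~\ref{thm:modelchecking_joins_intro} for the Boolean query $\queryphi_{\exists}$: unless the \OMvcon fails, no algorithm with arbitrary preprocessing time answers $\queryphi_{\exists}$ with $\bigoh(\actdomsize^{1-\smalleps})$ update time and $\bigoh(\actdomsize^{2-\smalleps})$ answer time, which by the remark above contradicts the assumed counting algorithm. This case uses only the \OMvcon.

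\textbf{Case 2: the core of $\queryphi_{\exists}$ is \qhier but $\queryphi$ is not.} Now $\setsize{\eval{\queryphi}{\DB}}>0$ \emph{can} in fact be maintained efficiently (apply Theorem~\ref{thm:upperbound} to $\queryphi_{\exists}$), so all the hardness sits in the counting, and the \OVcon must enter. My plan here has two parts. First, a \emph{structural lemma}: from the assumptions ``$\queryphi$ is a core, $\queryphi$ is not \qhier, $\queryphi_{\exists}$ is \qhier'' I would extract a small obstruction inside $\queryphi$ — roughly, a free variable $x$ and a quantified variable $y$ that share an atom $\sgpsi_0$, together with a further atom $\sgpsi_1$ containing $y$ but not $x$ (such a configuration alone already forces $\queryphi$ to violate condition~(i) or~(ii) of Definition~\ref{def:consquanthier}, and it is precisely the hypothesis on $\queryphi_{\exists}$ that guarantees such a pair survives in the core, possibly only after several free variables of $\queryphi$ have been collapsed onto $x$). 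Second, a reduction from the (counting version of the) Orthogonal Vectors problem: I would start the dynamic algorithm on the \emph{empty} initial database $\DBstart$ — so that the arbitrary preprocessing time depends only on $\queryphi$ and is harmless — then, using $\bigoh(Nd)$ insertions, plant an OV instance of vector dimension $d=\Theta(\log^{2}N)$ by encoding one side of it on the atoms touching $x$ (linking the value $i$ of $x$ to the support of $u_i$, read off as values of $y$) and the other side on $\sgpsi_1$ (the support of each $v_j$), while filling every remaining relation of $\queryphi$ with the full relation over a constant-size auxiliary domain so it imposes no constraint. Using that $\queryphi$ is a core to pin down the homomorphisms, $\setsize{\eval{\queryphi}{\DB}}$ then equals, up to an explicitly computable affine correction, the number of orthogonal pairs — recovered either by a single $\COUNT$ (when outer variables sit on both sides) or by streaming the $v_j$'s through $\bigoh(Nd)$ further updates interleaved with $N$ counts and summing. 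Since $\actdomsize=\Theta(N)$ throughout, the total running time is $\bigoh(Nd)\cdot\bigoh(\actdomsize^{1-\smalleps})+\bigoh(N)\cdot\bigoh(\actdomsize^{1-\smalleps})=N^{2-\smalleps+o(1)}$, which would decide Orthogonal Vectors in truly subquadratic time, contradicting the \OVcon.

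\textbf{Where the work is.} The two-line argument of Case~1 and the arithmetic bookkeeping of the reduction are routine. The part I expect to be genuinely hard is the structural lemma of Case~2 — proving that ``core, not \qhier, but \qhier after existential closure'' always exposes the configuration $(x,y,\sgpsi_0,\sgpsi_1)$ rather than an obstruction living only among quantified variables (which would have put us in Case~1) — together with checking \emph{uniformly over all such} $\queryphi$ that the atoms of $\queryphi$ outside this configuration provably do not perturb the count once the database is planted as above (for instance, queries such as $\exists y\,(\relR\varx_1\vary\uund\relR\varx_2\vary\uund\relR\varx_3\vary\uund\relS\varx_1\uund\relS\varx_2\uund\relS\varx_3)$ need the planting, the auxiliary ``full'' relations and the arity of the OV instance to be chosen with care). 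This is exactly the place where coreness of $\queryphi$ is used.
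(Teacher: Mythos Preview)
Your Case~1 is fine and in fact a little cleaner than what the paper does at the corresponding point. The gap is in Case~2, and it shows up already on the very example the paper singles out: $\queryphi(\varx,\vary)=\bodyjoin{\relE\varx\varx\und \relE\varx\vary\und \relE\vary\vary}$. This query is a core and fails condition~(i) of Definition~\ref{def:consquanthier}, while the core of $\queryphi_{\exists}$ is $\exists\varx\,\relE\varx\varx$, which \emph{is} \qhier --- so the query lands squarely in your Case~2. But it has no quantified variable whatsoever, so your structural lemma (``free $x$, quantified $y$, atoms $\sgpsi_0,\sgpsi_1$'') is simply false here; ``collapsing free variables onto $x$'' does not help, since collapsing free variables never manufactures a quantified one. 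And your \OV reduction genuinely needs $y$ to be projected out: with $y$ free, the planted count recovers $\sum_i (\vecu^{\,i})^{\!\top}\vecv$ rather than $\setsize{\{i:(\vecu^{\,i})^{\!\top}\vecv\neq 0\}}$, and the former does not decide orthogonality. A separate issue is that ``filling every remaining relation with the full relation'' is meaningless under self-joins, since the ``remaining'' atoms share their relation symbol with $\sgpsi_0,\sgpsi_1$.

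The paper avoids all this by splitting differently --- on whether $\queryphi$ fails condition~(i) or only condition~(ii) --- and by using a tool your plan is missing, Lemma~\ref{lem:simulating_unary_relations}. When (i) fails (as in the example above), the reduction is from \OuMv, not \OV; when only~(ii) fails, the reduction is from \OV. In \emph{both} subcases the raw count $\setsize{\queryphi(\DB)}$ on the planted database is polluted by homomorphisms that do not send each free variable into its intended block, and Lemma~\ref{lem:simulating_unary_relations} is what lets one recover the restricted count $\setsize{\queryphi(\DB)\cap(X_{x_1}\times\cdots\times X_{x_k})}$ from $2^{O(k)}$ calls to the oracle on blown-up auxiliary databases, via inclusion--exclusion and a Vandermonde system. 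Your ``explicitly computable affine correction'' happens to be enough for the two-variable example, but it is not a uniform argument; the inclusion--exclusion of Lemma~\ref{lem:simulating_unary_relations} (together with the coreness of $\queryphi$, used exactly as you anticipated) is the actual mechanism, and it is also what makes the paper's \OV reduction go through for (ii)-failures with self-joins.
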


Combining Theorem~\ref{thm:upperbound} with the
Theorems~\ref{thm:enumerating_CQ_intro},
\ref{thm:modelchecking_joins_intro}, and
\ref{thm:counting_CQ_lowerbound} immediately leads to the 
dichotomies (Theorems~\ref{cor:dichotomyEnum},
\ref{cor:dichotomyBool}, and \ref{cor:dichotomyCount}) stated in
the introduction.

\makeatletter{}%

\ifthenelse{\isundefined{\ARTICLEFORMAT}}{
\section{The tree-like structure of  \\ \mbox{q}-hierarchical queries}
}{
\section{The tree-like structure of  \mbox{q}-hierarchical queries}
}

\label{sec:querytree}

We now give an alternative characterisation of \qhier queries that
sheds more light on their ``tree-like'' structure and will be useful for designing
efficient query evaluation algorithms.
We say that a CQ $\queryphi$ is \emph{connected} if for any two
variables $\varx, \vary \in \Vars(\queryphi)$ there is a path $\varx =
\varz_0, \ldots, \varz_\ell = \vary$ such that for each $j<\ell$ there
is an atom 
$\sgpsi$ of $\queryphi$ such that
$\set{\varz_j,\varz_{j+1}}\subseteq \Vars(\sgpsi)$.
Note that every conjunctive query can be written as a
conjunction $\bigwedge_i \queryphi_i$ of
connected conjunctive queries $\queryphi_i$ over pairwise disjoint variable sets.
We call these $\queryphi_i$ the \emph{connected components} of the query.
Note that a query is \qhier if, and only if, all its connected
components are \qhier.
Next, we define the notion of a \emph{\Querytree} for a connected
query $\phi$ and show that $\phi$ is \qhier iff it has a \Querytree.

\begin{figure}
  \centering
   \begin{tikzpicture}[scale=0.95]
     \node(x1) at (0,3) {$\varx_1$}; \node(x2) at (0,2) {$\varx_2$};
     \node(x3) at (-0.5,1) {$\varx_3$};
     \node[ %
     fill=black!20] (x4) at (0.5,1) {$\varx_4$};
     \node[%
     fill=black!20] (x5) at (-0.5,0) {$\varx_5$} ; \draw[->] (x1) --
     (x2); \draw[->] (x2) -- (x3); \draw[->] (x2) -- (x4); \draw[->]
     (x3) -- (x5);
     \begin{scope}[xshift=3cm]     
  \node(x1) at (0,3) {$x_2$};
  \node(x2) at (0,2) {$x_1$};
  \node(x3) at (-0.5,1) {$x_3$};
  \node[fill=black!20](x4) at (0.5,1) {$x_4$};
  \node[fill=black!20](x5) at (-0.5,0) {$x_5$};
  \draw[->] (x1) -- (x2);
  \draw[->] (x2) -- (x3);
  \draw[->] (x2) -- (x4);
  \draw[->] (x3) -- (x5);
    \end{scope}
 \end{tikzpicture}
 \caption{Two \Querytrees for
   $\varphi(\varx_1,\varx_2,\varx_3) = \exists \varx_4\exists \varx_5
   \body{
    \relE\varx_1\varx_2
    \und \relR\varx_4\varx_1\varx_2\varx_1
    \und \relR\varx_5\varx_3\varx_2\varx_1}$}
  \label{fig:querytree}
\end{figure}

\begin{definition}%
Let $\queryphi$ be a connected CQ.
A \emph{\Querytree} for $\queryphi$ 
is a rooted directed tree $\querytree=(\setV,\relE)$ with
$\setV=\Vars(\queryphi)$ where
\begin{enumerate}[(1)]
\item for all atoms $\sgpsi$ in
     $\queryphi$ the set $\Vars(\sgpsi)$ forms a directed path in $\querytree$ that
     starts from the root, and
\item if $\free(\queryphi)\neq\emptyset$, then $\free(\queryphi)$ is
     a connected subset in $\querytree$ that contains the root. 
\end{enumerate}
\end{definition}
\noindent
See Figure~\ref{fig:querytree} for examples of \Querytrees.
The following lemma gives a characterisation of the \emph{\qhier}
conjunctive queries via \Querytrees.

\begin{lemma}\label{lem:querytree}
  A CQ $\queryphi$ is \qhier if, and only if, every
  connected component of $\queryphi$
   has a \Querytree.
  Moreover, there is a polynomial time algorithm which decides whether
  an input CQ $\queryphi$ is \qhier, and if so, outputs a \Querytree for each
  connected component of $\queryphi$.
\end{lemma}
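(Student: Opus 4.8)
The plan is to prove the two directions of the biconditional separately, and then extract the polynomial-time algorithm from the constructive argument. First I would reduce to the connected case: since a CQ is \qhier iff all its connected components are (as already observed), and the claim only concerns connected components, it suffices to show that a \emph{connected} CQ $\queryphi$ is \qhier iff it has a \Querytree.

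For the direction ``\Querytree $\Rightarrow$ \qhier'': suppose $\querytree=(\setV,\relE)$ is a \Querytree for $\queryphi$. Take any two variables $\varx,\vary$. By condition~(1), for every atom $\sgpsi$ the variable set $\Vars(\sgpsi)$ is a root-initial directed path, hence linearly ordered by the ancestor relation of $\querytree$. If neither $\varx$ nor $\vary$ is an ancestor of the other, then no atom can contain both, so $\atoms(\varx)\cap\atoms(\vary)=\emptyset$. If, say, $\varx$ is a (proper or not) ancestor of $\vary$, then any atom whose path contains $\vary$ must also contain $\varx$ (the path is root-initial), giving $\atoms(\vary)\subseteq\atoms(\varx)$; this establishes condition~\eqref{item:hier-cond}. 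For condition~\eqref{item:quant-cond}, suppose $\atoms(\varx)\subsetneq\atoms(\vary)$; then by the previous analysis $\vary$ is a proper ancestor of $\varx$. If $\varx\in\free(\queryphi)$, then since $\free(\queryphi)$ is a connected subset containing the root (condition~(2)) and $\vary$ lies on the root-to-$\varx$ path, we get $\vary\in\free(\queryphi)$, as required.

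For the converse ``\qhier $\Rightarrow$ \Querytree'': here I would construct the tree explicitly. Observe that the relation ``$\atoms(\varx)\supseteq\atoms(\vary)$'' is a preorder on $\Vars(\queryphi)$, and by condition~\eqref{item:hier-cond} any two variables with overlapping atom sets are comparable; connectedness of $\queryphi$ should force this into a shape where we can define, for each non-root variable $\vary$, a parent as a variable $\varx$ with $\atoms(\varx)\supsetneq\atoms(\vary)$ that is $\subseteq$-minimal among such (breaking ties using a fixed atom and the order of variables inside it). The tree $\querytree$ has edge set given by these parent pointers, rooted at a variable with maximal $\atoms(\cdot)$. I would then verify condition~(1): an atom $\sgpsi$'s variables are, by \eqref{item:hier-cond}, linearly ordered by $\supseteq$ on their atom sets (all contain $\sgpsi$), and one checks that consecutive ones in this order are parent/child, so $\Vars(\sgpsi)$ is a root-initial path. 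For condition~(2), if $\varx\in\free(\queryphi)$ and $\vary$ is an ancestor of $\varx$, then $\atoms(\vary)\supsetneq\atoms(\varx)$, hence $\atoms(\varx)\subsetneq\atoms(\vary)$, so by condition~\eqref{item:quant-cond} $\vary\in\free(\queryphi)$; thus $\free(\queryphi)$ is upward-closed in $\querytree$, hence connected and containing the root.

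The main obstacle I anticipate is the construction of a well-defined parent function in the converse direction: one must argue that $\supseteq$-minimal proper supersets of $\atoms(\vary)$ exist, that the tie-breaking is consistent across different atoms (so that the relative order of two variables coming from two different shared atoms agrees), and that connectedness of $\queryphi$ guarantees the resulting parent graph is a single tree rather than a forest — in particular that there is a unique $\supseteq$-maximal variable serving as the root. This is essentially a careful bookkeeping argument using \eqref{item:hier-cond} to get comparability and connectedness to get uniqueness. Finally, the algorithmic statement falls out for free: computing all sets $\atoms(\varx)$, checking conditions~\eqref{item:hier-cond} and~\eqref{item:quant-cond} for all variable pairs, and building the parent pointers are all polynomial in $\size{\queryphi}$, and one runs this on each connected component.
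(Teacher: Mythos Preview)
Your ``if'' direction is correct and matches the paper. For the converse, your direct parent-pointer construction is a reasonable alternative to the paper's inductive approach, which repeatedly applies Claim~\ref{claim:universal_variable_querystyle}: a connected \qhier CQ has a variable contained in every atom (and a free such variable if $\free(\queryphi)\neq\emptyset$), which becomes the root; one then removes it and recurses on the connected components of the residual query.

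There is, however, a genuine gap in your handling of variables $\varx,\vary$ with $\atoms(\varx)=\atoms(\vary)$. First, your parent rule ``$\atoms(\varx)\supsetneq\atoms(\vary)$, $\subseteq$-minimal'' says nothing about how such variables are linked to one another---only the top element of each equivalence class receives a parent by that rule. More importantly, your verification of condition~(2) asserts that an ancestor $\vary$ of $\varx$ satisfies $\atoms(\vary)\supsetneq\atoms(\varx)$, which is false within an equivalence class. Concretely, take $\queryphi(\varx)=\exists\vary\,\relR\vary\varx$: here $\atoms(\varx)=\atoms(\vary)$, and your tie-breaking (``order of variables in a fixed atom'') makes $\vary$ the root, so $\free(\queryphi)=\set{\varx}$ fails to contain the root and your output is not a \Querytree. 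Condition~\eqref{item:quant-cond} does not rescue you since it requires \emph{strict} containment. The fix is easy---linearise each equivalence class putting free variables before quantified ones---and this is precisely what the paper secures by insisting in Claim~\ref{claim:universal_variable_querystyle} that the chosen root be free whenever possible. Separately, the ``unique $\supseteq$-maximal class'' you flag as bookkeeping is actually the substantive part of the argument: it is equivalent to the statement of Claim~\ref{claim:universal_variable_querystyle}, whose proof in the paper takes some care.
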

To prove the lemma we inductively apply the following claim.
\begin{claim}
  \label{claim:universal_variable_querystyle}
    For every
    connected \qhier CQ $\queryphi$ there is a variable $\varx\in\Vars(\queryphi)$
    that is contained in every atom of $\queryphi$.  Moreover, if
    $\free(\queryphi)\neq\emptyset$, then $\varx\in\free(\queryphi)$.
  \end{claim}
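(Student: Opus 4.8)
The plan is to prove Claim~\ref{claim:universal_variable_querystyle} directly from the definition of \qhier, and then derive Lemma~\ref{lem:querytree} by induction using the claim.

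\textbf{Proof of the claim.} Let $\queryphi$ be a connected \qhier CQ. For a variable $\varx$, recall that $\atoms(\varx)$ is the set of atoms containing $\varx$; since every variable of a connected query occurs in some atom, $\atoms(\varx)\neq\emptyset$. Consider the family $\{\atoms(\varx): \varx\in\Vars(\queryphi)\}$. By condition~\eqref{item:hier-cond} of Definition~\ref{def:consquanthier}, any two members of this family are either comparable under $\subseteq$ or disjoint. I claim this family is totally ordered by $\subseteq$, hence has a maximum. To see there are no two disjoint members: pick $\varx$ with $\atoms(\varx)$ maximal among all variables and suppose $\atoms(\vary)\cap\atoms(\varx)=\emptyset$ for some $\vary$. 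By connectedness there is a path $\varx=\varz_0,\dots,\varz_\ell=\vary$ where each consecutive pair shares an atom; walking along this path, there must be some $j$ with $\atoms(\varz_j)$ meeting $\atoms(\varx)$ and $\atoms(\varz_{j+1})$ disjoint from $\atoms(\varx)$ (or vice versa), yet $\atoms(\varz_j)\cap\atoms(\varz_{j+1})\neq\emptyset$. Then by condition~\eqref{item:hier-cond} applied to $\varz_j,\varz_{j+1}$ these two sets are comparable, and by condition~\eqref{item:hier-cond} applied to $\varz_j,\varx$ and to $\varz_{j+1},\varx$ one obtains a contradiction to maximality of $\atoms(\varx)$ (whichever of $\varz_j,\varz_{j+1}$ has a set containing $\atoms(\varx)$ strictly, since $\vary$ witnesses that $\varz_{j+1}$'s set is not contained in $\atoms(\varx)$). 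Hence the family is a chain, and its maximum is $\atoms(\varx^\ast)$ for some $\varx^\ast$. Every atom $\sgpsi$ of $\queryphi$ contains some variable $\varw$, so $\sgpsi\in\atoms(\varw)\subseteq\atoms(\varx^\ast)$, i.e.\ $\varx^\ast$ occurs in $\sgpsi$. Thus $\varx^\ast$ is contained in every atom. For the ``moreover'' part: if $\free(\queryphi)\neq\emptyset$, pick any $\varu\in\free(\queryphi)$. Since $\atoms(\varu)\subseteq\atoms(\varx^\ast)$, either $\atoms(\varu)=\atoms(\varx^\ast)$ or $\atoms(\varu)\subsetneq\atoms(\varx^\ast)$; in the latter case condition~\eqref{item:quant-cond} forces $\varx^\ast\in\free(\queryphi)$, and in the former case we may simply take $\varx^\ast\deff\varu$.

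\textbf{Proof of the lemma from the claim.} For the direction that every connected component of a \qhier query has a \Querytree, I would argue by induction on $|\Vars(\queryphi)|$ for connected \qhier $\queryphi$ (the general case follows since a query is \qhier iff all its connected components are, and these are independent). By the claim there is a variable $\varx$ in every atom, which we make the root. Deleting $\varx$ from $\queryphi$ (and dropping $\varx$ from every atom; atoms that become nullary can be discarded) yields a CQ $\queryphi'$ whose set of atoms, restricted to the remaining variables, is again \qhier --- because $\atoms_{\queryphi'}(\vary)$ for $\vary\neq\varx$ is obtained from $\atoms_{\queryphi}(\vary)$ by the same renaming, so conditions~\eqref{item:hier-cond} and~\eqref{item:quant-cond} are inherited; note $\varx\notin\free$ unless $\free=\emptyset$-extended case, and the free/quantified status of the other variables is unchanged. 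Split $\queryphi'$ into its connected components $\queryphi'_1,\dots,\queryphi'_p$, recursively obtain \Querytrees $T_1,\dots,T_p$, and attach each $T_i$'s root as a child of $\varx$. One checks the two conditions of the \Querytree definition: every atom $\sgpsi$ of $\queryphi$ contains $\varx$, and its remaining variables lie in a single component $\queryphi'_i$ (as they are pairwise connected via $\sgpsi$), where by induction they form a root-to-node path in $T_i$; prepending $\varx$ gives a directed path from the root of $\querytree$. For condition~(2), if $\free(\queryphi)\neq\emptyset$ then $\varx\in\free(\queryphi)$ by the claim, the free variables of each $\queryphi'_i$ form a connected set containing the root of $T_i$ by induction (or are empty), and since $\free(\queryphi)=\{\varx\}\cup\bigcup_i\free(\queryphi'_i)$, the set $\free(\queryphi)$ is connected in $\querytree$ and contains the root. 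For the converse direction, suppose $\queryphi$ is connected and has a \Querytree $\querytree$; I would show conditions~\eqref{item:hier-cond} and~\eqref{item:quant-cond} hold. For any variable $\varx$, the set $\atoms(\varx)$ consists exactly of those atoms whose defining root-path passes through $\varx$; since root-paths are prefix-comparable or not, for two variables $\varx,\vary$ one is an ancestor of the other (then $\atoms$ are comparable) or neither is (then no root-path contains both, so the $\atoms$ sets are disjoint) --- this gives~\eqref{item:hier-cond}. If $\atoms(\varx)\subsetneq\atoms(\vary)$ then $\vary$ is a strict ancestor of $\varx$; if moreover $\varx\in\free(\queryphi)$, then since $\free(\queryphi)$ is a connected subset containing the root, and $\vary$ lies on the root-to-$\varx$ path, $\vary\in\free(\queryphi)$, giving~\eqref{item:quant-cond}. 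Finally, the polynomial-time algorithm just implements this recursion: repeatedly find a variable occurring in all atoms of each connected component (and check it is free when free variables exist), delete it, recurse; if at any point no such variable exists, report that $\queryphi$ is not \qhier.

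\textbf{Main obstacle.} The delicate point is the claim itself --- specifically ruling out the ``disjoint'' alternative in condition~\eqref{item:hier-cond} via connectedness, and making the walk-along-the-path argument watertight while simultaneously juggling comparability against the chosen maximal variable. Everything else is a fairly standard peeling-off induction; the bookkeeping for the free-variable condition~\eqref{item:quant-cond} in both directions is routine but must be done carefully, especially the edge case where the whole query is quantifier-free (then condition~(2) of the \Querytree definition is vacuous and the ``moreover'' clause of the claim is not needed).
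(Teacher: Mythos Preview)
Your argument for the claim has the right skeleton but contains a false intermediate assertion. You claim that the family $\{\atoms(\varx):\varx\in\Vars(\queryphi)\}$ is totally ordered by $\subseteq$. This is not true: for the connected \qhier join query $R\varx\vary\wedge S\varx\varz$ we have $\atoms(\vary)=\{R\varx\vary\}$ and $\atoms(\varz)=\{S\varx\varz\}$, which are disjoint and hence incomparable. What you actually need---and what your path argument actually establishes---is only that a \emph{maximal} element $\atoms(\varx^\ast)$ is in fact the \emph{maximum}. Your parenthetical explanation of the contradiction (``whichever of $\varz_j,\varz_{j+1}$ has a set containing $\atoms(\varx^\ast)$ strictly'') is also off; no such strict containment need arise. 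The clean case split is: with $\atoms(\varz_j)\cap\atoms(\varx^\ast)\neq\emptyset$, $\atoms(\varz_{j+1})\cap\atoms(\varx^\ast)=\emptyset$, and $\atoms(\varz_j)\cap\atoms(\varz_{j+1})\neq\emptyset$, condition~(i) makes $\atoms(\varz_j),\atoms(\varz_{j+1})$ comparable. If $\atoms(\varz_j)\subseteq\atoms(\varz_{j+1})$ then $\atoms(\varz_{j+1})$ meets $\atoms(\varx^\ast)$, contradiction. Otherwise $\atoms(\varz_{j+1})\subseteq\atoms(\varz_j)$; since $\atoms(\varz_j)$ meets $\atoms(\varx^\ast)$, condition~(i) and maximality give $\atoms(\varz_j)\subseteq\atoms(\varx^\ast)$, whence $\emptyset\neq\atoms(\varz_{j+1})\subseteq\atoms(\varx^\ast)$, again a contradiction. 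With this fix the claim proof is correct, and your handling of the ``moreover'' part via condition~(ii) is fine.

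Your approach is genuinely different from the paper's. The paper works on the hypergraph side: it first shows that any two \emph{hyperedges} (sets $\Vars(\psi)$) intersect, via a shortest-path argument, and then locates a vertex lying in every hyperedge through a case distinction on whether the hyperedges are pairwise $\subseteq$-comparable. You work on the dual side, with the sets $\atoms(\varx)$ indexed by variables, and argue directly that a maximal such set must be the maximum. Your route is a bit more direct for this specific conclusion; the paper's route yields the auxiliary fact that all atom-variable-sets pairwise intersect, which is not needed here but is conceptually pleasant. Your derivation of Lemma~\ref{lem:querytree} from the claim matches the paper's inductive peeling-off construction.
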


\begin{proof}%
  For simplicity, we associate with every conjunctive query $\queryphi$ the hypergraph
$\queryhyper$ with vertex set $\Vars(\queryphi)$
and hyperedges $\hypedg_{\sgpsi}\defi \Vars(\sgpsi)$ for every atom
$\sgpsi$ in $\queryphi$.
For a variable $\varx$ we let $\edges(\varx)\defi
\setc{\hypedg_{\sgpsi}}{\varx\in\Vars(\sgpsi)}$ be the set of
hyperedges that contain $\varx$.
  Let us recall some basic notation concerning hypergraphs.
  A \emph{path} of length $\ell$ in $\queryhyper$ is a sequence of
  variables $\varx_0,\ldots,\varx_\ell$ such that for every $i<\ell$
  there is a hyperedge containing $\varx_i$ and $\varx_{i+1}$. Two
  variables have \emph{distance} $\ell$ if they are connected by a path of
  length $\ell$, but not by a path of length $\ell-1$.
  We first show that \claimast every pair of hyperedges in $\queryhyper$ has a non-empty
  intersection.
  Suppose for contradiction that there are two hyperedges $\hypedg_1$
  and $\hypedg_2$ with $\hypedg_1\cap\hypedg_2=\emptyset$, let
  $\varx_0\in \hypedg_1$ and $\varx_\ell\in \hypedg_2$ be two variables of
  distance $\ell\geq 2$, and  $\varx_0,\ldots,\varx_\ell$ be a
  shortest path connecting both variables.
  Hence, for every 
  $i<\ell$ there is a hyperedge containing $\varx_i$ and $\varx_{i+1}$
  but no other variable from the path. 
  Therefore it holds that $\edges(x_0)\cap\edges(x_1)\neq \emptyset$.
  Furthermore, we have
  $\hypedg_1\notin\edges(\varx_1)$ and hence
  $\edges(\varx_0)\not\subseteq\edges(\varx_1)$.
  On the other hand, the hyperedge containing $\varx_1$ and $\varx_2$
  does not contain $\varx_0$ and therefore
  $\edges(\varx_1)\not\subseteq\edges(\varx_0)$, which contradicts the
  assumption that $\queryphi$ is \qhier.

  We now prove that there is a variable that is contained in every
  hyperedge (and hence in every atom of $\query$).
  We consider two cases.
  First suppose that for every pair of hyperedges $\hypedg_i$,
  $\hypedg_j$ it holds that either $\hypedg_i\subseteq \hypedg_j$ or
  $\hypedg_j\subseteq \hypedg_i$. Then there is a minimal hyperedge
  $\hypedg$ that is contained in every other hyperedge, and thus all of the
  variables in this hyperedge are contained in all other hyperedges as well.
  
  Now suppose that there are two hyperedges $\hypedg_i$,
  $\hypedg_j$ such that $\hypedg_i\not\subseteq \hypedg_j$ and
  $\hypedg_j\not\subseteq \hypedg_i$.
  By \claimast, both hyperedges have a non-empty intersection. Thus, we can
  choose some $\varx\in\hypedg_i\cap\hypedg_j$.
  We want to argue that $\varx$ is contained in every hyperedge of 
  $\queryhyper$
  and assume for contradiction that there is a hyperedge $\hypedg_k$ that
  does \emph{not} contain $\varx$.
  By \claimast we can choose some $\vary\neq\varx$ that is contained
  in the non-empty intersection of $\hypedg_j$ and $\hypedg_k$.
  But now we have $\hypedg_j \in \edges(\varx)\cap \edges(\vary)$,
  $\hypedg_i \in \edges(\varx)\setminus \edges(\vary)$, and $\hypedg_k \in
  \edges(\vary)\setminus \edges(\varx)$, contradicting that
  $\queryphi$
  is \qhier.

  Let $S$ be the set of all variables that are contained in every
  hyperedge.
  We have already shown that $S\neq\emptyset$.
  To ensure that there is a free variable in $S$, note that (by the definition of \emph{\qhier} CQs) if
  $\varx\in\free(\query)$ and $\varx\notin S$, then
  $S\subset \free(\query)$.
  Hence, $\free(\query)\neq\emptyset$ implies that we can choose a
  variable from  $\free(\query)\cap S$ that satisfies the conditions
  of the claim.
\end{proof}

\begin{proof}[Proof of Lemma~\ref{lem:querytree}]
The proof of the ``if'' direction is easy, as every connected
component that has a \Querytree $T$ must be \qhier, because
if $\vary$ is a descendant of $\varx$ in $T$, then 
$\atoms(\vary)\subseteq\atoms(\varx)$.  
For proving the ``only if'' direction of Lemma~\ref{lem:querytree} we
inductively apply Claim~\ref{claim:universal_variable_querystyle} to
construct a \Querytree $\querytree$ for all \emph{connected}
conjunctive queries
$\queryphi$ with at most $\ell$ variables.
The induction start for empty queries is trivial. For the induction step, assume
  that there is a \Querytree for every connected \qhier query with at
  most $\ell$ variables, 
  and let $\queryphi$ be a connected \qhier query  with
  $\ell+1$ variables.
  By Claim~\ref{claim:universal_variable_querystyle} there is at least one variable
  that
  is contained in every atom, and if $\free(\queryphi)\neq\emptyset$
  there is a free variable with this property.
  We choose such a variable $\varx$ (preferring free over quantified variables)
  and let $\varx$ be the root of $\querytree$.

  Now we consider the query $\queryphi'$ that is obtained from
  $\queryphi$ by ``removing'' $\varx$ from every atom.
  As this query is still \qhier,
  we can find by induction a tree $\tree_i$
  for every connected component $\queryphi'_i$ of $\queryphi'$.
  We let $\querytree$ be the disjoint union of the $\tree_i$ together
  with the root $\varx$ and conclude the construction  by adding an edge from
  $\varx$ to the root of each $\tree_i$.
  It is easy to see that this construction can be computed in
  polynomial time.
\end{proof}

\makeatletter{}%

\newcommand{\ellalt}{{\ell'}}
\newcommand{\phione}{\phiBTypical}
\newcommand{\phitwo}{\phiET}
\newcommand{\phiBool}{\queryphi_\exists}
\newcommand{\domn}{\ensuremath{\operatorname{dom}_\dimn}}

\section{Lower Bounds}
\label{sec:lowerbounds}

\subsection{The \OMvcon}
\label{sec:OMv}

We write $\vec{w}_i$ to denote the $i$-th component of an $n$-dimensional vector $\vec{w}$,
and we write $M_{i,j}$ for the entry in row $i$ and column $j$ of an
$n\times n$ matrix $M$.

We consider matrices and vectors over $\set{0,1}$.
All the arithmetic is done over the Boolean semiring, where multiplication
means conjunction and addition means disjunction.  For example,
for $n$-dimensional vectors $\vecu$ and $\vecv$ we have
$\vecu\trans\vecv=1$ if and only if there is an $\indi\in[n]$ such
the $\vecu_i=\vecv_i=1$.
Let $\matM$ be an
$\dimn\times\dimn$ matrix and let $\vecv^{\,1},\ldots, \vecv^{\,\dimn}$ be a
sequence of $n$ vectors, each of which has dimension $\dimn$.  The
\emph{online matrix-vector multiplication problem} is the following
algorithmic task. 
At first, the algorithm gets an $\dimn\times\dimn$ matrix $\matM$ and is
allowed to do some preprocessing.  Afterwards, the algorithm receives
the vectors $\vecv^{\,1},\ldots, \vecv^{\,\dimn}$ one by one and has to output
$\matM{\vecv^{\,\indt}}$ before it has access to $\vecv^{\,\indt+1}$
(for each $t<n$).  The
running time is the overall time the algorithm needs to produce the
output $\matM\vecv^{\,1},\ldots,\matM \vecv^{\,\dimn}$.

It is easy to see that this problem can be solved in
$\bigoh(\dimn^3)$ time; the best known algorithm runs in time
$\bigoh(\dimn^3/\log^2\dimn)$ \cite{Williams.2007}. 
The \emph{\OMvcon} was introduced by
Henzinger, Krinninger, Nanongkai, and Saranurak
in \cite{Henzinger.2015} and states that the online matrix-vector
multiplication problem cannot be solved in ``truly subcubic'' time
$\bigoh(n^{3-\smalleps})$ for any $\smalleps>0$.  Note that the
hardness of online matrix-vector multiplication crucially depends on
the requirement that the algorithm does not receive all vectors
$\vecv^{\,1},\ldots,\vecv^{\,n}$ at once.  In fact, without this requirement the output could
be computed in time $\bigoh(\dimn^{3-\smalleps})$ by using any fast
matrix multiplication algorithm.  The \OMvcon has been used to prove
conditional lower bounds for various dynamic problems and is a common
barrier for improving these algorithms, see \cite{Henzinger.2015}.
Contrary to classical complexity theoretic assumptions
such as $P\neq NP$, this conjecture shares with other recently
proposed algorithmic conjectures the less desirable fact that it can
hardly be called ``well-established''.  However, at least we know that
improving dynamic query evaluation algorithms for queries that are hard under
the \OMvcon is a very difficult task and (even if not completely
inconceivable) would lead to major breakthroughs in algorithms for
e.g.\ matrix multiplication (see \cite{Henzinger.2015} for a
discussion).

A variant of \OMv that is useful as an intermediate step in our
reductions is the following \OuMv problem.  Again, we are given an
$\dimn\times\dimn$ matrix $\matM$ and are allowed to do some
preprocessing.  Afterwards, a sequence of pairs of vectors
$\vecu^{\,\indt}, \vecv^{\,\indt}$ arrives for each $t\in[n]$, and the task is to compute
$({\vecu^{\,\indt}})\trans \matM \vecv^{\,\indt}$.  As before, the algorithm has
to output $({\vecu^{\,\indt}})\trans \matM \vecv^{\,\indt}$ before it gets
$\vecu^{\,\indt+1}, \vecv^{\,\indt+1}$ as input.
It is known that \OuMv is at least as difficult as \OMv.

\begin{theorem}[Theorem 2.4 in \cite{Henzinger.2015}]\label{thm:OuMv}
  If there is some $\smalleps>0$ such that \OuMv can be solved in
  $\dimn^{3-\smalleps}$ time, then the \OMvcon fails.
\end{theorem}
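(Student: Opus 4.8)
The plan is to argue by contraposition: from an algorithm solving \OuMv on $\dimn\times\dimn$ instances within time $\bigoh(\dimn^{3-\smalleps})$ I would construct a truly subcubic algorithm for \OMv, contradicting the \OMvcon. The natural starting point is the observation that an \OuMv algorithm provides online access to individual bits $(\vecu^{\,\indt})\trans\matM\vecv^{\,\indt}$ of one's choosing: after the one-time preprocessing of $\matM$, in round $\indt$ one may submit any pair $(\vecu^{\,\indt},\vecv^{\,\indt})$ and read off the resulting bit. In particular, keeping $\vecv$ fixed across $\dimn$ consecutive rounds and letting $\vecu$ range over the standard unit vectors $e_1,\dots,e_\dimn$ recovers the whole product vector $\matM\vecv$ coordinate by coordinate, because $(e_i)\trans\matM\vecv = (\matM\vecv)_i$. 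Hence $\dimn$ rounds of \OuMv suffice to answer a single \OMv query, and simulating an \OMv instance in this way already turns any \OuMv algorithm into an \OMv algorithm.

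What makes this a theorem rather than a triviality is the running-time accounting. The naive simulation just sketched spends $\Theta(\dimn)$ \OuMv rounds per \OMv query in order to produce $\Theta(\dimn)$ output bits, and carrying it out over all $\dimn$ vectors of an \OMv instance costs on the order of $\dimn$ invocations of the assumed \OuMv algorithm, which is not subcubic. To extract a genuine contradiction one has to be considerably more economical, and here I would follow the reduction of Henzinger, Krinninger, Nanongkai, and Saranurak: rather than reducing \OMv in exactly the shape stated above, one passes through the robust ``rectangular'' reformulations of the \OMvcon --- instances with an $\dimn^{\gamma_1}\times\dimn^{\gamma_2}$ matrix and $\dimn^{\gamma_3}$ vectors, which under the conjecture cannot be solved faster than the trivial $\dimn^{\gamma_1+\gamma_2+\gamma_3}$ bound --- fixes the exponents $\gamma_1,\gamma_2,\gamma_3$ appropriately, and performs a careful amortised analysis of the simulation so as to keep the reduction's overhead under control.

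I expect exactly this amortised bookkeeping to be the main obstacle in a self-contained proof. Since the statement is already established in \cite{Henzinger.2015}, we do not reprove it and instead invoke \cite[Theorem~2.4]{Henzinger.2015} as a black box; in the sequel we will use it only in order to base our lower bounds on the hardness of \OuMv rather than of \OMv.
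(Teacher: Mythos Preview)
Your proposal is correct and matches the paper's treatment: the paper does not prove Theorem~\ref{thm:OuMv} at all but simply cites it as Theorem~2.4 of \cite{Henzinger.2015}, using it as a black box exactly as you propose. Your additional discussion of why the naive coordinate-by-coordinate simulation fails and why the actual reduction requires the rectangular \OMv variants and an amortised analysis is accurate and helpful context, but goes beyond what the paper itself provides.
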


\subsection{The \OVcon}

While \OuMv and \OMv turn out to be suitable for Boolean CQs and the
enumeration of $k$-ary CQs, our lower bound for the counting
complexity additionally relies on the \emph{orthogonal vectors
  conjecture} 
(also known as the \emph{Boolean orthogonal detection
  problem}, see \cite{Abboud.2015,Williams.2005}). 
It is not known whether
this conjecture implies or is implied by the \OMvcon. However, it is implied
by the strong exponential time hypothesis (\SETH)
\cite{Williams.2005} and typically serves as a basis for SETH-based
lower bounds of polynomial time algorithms.

The \emph{orthogonal vectors problem} (OV) is the following static decision
problem.
Given two sets $\vecuset$ and $\vecvset$ of $n$ Boolean vectors of
dimension
$d$, decide whether there are $\vecu\in\vecuset$ and $\vecv\in\vecvset$
such that 
$\vecu\trans\vecv=0$.
This problem can clearly be solved in time 
$\bigoh(n^2d)$ by
checking all pairs of vectors, and also slightly better algorithms are
known \cite{Abboud.2015}.
The \emph{\OVcon} states that this problem cannot be solved
in truly subquadratic time if $d=\omega(\log n)$.
The exact formulation of this conjecture in terms of the parameters
varies in the literature, but all of them imply the following simple
variant which is sufficient for our purposes.

\begin{conjecture}[\OVcon]
For every $\smalleps>0$ there is no algorithm that solves \OV for $d=\lceil\log^2
n\rceil$ in time $\bigoh(n^{2-\smalleps})$.  
\end{conjecture}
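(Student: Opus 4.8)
The plan is to recognise that the \OVcon\ stated here is not proved from scratch --- it is a working hypothesis --- but that this particular weak formulation is a consequence of the standard formulations of the orthogonal-vectors conjecture found in the literature (e.g.\ \cite{Williams.2005,Abboud.2015}), and hence, via \cite{Williams.2005}, of \SETH. So what I would actually write is the reduction establishing ``standard \OV-conjecture $\Rightarrow$ the variant above''. I would start from the following common form: for every $\smalleps>0$ there is a constant $c$ such that \OV\ on $n$ vectors of dimension $d=c\lceil\log n\rceil$ cannot be solved in time $\bigoh(n^{2-\smalleps})$.

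First I would record the monotonicity of \OV\ in the dimension. Given an instance $(\vecuset,\vecvset)$ of $n$ vectors of dimension $d'$, append $\ell$ zero coordinates to every vector in $\vecuset\cup\vecvset$; this yields an instance of dimension $d'+\ell$ with exactly the same set of orthogonal pairs, because a zero coordinate never contributes to a dot product. The padding costs $\bigoh(n\ell)$ time, which is dominated by $n^{2-\smalleps}$ for the parameter ranges we care about. Consequently, any algorithm solving \OV\ in dimension $d$ in time $\bigoh(n^{2-\smalleps})$ also solves all instances of dimension $d'\le d$ within the same bound.

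Then the contrapositive is immediate. Suppose for contradiction that some algorithm $A$ solves \OV\ with $d=\lceil\log^2 n\rceil$ in time $\bigoh(n^{2-\smalleps})$. Fix the constant $c$ that the standard conjecture supplies for this $\smalleps$; for all sufficiently large $n$ we have $c\lceil\log n\rceil\le\lceil\log^2 n\rceil$, so every (hard) instance of dimension $c\lceil\log n\rceil$ can be zero-padded up to dimension $\lceil\log^2 n\rceil$ and handed to $A$, which then decides it in time $\bigoh(n^{2-\smalleps})$ --- contradicting the standard \OV-conjecture. Since \SETH\ implies that standard conjecture \cite{Williams.2005}, the same argument shows the variant above is also implied by \SETH.

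The only real point requiring care --- the ``main obstacle'' in the sense asked --- is reconciling the several parametrizations that circulate in the literature: some state hardness for $d=\omega(\log n)$, some for polylogarithmic $d$, some for $d=c\log n$ with an unspecified but fixed $c$. One must verify that the padding goes in the favourable direction (enlarging $d$ can only make \OV\ harder) and that $\lceil\log^2 n\rceil$ dominates every $c\lceil\log n\rceil$ for large $n$, so that the formulation actually used in this paper is the weakest of the family and therefore follows from each of the others.
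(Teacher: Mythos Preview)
The statement is a \emph{conjecture}, and the paper does not prove it. The paper only remarks that ``the exact formulation of this conjecture in terms of the parameters varies in the literature, but all of them imply the following simple variant which is sufficient for our purposes,'' and leaves it at that. Your proposal correctly identifies that there is nothing to prove here beyond this implication, and then actually supplies the zero-padding monotonicity argument that the paper omits. That argument is sound: padding with zero coordinates preserves orthogonality, costs $\bigoh(n\ell)$ time, and for every fixed $c$ one has $c\lceil\log n\rceil\le\lceil\log^2 n\rceil$ for all large $n$, so an $\bigoh(n^{2-\smalleps})$ algorithm for dimension $\lceil\log^2 n\rceil$ would refute the standard $d=c\log n$ formulation and hence \SETH.

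In short, your write-up is correct and in fact more detailed than the paper's own treatment, which gives no argument at all.
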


\subsection{Proof Ideas}

Before we establish the lower bounds in full generality, we illustrate
the main ideas along the two representative examples
$\phione$ and $\phitwo$ defined in \eqref{eq:SxExyTy} and \eqref{eq:ExyTy}.
Note that if a conjunctive query is not \qhier, then according to
Definition~\ref{def:consquanthier} there are two distinct variables
$\varx$ and $\vary$ that do not satisfy 
one of the two
conditions.
The Boolean query $\phione$ is an example of a query where $\varx$ and
$\vary$ do not satisfy
the first condition (i.e., the condition of being hierarchical), 
and  $\phitwo$ is a query
where the quantifier-free part is hierarchical, but where $\varx$ and
$\vary$ do not satisfy the second
condition on the free variables.
Intuitively, every non-\qhier query has a subquery whose shape is
similar to either $\phione$ or $\phitwo$ 
(we will make this precise in
Section~\ref{sec:lowerbound-main}).

Let us show how the \OMvcon can be applied to obtain a lower bound for
answering the Boolean query $\phione$ under updates.

\begin{lemma}\label{lem:SxExyTy}
  Suppose there is an $\smalleps>0$ and a dynamic algorithm with
  arbitrary preprocessing time and $\updatetime=\dimn^{1-\smalleps}$ update time
  that
  answers $\phione$ in time $\answertime=\dimn^{2-\smalleps}$ on databases whose
  active domain has size $\dimn$, then \OuMv can be solved in time
  $\bigoh(n^{3-\smalleps})$. 
\end{lemma}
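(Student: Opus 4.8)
The plan is to reduce \OuMv to the task of answering $\phione$ under updates. Suppose we are given an \OuMv instance: an $\dimn\times\dimn$ Boolean matrix $\matM$, followed by a stream of vector pairs $(\vecu^{\,1},\vecv^{\,1}),\ldots,(\vecu^{\,\dimn},\vecv^{\,\dimn})$. I will run the assumed dynamic algorithm on a database $\DB$ over the schema $\set{\relS,\relE,\relT}$, where $\relS,\relT$ are unary and $\relE$ is binary, matching the shape of $\phione=\exists\varx\exists\vary\,(\relS\varx\und\relE\varx\vary\und\relT\vary)$. The active domain will consist of $2\dimn$ constants $a_1,\ldots,a_\dimn,b_1,\ldots,b_\dimn$. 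During the preprocessing phase I build $\DBstart$ with $\relE^{\DBstart}\deff\setc{(a_i,b_j)}{\matM_{i,j}=1}$ and $\relS^{\DBstart}=\relT^{\DBstart}=\emptyset$; since $\setsize{\relE^{\DBstart}}\le\dimn^2$ this costs $\bigoh(\dimn^2)$ time, which is permitted because the algorithm may use arbitrary preprocessing time. The crucial point is that $\relE$ is written once and never touched again; only the two unary relations are modified during the update stream.

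Now I process the stream. In round $\indt$ I make $\relS$ and $\relT$ encode $\vecu^{\,\indt}$ and $\vecv^{\,\indt}$, namely $\relS^{\DB}=\setc{a_i}{\text{the $i$-th entry of }\vecu^{\,\indt}\text{ is }1}$ and $\relT^{\DB}=\setc{b_j}{\text{the $j$-th entry of }\vecv^{\,\indt}\text{ is }1}$. This is done by first issuing \Delete\ commands for all tuples currently in $\relS$ and in $\relT$ (at most $2\dimn$ of them), and then \Insert\ commands for all the new tuples (again at most $2\dimn$); so each round triggers at most $4\dimn$ calls to \UPDATE. Afterwards I invoke \ANSWER\ once and report its output as the value of $(\vecu^{\,\indt})\trans\matM\vecv^{\,\indt}$, and only then read the next pair. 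Thus the reduction respects the online requirement of \OuMv.

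Correctness is immediate from the semantics of $\phione$: once round $\indt$ has been set up, $\phione(\DB)=\Yes$ iff there exist $i,j$ with $a_i\in\relS^{\DB}$, $(a_i,b_j)\in\relE^{\DB}$, and $b_j\in\relT^{\DB}$, i.e.\ with $(\vecu^{\,\indt})_i=1$, $\matM_{i,j}=1$, and $(\vecv^{\,\indt})_j=1$, which over the Boolean semiring is exactly the statement $(\vecu^{\,\indt})\trans\matM\vecv^{\,\indt}=1$. For the running time, note that at every point the active domain has size at most $2\dimn$ (the constants occurring in $\relE$ stay forever, and $\relS,\relT$ only ever hold a subset of the $a_i$ resp.\ $b_j$), so each \UPDATE\ runs in $\bigoh(\dimn^{1-\smalleps})$ time and each \ANSWER\ in $\bigoh(\dimn^{2-\smalleps})$ time. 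Hence a single round costs $\bigoh(\dimn)\cdot\bigoh(\dimn^{1-\smalleps})+\bigoh(\dimn^{2-\smalleps})=\bigoh(\dimn^{2-\smalleps})$, and the whole stream of $\dimn$ rounds costs $\bigoh(\dimn^{3-\smalleps})$; together with the $\bigoh(\dimn^2)$ preprocessing this gives total time $\bigoh(\dimn^{3-\smalleps})$, as claimed.

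There is no deep obstacle here; the proof is essentially bookkeeping, and the two things to be careful about are exactly the two features highlighted above: (i) the $\Theta(\dimn^2)$ cost of materialising $\matM$ inside $\relE$ must be charged to preprocessing, which is free by hypothesis; and (ii) per round we may issue only $\bigoh(\dimn)$ update commands rather than $\Theta(\dimn^2)$, which is why it is essential that $\relE$ is built once and only the unary relations $\relS,\relT$ are rewritten between rounds. Combined with Theorem~\ref{thm:OuMv}, Lemma~\ref{lem:SxExyTy} shows that such a dynamic algorithm for $\phione$ would refute the \OMvcon.
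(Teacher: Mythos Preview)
Your reduction, correctness argument, and per-round time accounting are essentially the paper's. There is, however, a real gap in how you treat preprocessing. You load $\matM$ into the \emph{initial} database $\DBstart$ and justify the cost by saying ``the algorithm may use arbitrary preprocessing time'' and later that preprocessing is ``free by hypothesis''. This misreads the hypothesis: ``arbitrary preprocessing time'' is a \emph{concession} about the assumed dynamic algorithm---we impose no bound on its $\PREPROCESS$ routine---not a licence to ignore that cost in the \OuMv reduction. After building $\DBstart$ you must still run the dynamic algorithm's $\PREPROCESS$ on it, and on an input of size $\Theta(\dimn^2)$ that routine may take, say, $2^{\dimn}$ time; your \OuMv algorithm would then also take $2^{\dimn}$ time, and the conclusion ``\OuMv can be solved in time $\bigoh(\dimn^{3-\smalleps})$'' fails.

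The paper avoids this by starting $\PREPROCESS$ on the \emph{empty} database, so that call finishes in constant time regardless of how $\PREPROCESS$ scales. It then issues at most $\dimn^2$ $\Insert$ commands to populate $\relE^{\DB}$, which costs $\dimn^2\cdot\updatetime=\bigoh(\dimn^{3-\smalleps})$ and fits inside the claimed bound. Apart from this point your argument coincides with the paper's: $\relE$ is built once, only the unary relations $\relS,\relT$ are rewritten each round with $\bigoh(\dimn)$ updates, and one call to $\ANSWER$ decides $(\vecu^{\,\indt})\trans\matM\vecv^{\,\indt}$.
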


\begin{proof}
  We show how a query evaluation algorithm for $\phione$ can be used
  to solve \OuMv.
  We get the $\dimn\times\dimn$ matrix $\matM$ and start the
  preprocessing phase of our evaluation algorithm for $\phione$ with
  the empty database $\DB=(\relE^\DB,\relS^\DB,\relT^\DB)$ where
  $\relE^\DB=\relS^\DB=\relT^\DB=\emptyset$.
  As this database has constant size, the preprocessing phase
  finishes in constant time.
  We apply at
  most $\dimn^2$ update steps to ensure that
  $\relE^\DB=\setc{(\indi,\indj)}{\matM_{\indi,\indj}=1}$ is the
  relation corresponding to the 
  adjacency matrix $\matM$.  This
  preprocessing takes time
  $\dimn^2\updatetime=\bigoh(\dimn^{3-\smalleps})$.  If we get two
  vectors $\vecu^{\,\indt}$ and $\vecv^{\,\indt}$ in the dynamic phase of the
  \OuMv problem, we
  update $\relS^\DB$ and $\relT^\DB$ so that their 
  characteristic
  vectors agree with $\vecu^{\,\indt}$ and $\vecv^{\,\indt}$, respectively.
  Now we  answer $\phione$ on $\DB$ within time $\answertime$ and
  output $1$ if 
  $\eval{\phione}{\DB} = \Yes$ and $0$ otherwise.
  Note that by construction this answer agrees with 
  $({\vecu^{\,\indt}})\trans \matM \vecv^{\,\indt}$.
  The time of each step of the dynamic
  phase of \OuMv is bounded by $2\dimn\updatetime+\answertime=\bigoh(\dimn^{2-\smalleps})$, and the
  overall running time for \OMv accumulates to $\bigoh(\dimn^{3-\smalleps})$.
\end{proof}

Note that a lower bound on the answer time $\answertime$ of a Boolean query
directly implies the same lower bounds for the time $\countingtime$
needed to count the number of tuples and for the delay $\delaytime$ of
an enumeration algorithm.
Furthermore, this also holds true for any query that is obtained from the
Boolean query by removing quantifiers.

Now we turn to our second example $\phitwo$.
Note that the Boolean version $\exists\varx\,\phitwo(\varx)$ is \qhier
and hence can be answered in constant time under updates by
Theorem~\ref{thm:upperbound}.
Thus, a lower bound on the delay $\delaytime$ does not follow from a
corresponding lower bound on the Boolean version.
Instead we obtain the lower bound by a direct reduction from \OMv.

\begin{lemma}\label{lem:ExyTy}
  Suppose there is an $\smalleps>0$ and a dynamic algorithm with
  arbitrary preprocessing time and $\updatetime=\dimn^{1-\smalleps}$ update time
  that enumerates $\phitwo$
  with $\delaytime=\dimn^{1-\smalleps}$ delay
  on databases whose active domain has size $n$,
  then \OMv
  can be solved in time $\bigoh(\dimn^{3-\smalleps})$.
\end{lemma}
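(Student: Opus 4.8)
The plan is to reduce \OMv directly to dynamic enumeration of $\phitwo$. Given an $\dimn\times\dimn$ matrix $\matM$, I would run the assumed dynamic algorithm on the schema $\set{\relE,\relT}$ with $\ar(\relE)=2$ and $\ar(\relT)=1$, starting the $\PREPROCESS$ routine on the empty database; since this database has constant size, preprocessing finishes in constant time. Next I would perform at most $\dimn^2$ insertions to build the relation $\relE^\DB=\setc{(\indi,\indj)}{\matM_{\indi,\indj}=1}$, at total cost $\dimn^2\cdot\updatetime=\bigOh(\dimn^{3-\smalleps})$. Throughout the reduction the active domain of $\DB$ stays inside $\set{1,\dots,\dimn}$, so all update and delay bounds apply with the parameter $\dimn$.

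When the $\indt$-th vector $\vecv^{\,\indt}$ arrives, I would update $\relT^\DB$ so that it becomes the support $\setc{\indj}{\vecv^{\,\indt}_\indj=1}$ of $\vecv^{\,\indt}$: first delete the (at most $\dimn$) entries of the previous $\relT^\DB$, then insert the (at most $\dimn$) entries of the new one. This uses at most $2\dimn$ update operations, costing $\bigOh(\dimn\cdot\updatetime)=\bigOh(\dimn^{2-\smalleps})$. By construction, for every $\indi\in[\dimn]$ we have $\indi\in\eval{\phitwo}{\DB}$ if and only if there is some $\indj$ with $\matM_{\indi,\indj}=1$ and $\vecv^{\,\indt}_\indj=1$, i.e.\ if and only if $(\matM\vecv^{\,\indt})_\indi=1$. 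Hence a single call of $\ENUMERATE$ lists exactly the support of $\matM\vecv^{\,\indt}$ (reporting $\EOE$ if that support is empty); as $\eval{\phitwo}{\DB}$ has at most $\dimn$ elements and the delay is $\delaytime=\dimn^{1-\smalleps}$, this call runs in time $\bigOh(\dimn\cdot\delaytime)=\bigOh(\dimn^{2-\smalleps})$. From the enumerated indices I reconstruct $\matM\vecv^{\,\indt}$ (all remaining entries being $0$) and output it before requesting $\vecv^{\,\indt+1}$, exactly as required by \OMv.

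Summing up, the initial construction of $\relE^\DB$ costs $\bigOh(\dimn^{3-\smalleps})$, and each of the $\dimn$ rounds costs $\bigOh(\dimn^{2-\smalleps})$ for resetting $\relT^\DB$ plus $\bigOh(\dimn^{2-\smalleps})$ for the enumeration, so the overall running time is $\bigOh(\dimn^{3-\smalleps})$. There is no real obstacle here --- this is a plain reduction in the spirit of Lemma~\ref{lem:SxExyTy}; the only points needing care are the correctness identity $\indi\in\eval{\phitwo}{\DB}\iff(\matM\vecv^{\,\indt})_\indi=1$, and checking that fully \emph{resetting} the unary relation $\relT$ between rounds (rather than only inserting into it) still fits within the $\bigOh(\dimn^{2-\smalleps})$ per-round budget, both of which are immediate. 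Note also that, as in the model, each $\ENUMERATE$ call is issued afresh after the updates of the current round, so no assumption about carrying enumeration state across updates is needed.
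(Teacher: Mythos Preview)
Your proposal is correct and follows essentially the same reduction as the paper: start from the empty database, encode $\matM$ into $\relE$ with at most $\dimn^2$ insertions, then in each round encode $\vecv^{\,\indt}$ into $\relT$ and read off $\matM\vecv^{\,\indt}$ from the enumerated set $\eval{\phitwo}{\DB}$. The only cosmetic difference is that the paper updates $\relT^\DB$ from its previous contents using at most $\dimn$ insertions/deletions per round, whereas you fully reset it using at most $2\dimn$; both fit within the $\bigoh(\dimn^{2-\smalleps})$ per-round budget.
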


\begin{proof}
  We show that an enumeration algorithm with $\dimn^{1-\smalleps}$
  update time and $\dimn^{1-\smalleps}$ delay helps to solve \OMv in
  time $\bigoh(\dimn^{3-\smalleps})$.
  As in the proof of Lemma~\ref{lem:SxExyTy}, we 
  are given an $\dimn\times\dimn$ matrix $\matM$, start with the empty
  database $\DB=(\relE^\DB,\relT^\DB)$ where
  $\relE^\DB=\relT^\DB=\emptyset$ and perform at most $\dimn^2$ update steps to ensure that
  $\relE^\DB=\setc{(\indi,\indj)}{\matM_{\indi,\indj}=1}$.
  In the dynamic phase of \OMv,
  when a vector $\vecv^{\,\indt}$ arrives, we perform at most $\dimn$
  insertions or deletions to the relation $\relT^\DB$ such that
  $\vecv^{\,\indt}$ is the characteristic vector of $\relT^\DB$.
  Afterwards, we wait until the enumeration algorithm outputs the set
  $\phitwo(\DB)$ and output the
  characteristic vector $\vecu^{\,\indt}$ of this set.
  By construction we have $\vecu^{\,\indt}=\matM \vecv^{\,\indt}$.
  If the enumeration algorithm has
  update time $\updatetime$ and delay $\delaytime$, then the overall
  running time of this step is bounded by
  $\dimn\updatetime+\dimn\delaytime$ which by the assumptions of our
  lemma is bounded by $\bigoh(n^{2-\smalleps})$.  Hence, the overall running
  time for solving the \OMv is bounded by $\bigoh(n^{3-\smalleps})$.
\end{proof}

Finally, we consider the counting problem for $\phitwo$. Again, we
cannot reduce from its tractable Boolean version. 
Moreover, we were not able to use \OMv
directly, in a similar way as in the proof of the previous lemma.
Instead, we reduce from the orthogonal vectors problem.

\begin{lemma}\label{lem:ExyTy-count}
  Suppose there is an $\smalleps>0$ and a dynamic algorithm with
  arbitrary preprocessing time 
  and $\updatetime=\dimn^{1-\smalleps}$ update time  
  that computes
  $\Setsize{\eval{\phitwo}{\DB}}$ in time 
  $\countingtime=\dimn^{1-\smalleps}$ 
  on databases whose active domain has size $n$,
  then the \OVcon fails.
\end{lemma}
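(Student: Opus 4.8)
The plan is to reduce \OV with $d=\lceil\log^2 n\rceil$ to the dynamic counting problem for $\phitwo(\varx)=\exists\vary\body{\relE\varx\vary\und\relT\vary}$. Given sets $\vecuset,\vecvset$ of $n$ Boolean $d$-dimensional vectors, first test in $\bigoh(nd)$ time whether some vector in $\vecuset\cup\vecvset$ is the all-zero vector and, if so, output ``yes'' (an all-zero vector is orthogonal to everything). Otherwise, introduce pairwise distinct domain elements $b_{\vecv}$ for $\vecv\in\vecvset$ and $c_1,\dots,c_d$ for the $d$ coordinates, and consider the $\{\relE,\relT\}$-database in which $\relE^{\DB}=\setc{(b_{\vecv},c_\indj)}{\vecv\in\vecvset,\ \vecv_\indj=1}$ while $\relT^{\DB}$ will vary. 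We run $\PREPROCESS$ on the empty database (which, as in the proof of Lemma~\ref{lem:SxExyTy}, finishes in time depending only on $\phitwo$) and then perform at most $nd$ insertions into $\relE$ to build $\relE^{\DB}$.

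Next we enumerate the vectors $\vecu\in\vecuset$. For each $\vecu$ we perform at most $d$ update steps so that $\relT^{\DB}=\setc{c_\indj}{\vecu_\indj=1}$, then call $\COUNT$. By construction a pair $(b_{\vecv},c_\indj)$ witnesses $\phitwo$ exactly when $\vecv_\indj=\vecu_\indj=1$, so $\eval{\phitwo}{\DB}=\setc{b_{\vecv}}{\vecv\in\vecvset,\ \vecu\trans\vecv=1}$, and since distinct vectors of $\vecvset$ use distinct elements $b_{\vecv}$, the value $\Setsize{\eval{\phitwo}{\DB}}$ returned by $\COUNT$ is precisely the number of $\vecv\in\vecvset$ that are \emph{not} orthogonal to $\vecu$. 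Hence there exists $\vecv\in\vecvset$ with $\vecu\trans\vecv=0$ if and only if $\Setsize{\eval{\phitwo}{\DB}}<n$; we output ``yes'' as soon as this occurs for some $\vecu$, and ``no'' if it never does.

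It remains to bound the running time. Because no input vector is all-zero, every $b_{\vecv}$ occurs in $\relE^{\DB}$, so at all times $\Adom(\DB)$ has some size $N$ with $n\leq N\leq n+d=n+\lceil\log^2 n\rceil$; in particular every call to $\UPDATE$ and $\COUNT$ costs $\bigoh(N^{1-\smalleps})=\bigoh(n^{1-\smalleps})$. Building $\relE^{\DB}$ thus costs $\bigoh(nd\cdot n^{1-\smalleps})$, and the dynamic phase costs $\bigoh(n\cdot(d+1)\cdot n^{1-\smalleps})$, so the whole \OV-algorithm runs in time $\bigoh(n^{2-\smalleps}\log^2 n)$. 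Since $\log^2 n=n^{o(1)}$, this is $\bigoh(n^{2-\smalleps/2})$ for all sufficiently large $n$, contradicting the \OVcon.

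The one point that needs care is \emph{why counting is essential}: the Boolean version $\exists\varx\,\phitwo$ is \qhier and hence, by Theorem~\ref{thm:upperbound}, answerable in constant time under updates, so no reduction of the above kind can go through using only the decision version. It is the comparison of $\Setsize{\eval{\phitwo}{\DB}}$ with $\setsize{\vecvset}=n$ that detects an orthogonal partner of $\vecu$. The remaining subtlety — removing all-zero vectors up front — is exactly what keeps $\setsize{\Adom(\DB)}$ at $n\,(1+o(1))$, so that the assumed $\bigoh(n^{1-\smalleps})$ bounds really do yield an $\bigoh(n^{2-\smalleps}\operatorname{polylog} n)$ algorithm for \OV.
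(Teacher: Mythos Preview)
Your proof is correct and follows essentially the same reduction as the paper: encode one family of vectors into $\relE$, iterate over the other family by loading each vector into $\relT$ with $\leq d$ updates, and compare the count to $n$. The only differences are cosmetic: you swap the roles of $\vecuset$ and $\vecvset$ (immaterial by symmetry of \OV), and you add an up-front all-zero-vector check to force $N\geq n$. That check is harmless but unnecessary --- the active domain is always a subset of $\{b_{\vecv}:\vecv\in\vecvset\}\cup\{c_1,\dots,c_d\}$, hence $N\leq n+d=n(1+o(1))$ regardless, and that upper bound is all the running-time analysis needs.
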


\begin{proof}
  As in the previous proof we assume that there is a dynamic counting
  algorithm for $\phitwo$ and start its preprocessing with the empty database over the
  schema $\{\relE,\relT\}$.
  Afterwards, we use at most $nd$ updates (where $d=\lceil\log^2(n)\rceil$) to
  encode all $d$-dimensional vectors $\vecu^{\,1}$, \ldots, $\vecu^{\,n}$ in $\vecuset$ into the binary relation
  $\relE^\DB\subseteq [n]\times [d]$ such that
  $(i,j)\in E^\DB$ if and only if the $j$-th component of $\vecu^{\,i}$ is
  $1$.
  Then we make at most $d$ updates to $T^\DB$ to ensure that the first
  vector $\vecv^{\,1}\in\vecvset$ is the 
  characteristic vector of $T^\DB$.
  Now we compute $\Setsize{\eval{\phitwo}{\DB}}$. Note that $\Setsize{\eval{\phitwo}{\DB}}<n$ if and
  only if $(\vecu^{\,i})\trans\vecv^{\,1}=0$ for some $i\in[n]$.
  If this is the case, we output that there is a pair of orthogonal
  vectors.
  Otherwise, we know that $\vecv^{\,1}$ is not orthogonal to 
  any
  $\vecu^{\,i}$
  and apply the same procedure for $\vecv^{\,2}$, which requires again at
  most $d$ updates to $T^\DB$ and one call of the $\COUNT$ routine.
  Repeating this procedure for all $n$ vectors in $\vecvset$ takes
  time $\bigoh\big(nd\updatetime+n(d\updatetime+\countingtime)\big)\leq \bigoh(n^{2-\smalleps/2})$ and
  solves \OV in subquadratic time. 
\end{proof}

\subsection{Proofs of the Main Theorems}\label{sec:lowerbound-main}

\label{sec:lowerbound-Boolean}

In this section we prove our lower bound
Theorems~\ref{thm:enumerating_CQ_intro},
\ref{thm:modelchecking_joins_intro}, and
\ref{thm:counting_CQ_lowerbound}.

We will use standard notation concerning homomorphisms
(cf., e.g.\ \cite{AHV-Book}). In particular, for CQs $\queryphi$ and
$\queryphialt$ we will write
$\homh:\queryphi\to\queryphialt$ to indicate that $\homh$ is a homomorphism from
$\queryphi$ to $\queryphialt$ (as defined in
Section~\ref{sec:mainresults}).
A homomorphism $\homg:\DB\to\queryphi$ from a database $\DB$ to a CQ
$\queryphi$ is a mapping from $\Adom(\DB)$ to $\Vars(\queryphi)$ such that
whenever $(a_1,\ldots,a_r)$ is a tuple in some relation $\relR^\DB$ of
$\DB$, then $\relR \homg(a_1)\cdots \homg(a_r)$ is an atom of $\queryphi$.
A homomorphism $\homh:\queryphi\to\DB$ from a CQ $\queryphi$ to a database
$\DB$ is a mapping from $\Vars(\queryphi)$ to $\Adom(\DB)$ such that 
whenever $\relR \varu_1\cdots \varu_\arityr$ is an atom of $\queryphi$, then
$\big(\homh(\varu_1), \ldots, \homh(\varu_\arityr)\big)\in \relR^{\DB}$. 
Obviously, for a $k$-ary CQ $\queryphi(x_1,\ldots,x_k)$ and a database
$\DB$ we have
$\eval{\queryphi}{\DB}=\setc{(\homh(x_1),\ldots,\homh(x_k))}{\homh\text{
  is a homomorphism from $\queryphi$ to $\DB$}}$.

We first generalise the proof idea of Lemma~\ref{lem:SxExyTy} to
all Boolean conjunctive queries $\queryphi$ that
do not
satisfy the requirement \eqref{item:hier-cond} of
Definition~\ref{def:consquanthier}.
Thus assume that there
are two variables $\varx, \vary\in\Vars(\query)$ and three atoms
$\sgpsix, \sgpsixy, \sgpsiy$ of $\query$ 
with
$\Vars(\sgpsix)\cap\set{\varx,\vary}=\set{\varx}$,
$\Vars(\sgpsixy)\cap\set{\varx,\vary}=\set{\varx,\vary}$, and
$\Vars(\sgpsiy)\cap\set{\varx,\vary}=\set{\vary}$.  Without loss of
generality we assume that
$\Vars(\queryphi)=\set{\varx,\vary,\varz_1,\ldots,\varz_\ell}$.
For a
given $\dimn\times\dimn$ matrix $\matM$
we fix a domain $\domn$ that consists of $2n+\ell$ elements  
$\setc{\verta_\indi, \vertb_\indi}{\indi\in[\dimn]} \cup \setc{\vertc_\inds}{\inds\in[\ell]}$.
For $\indi,\indj\in[\dimn]$ we let
$\iotasubij$
be the injective mapping
from $\Vars(\queryphi)$ to $\domn$
with
$\iotasubij(\varx)=\verta_\indi$, 
$\iotasubij(\vary)=\vertb_\indj$, and
$\iotasubij(\varz_\inds)=\vertc_\inds$ for all $s\in[\ell]$.

For the matrix $\matM$ and for
$\dimn$-dimensional vectors $\vecu$ and $\vecv$, 
we define a $\sigma$-db
$\DB=\DB(\queryphi,\matM,\vecu,\vecv)$ with 
$\adom{\DB}\subseteq\domn$ as follows (recall our notational
convention that $\vecu_\indi$ denotes the $\indi$-th entry of a vector
$\vecu$).
For every atom
$\sgpsi=\relR\varw_1\cdots\varw_\arityr$ in $\queryphi$ we include
in $\relR^\DB$ the tuple
$\big(\iotasubij(\varw_1),\ldots,\iotasubij(\varw_\arityr)\big)$
\begin{itemize}
\item
for all $\indi,\indj\in[\dimn]$ such that 
 $\vecu_\indi=1$, \
 if $\sgpsi=\sgpsix$,  
\item
for all $\indi,\indj\in[\dimn]$ such that 
$\vecv_\indj=1$, \ 
if $\sgpsi=\sgpsiy$, 
\item
for all $\indi,\indj\in[\dimn]$ such that $\matM_{\indi,\indj}=1$, \ if
$\sgpsi=\sgpsixy$, \ and
\item 
for all $\indi,\indj\in[\dimn]$, \ if $\sgpsi\notin\set{\sgpsix,\sgpsixy,\sgpsiy}$. 
\end{itemize}

Note that the relations in the atoms $\sgpsix$, $\sgpsiy$, and
$\sgpsixy$ are used to encode $\vecu$, $\vecv$, and $\matM$,
respectively.  
Moreover, 
since $\sgpsix$ ($\sgpsiy$) does not contain the variable $\vary$ ($\varx$),
two databases $\DB=\DB(\queryphi,\matM,\vecu,\vecv)$ and
$\DB'=\DB(\queryphi,\matM,\mbox{$\vecu\,{}'$},\mbox{$\vecv\,{}'$})$ differ only in at most
$2\dimn$ tuples.
Therefore, $\DB'$ can be obtained from $\DB$ by $2\dimn$ update steps. 
It follows from the definitions that $\iotasubij$ is a homomorphism
from $\queryphi$ to $\DB$ if and only if $\vecu_\indi=1$,
$\vecv_\indj=1$, and $\matM_{\indi,\indj}=1$.  Therefore,
$\vecu\trans \matM \vecv = 1$ if and only if there are
$\indi,\indj\in[n]$ such that $\iotasubij$ is a
homomorphism from $\queryphi$ to $\DB$.

We let
$\homDBtoquery$ be the (surjective) mapping from $\domn$ to
$\Vars(\queryphi)$ 
defined by $\homDBtoquery(\vertc_\inds)\defi\varz_\inds$,
$\homDBtoquery(\verta_\indi)\defi\varx$, and
$\homDBtoquery(\vertb_\indj)\defi\vary$ for all $\indi,\indj\in[n]$
and $\inds\in[\ell]$.
Clearly, $\homDBtoquery$ is a
homomorphism from $\DB$ to $\queryphi$.
Obviously, the following is true for every mapping
$\homh$ from $\Vars(\queryphi)$ to $\Adom(\DB)$ and for all $\varw\in\Vars(\queryphi)$:
  if $\homh(w)=\vertc_\inds$ for some $\inds\in[\ell]$, 
  then $(\homDBtoquery\circ h)(w)=\varz_\inds$;
  if $\homh(w)=\verta_\indi$ for some $\indi\in[n]$, 
  then $(\homDBtoquery\circ h)(w)=\varx$;
  if $\homh(w)=\vertb_\indj$ for some $\indj\in[n]$, 
  then $(\homDBtoquery\circ h)(w)=\vary$.

We define the partition
$\partPfull=\Set{\set{\vertc_1},\ldots,\set{\vertc_\ell},\setc{\verta_\indi}{\indi\in[\dimn]},\setc{\vertb_\indj}{\indj\in[\dimn]}}$
of $\domn$ and say that a mapping
$\homh$ from $\Vars(\queryphi)$ to $\Adom(\DB)$
\emph{respects} $\partPfull$,
if for each set from the partition there is exactly one element in the
image of $\homh$.
\begin{claim}\label{claim:respectinghom_uMv}
  $\vecu\trans \matM \vecv = 1$ \ $\iff$ \
  There exists a homomorphism $\homh\colon\queryphi\to \DB$ that
  respects $\partPfull$.
\end{claim}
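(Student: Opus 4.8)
The plan is to prove both directions of the equivalence by leveraging the homomorphism $\homDBtoquery\colon\DB\to\queryphi$ together with the structural observations already recorded just before the claim.

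For the direction ``$\Rightarrow$'': assuming $\vecu\trans\matM\vecv=1$, we obtain indices $\indi,\indj\in[n]$ with $\vecu_\indi=\vecv_\indj=\matM_{\indi,\indj}=1$, and by the preceding discussion the map $\iotasubij$ is a homomorphism from $\queryphi$ to $\DB$. Its image is exactly $\set{\vertc_1,\ldots,\vertc_\ell,\verta_\indi,\vertb_\indj}$, which meets each block of $\partPfull$ in precisely one element, so $\iotasubij$ respects $\partPfull$. This direction is essentially immediate.

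For the direction ``$\Leftarrow$'': suppose $\homh\colon\queryphi\to\DB$ respects $\partPfull$. Since $\homh$ hits each singleton block $\set{\vertc_\inds}$, the block $\setc{\verta_\indi}{\indi\in[n]}$, and the block $\setc{\vertb_\indj}{\indj\in[n]}$ exactly once, there are unique indices $\indi,\indj\in[n]$ with $\verta_\indi$ and $\vertb_\indj$ in the image, and $\homh(\varz_\inds)=\vertc_\inds$ for every $\inds$. Composing with $\homDBtoquery$ and using the recorded property of $\homDBtoquery\circ\homh$, I would argue that $\homDBtoquery\circ\homh$ is an endomorphism of $\queryphi$ that fixes every $\varz_\inds$ and sends $\set{\varx,\vary}$ into $\set{\varx,\vary}$; the key point is that it must be the identity on $\set{\varx,\vary}$ as well. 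To see this, I exploit atoms $\sgpsix$ and $\sgpsiy$: because $\sgpsix$ contains $\varx$ but not $\vary$, and in $\DB$ the relation of $\sgpsix$ was populated only by tuples of the form $\iotasubij(\sgpsix)$, the variable $\varx$ must be mapped by $\homh$ into $\setc{\verta_\indi}{\indi}$ — hence $\homDBtoquery(\homh(\varx))=\varx$; symmetrically $\homDBtoquery(\homh(\vary))=\vary$. Thus $\homh$ maps $\varx\mapsto\verta_\indi$, $\vary\mapsto\vertb_\indj$, $\varz_\inds\mapsto\vertc_\inds$, i.e.\ $\homh=\iotasubij$. Then the fact that $\homh$ is a homomorphism, applied to the atoms $\sgpsix$, $\sgpsiy$, $\sgpsixy$, forces $\vecu_\indi=1$, $\vecv_\indj=1$, and $\matM_{\indi,\indj}=1$ respectively, whence $\vecu\trans\matM\vecv=1$.

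The main obstacle is the bookkeeping in the ``$\Leftarrow$'' direction: one must carefully verify that ``respects $\partPfull$'' combined with being a homomorphism pins down $\homh$ to be exactly some $\iotasubij$, rather than, say, a map that sends $\varx$ to a $\vertb$-element or collapses variables in a way consistent with $\DB$'s padding relations. The crux is that the ``selector'' atoms $\sgpsix$ and $\sgpsiy$ — which distinguish $\varx$ from $\vary$ — have been deliberately populated in $\DB$ only along the $\iotasubij$-images, so no homomorphism can swap or merge the roles of $\verta$- and $\vertb$-elements; this is exactly where the hypothesis that $\queryphi$ fails condition~\eqref{item:hier-cond} via the triple $\sgpsix,\sgpsixy,\sgpsiy$ is used.
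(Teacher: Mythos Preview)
Your forward direction is correct and matches the paper.

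The backward direction contains a genuine gap. From ``$h$ respects $\partPfull$'' you conclude that $h(\varz_\inds)=\vertc_\inds$ for every $\inds$, but respecting the partition only says that the \emph{image} of $h$ meets each block in exactly one point; it does not prescribe which variable lands in which block. More seriously, your overall strategy is to show that $h$ itself equals some $\iotasubij$, and this is false in general: if $\queryphi$ has a nontrivial automorphism $\sigma$, then $h=\iotasubij\circ\sigma$ is a homomorphism $\queryphi\to\DB$ that respects $\partPfull$ (it has the same image as $\iotasubij$) yet differs from every $\iota_{i',j'}$. In that situation $\homDBtoquery\circ h=\sigma$, not the identity, so the endomorphism you build need not fix the $\varz_\inds$, nor need it send $\varx\mapsto\varx$. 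Your appeal to $\sgpsix$ also tacitly assumes that the relation symbol of $\sgpsix$ occurs in no other atom of $\queryphi$; this is not guaranteed here, since the claim is stated for arbitrary Boolean CQs, not only self-join free ones.

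The paper's remedy is to use only that $\homDBtoquery\circ h$ is a \emph{bijective} endomorphism of (the canonical structure of) $\queryphi$; since the structure is finite, its inverse is again a homomorphism. Then $h\circ(\homDBtoquery\circ h)^{-1}$ is a homomorphism $\queryphi\to\DB$, and a direct computation shows it sends $\varx\mapsto\verta_\indi$, $\vary\mapsto\vertb_\indj$, $\varz_\inds\mapsto\vertc_\inds$, i.e.\ it equals $\iotasubij$ for the unique $\indi,\indj$ determined by the image of $h$. From there your final step (reading off $\vecu_\indi=\vecv_\indj=\matM_{\indi,\indj}=1$ from the atoms $\sgpsix$, $\sgpsiy$, $\sgpsixy$) applies verbatim.
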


{%
\begin{proof}
  For one direction assume that $\vecu\trans \matM \vecv = 1$.  Then
  there are $\indi,\indj\in[\dimn]$ such that $\iotasubij$ is a homomorphism
  from $\phi$ to $\DB$
  that respects $\partPfull$.  For the other direction assume that
  $\homh\colon\queryphi\to \DB$ is a homomorphism that respects
  $\partPfull$.  It follows that $(\homDBtoquery\circ\homh)$ is a
  bijective homomorphism from $\queryphi$ to $\queryphi$.
  Therefore, 
  it can easily be verified that
  $\homh\circ(\homDBtoquery\circ\homh)^{-1}$ is a homomorphism from
  $\queryphi$ to $\DB$ which equals $\iotasubij$ for some
  $\indi,\indj\in[\dimn]$.  This implies that $\vecu\trans \matM \vecv = 1$.
\end{proof}
}
\begin{claim}\label{claim:respectinghom_core}
If  $\queryphi$ is a core, then every homomorphism $\homh\colon\queryphi\to \DB$ respects $\partPfull$. 
\end{claim}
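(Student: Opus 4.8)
The approach is to exploit the classical fact that every endomorphism of a homomorphic core is an automorphism. So let $\homh\colon\queryphi\to\DB$ be an arbitrary homomorphism. Since $\homh$ maps $\Vars(\queryphi)$ into $\Adom(\DB)\subseteq\domn$ and $\homDBtoquery$ is a homomorphism from $\DB$ to $\queryphi$ that is defined on all of $\domn$, the composition $\homDBtoquery\circ\homh$ is a homomorphism from $\queryphi$ to $\queryphi$, i.e.\ an endomorphism of $\queryphi$ (recall that $\queryphi$ is Boolean in this part of the argument, so there are no free variables to keep track of). Because $\queryphi$ is a core, I would conclude that $\homDBtoquery\circ\homh$ is a bijection on $\Vars(\queryphi)=\set{\varx,\vary,\varz_1,\ldots,\varz_\ell}$: if it were not surjective on the atoms of $\queryphi$, then its image would be a proper subquery of $\queryphi$ to which $\queryphi$ maps homomorphically, contradicting minimality of the core; and surjectivity on the atoms forces surjectivity — hence, since $\Vars(\queryphi)$ is finite, bijectivity — on $\Vars(\queryphi)$, using that every variable of $\queryphi$ occurs in some atom. (Alternatively one may simply invoke this standard property of cores, see e.g.\ \cite{AHV-Book}.)

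Next I would read off the fibres of $\homDBtoquery$ from its definition: $\homDBtoquery^{-1}(\varz_\inds)=\set{\vertc_\inds}$ for each $\inds\in[\ell]$, $\homDBtoquery^{-1}(\varx)=\setc{\verta_\indi}{\indi\in[\dimn]}$, and $\homDBtoquery^{-1}(\vary)=\setc{\vertb_\indj}{\indj\in[\dimn]}$ — and these three kinds of fibres are precisely the blocks of $\partPfull$. Fix $\inds\in[\ell]$. Since $\homDBtoquery\circ\homh$ is a bijection, there is exactly one variable $\varw$ with $(\homDBtoquery\circ\homh)(\varw)=\varz_\inds$, and for this $\varw$ necessarily $\homh(\varw)=\vertc_\inds$; thus the singleton block $\set{\vertc_\inds}$ contains exactly one element of the image of $\homh$. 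In the same way, exactly one variable $\varw$ satisfies $(\homDBtoquery\circ\homh)(\varw)=\varx$, that is, exactly one variable is mapped by $\homh$ into the block $\setc{\verta_\indi}{\indi\in[\dimn]}$, so this block meets the image of $\homh$ in exactly one element; the block $\setc{\vertb_\indj}{\indj\in[\dimn]}$ is handled symmetrically via $\vary$. Hence every block of $\partPfull$ contains exactly one element of the image of $\homh$, which is exactly what it means for $\homh$ to respect $\partPfull$.

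The only slightly delicate point is the implication ``$\queryphi$ is a core $\Rightarrow$ $\homDBtoquery\circ\homh$ is a bijection on $\Vars(\queryphi)$''; once this is in place, the remainder is a direct unwinding of the definitions of $\homDBtoquery$ and of ``respecting $\partPfull$'', and I expect no further obstacle there.
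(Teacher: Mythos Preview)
Your proof is correct and follows essentially the same approach as the paper: both use the composition $\homDBtoquery\circ\homh$ and the core property of $\queryphi$. The paper argues contrapositively in one line (if $\homh$ fails to respect $\partPfull$, then $\homDBtoquery\circ\homh$ lands in a proper subquery, contradicting that $\queryphi$ is a core), whereas you argue directly by first establishing that $\homDBtoquery\circ\homh$ is a bijection on $\Vars(\queryphi)$ and then reading off the conclusion from the fibre description of $\partPfull$; the underlying idea is identical, and your version simply spells out more of the details.
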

{%
\begin{proof}
  Assume for contradiction that $\homh\colon\queryphi\to \DB$ is a
  homomorphism that does not respect $\partPfull$.  Then
  $(\homDBtoquery\circ\homh)$ is a homomorphism from $\queryphi$ into a
  proper subquery of $\queryphi$, contradicting that $\queryphi$
  is a core.
\end{proof}
 }

\begin{proof}[Proof of Theorem~\ref{thm:modelchecking_joins_intro}]
  Assume for contradiction that \allowbreak the
  query answering problem
  for $\queryphi$ and hence for its non-\qhier
  core $\queryphicore$ can be solved with update time
  $\updatetime=\bigoh(\actdomsize^{1-\smalleps})$ 
  and answer time $\answertime=\bigoh(\actdomsize^{2-\smalleps})$.
  We can use this
  algorithm to solve \OuMv in time
  $\bigoh(\actdomsize^{3-\smalleps})$ as follows.

  In the
  preprocessing phase, we are given the $\dimn\times\dimn$ matrix
  $\matM$ and let $\vecu^{\,0}$, $\vecv^{\,0}$ be the all-zero vectors
  of dimension $\dimn$.
  We start the preprocessing phase of our evaluation algorithm for
  $\queryphicore$ with the empty database. As this database has constant size,
  the preprocessing phase finishes in constant time.
  Afterwards, we use at most $\dimn^2$ insert operations to build the
  database $\DB(\queryphicore,\matM,\vecu^{\,0},\vecv^{\,0})$.
  All this is done within time $\bigOh(n^2\updatetime)=\bigOh(n^{3-\epsilon})$.

  When a pair
  of vectors $\vecu^{\,\indt}$, $\vecv^{\,\indt}$ (for $t\in[\dimn]$) arrives, we change the
  current database \allowbreak
  $\DB(\queryphicore,\matM,\vecu^{\,\indt-1},\vecv^{\,\indt-1})$ into 
  $\DB(\queryphicore,\matM,\vecu^{\,\indt},\vecv^{\,\indt})$ by using
  at most $2\dimn$ update steps.  By the Claims~\ref{claim:respectinghom_uMv}
  and \ref{claim:respectinghom_core} we know that
  $(\vecu^{\,\indt})\trans \matM \vecv^{\,\indt} = 1$ if, and only if, there is
  a homomorphism from $\queryphicore$ to
  $\DB \deff \DB(\queryphicore,\matM,\vecu^{\,\indt},\vecv^{\,\indt})$.
  Hence, after answering the Boolean query $\queryphicore$ on
  $\DB$ in time $\answertime=\setsize{\adom{\DB}}^{2-\smalleps}= \bigoh(\dimn^{2-\smalleps})$
  we can output the value of $(\vecu^{\,\indt})\trans \matM \vecv^{\,\indt}$.
  The time of each step of the dynamic
  phase of \OuMv is bounded by
  $2\dimn\updatetime+\answertime=\bigoh(\dimn^{2-\smalleps})$.
  Thus, the
  overall running time sums up to $\bigoh(\dimn^{3-\smalleps})$, contradicting
  the \OMvcon by Theorem~\ref{thm:OuMv}.
\end{proof}
The same reduction from \OuMv to the query evaluation problem for
conjunctive queries is also useful for the lower bound on the
counting problem, provided that the query is not hierarchical.
If the query is hierarchical, but the quantifiers are not in the
correct form 
(such that the query is not \qhier), then \OuMv does not
provide us with the desired lower bound proof
and we have to stick to the \OVcon{} instead.
Another crucial difference between the Boolean and the non-Boolean
case is the following: the 
dynamic counting problem for the Boolean query 
$\query=\exists\varx\eexists\vary\body{\relE\varx\varx\und
\relE\varx\vary \und \relE\vary\vary}$ is easy (because its
core is $\exists\varx \,\relE\varx\varx$), whereas the 
dynamic counting problem for its non-Boolean version 
$\query(\varx,\vary)=\body{\relE\varx\varx\und \relE\varx\vary\und
\relE\vary\vary}$ 
is hard (because the query is a non-\qhier core).
To take care of this phenomenon we utilise the following lemma.

\newcommand{\TextOfLemmaSimulatingUnaryRelations}{
  Suppose that $\query(x_1,\ldots,x_k)$ is a conjunctive query where
  $\Setsize{\query(\DB)}$ can be computed with $\preprocessingtime$
  preprocessing time, $\updatetime$ update time
  $\countingtime$ counting time.
  Let $\setX_{\varx_1},\ldots,\setX_{\varx_k}$ be pairwise disjoint subsets
  of the domain.
  Then the number
  $\Setsize{\query(\DB) \cap
    \bigl(\setX_{\varx_1} \times \cdots \times
    \setX_{\varx_{k}}\bigr)}
  $
  can be computed with
  $2^{\bigoh(k)}\updatetime$
  update time after $2^{\bigoh(k)}\preprocessingtime$ preprocessing.
  }
\renewcommand{\TextOfLemmaSimulatingUnaryRelations}{
  Let $\query(\varx_1,\ldots,\varx_k)$ be a conjunctive query
  and $\setX_{\varx_1},\ldots,\setX_{\varx_k}$ be pairwise disjoint subsets
  of $\Dom$. 
  Suppose that for every database $\DB$ under consideration there is a
  homomorphism $\homg\colon \DB\to \query$ such that
  $\homg(\setX_{\varx_i})=\set{\varx_i}$ for all $i\in[k]$. 
  If
  $\Setsize{\query(\DB)}$ can be computed with $\preprocessingtime$
  preprocessing time, $\updatetime$ update time, 
  and $\countingtime$ counting time,
  then the number
  $\Setsize{\query(\DB) \cap
    \bigl(\setX_{\varx_1} \times \cdots \times
    \setX_{\varx_{k}}\bigr)}
  $
  can be computed with
  $2^{\bigoh(k)}(\updatetime+\countingtime)$ update time 
  and $\bigOh(1)$ counting time 
 after 
  $2^{\poly(\query)}+2^{\bigoh(k)}(\preprocessingtime+\countingtime)$ preprocessing.
  }

\begin{lemma}\label{lem:simulating_unary_relations}
  \TextOfLemmaSimulatingUnaryRelations{}
\end{lemma}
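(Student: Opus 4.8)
The plan is to compute $\Setsize{\query(\DB)\cap(\setX_{\varx_1}\times\cdots\times\setX_{\varx_k})}$ by an inclusion--exclusion over the subsets $S\subseteq[k]$, for which I would run $2^{k}$ copies of the given data structure in parallel. For each $S$ I would maintain, alongside $\DB$ itself, a database $\DB_{S}$ obtained from $\DB$ by a $\homg$-guided restriction whose purpose is that
\[
\Setsize{\query(\DB_{S})}\;=\;\Setsize{\setc{t\in\query(\DB)}{t_{i}\notin\setX_{\varx_{i}}\text{ for all }i\in S}}
\]
(so $\DB_{\emptyset}=\DB$); concretely, $\DB_{S}$ deletes those tuples of $\DB$ whose use in a homomorphism would be forced to place an $\setX_{\varx_{i}}$-constant into the coordinate read off by $\varx_{i}$, for some $i\in S$ --- and it is here that the hypothesis on $\homg$ enters, since $\homg$ tells us, for a tuple $\tau\in R^{\DB}$, that $R\,\homg(\tau_{1})\cdots\homg(\tau_{r})$ is the atom of $\query$ to which $\tau$ ``belongs''. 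I would keep, for every $S$, a separate instance of the given algorithm maintaining $\DB_{S}$ together with $\Setsize{\query(\DB_{S})}$, and a running value $N\defi\sum_{S\subseteq[k]}(-1)^{\card S}\Setsize{\query(\DB_{S})}$. Each $\Insert$ or $\Delete$ of a tuple $\tau$ into $\DB$ is forwarded to each $\DB_{S}$ --- it is either discarded or applied --- which costs $\bigoh(1)$ calls of $\UPDATE$ and one call of $\COUNT$ per instance, so the update time is $2^{\bigoh(k)}(\updatetime+\countingtime)$, after which $N$ is refreshed; the new $\COUNT$ routine just returns $N$ in time $\bigoh(1)$. Preprocessing runs the given $\PREPROCESS$ on the $2^{k}$ restrictions of $\DBstart$, each computable from $\DBstart$ in time $\poly(\query)\cdot\bigoh(\size{\DBstart})$, and evaluates the initial counts; a one-time, database-independent analysis of $\query$ (in particular the computation of its homomorphic core) accounts for the additive $2^{\poly(\query)}$ term.

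Correctness then reduces to two claims. The first is the purely combinatorial inclusion--exclusion step: with $A_{i}\defi\setc{t\in\query(\DB)}{t_{i}\in\setX_{\varx_{i}}}$ we have $\query(\DB)\cap(\setX_{\varx_1}\times\cdots\times\setX_{\varx_k})=\bigcap_{i\in[k]}A_{i}$, and the usual formula gives $\Setsize{\bigcap_{i}A_{i}}=\sum_{S\subseteq[k]}(-1)^{\card S}\Setsize{\bigcap_{i\in S}\ov{A_{i}}}$ with complements taken inside $\query(\DB)$ and $\bigcap_{i\in\emptyset}\ov{A_{i}}=\query(\DB)$; combined with the displayed property of $\DB_{S}$ this yields $N=\Setsize{\query(\DB)\cap(\setX_{\varx_1}\times\cdots\times\setX_{\varx_k})}$. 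The second, and real, claim is that displayed property. One inclusion is immediate, because $\DB_{S}$ is a sub-database of $\DB$ no longer containing the offending constants in the offending positions. For the other inclusion one starts from a homomorphism $\homh\colon\query\to\DB$ with $\homh(\varx_{i})\notin\setX_{\varx_{i}}$ for all $i\in S$ and must produce a homomorphism realising the same tuple which uses only surviving tuples of $\DB_{S}$; this is where $\homg(\setX_{\varx_{i}})=\set{\varx_{i}}$ is used: $\homg\circ\homh$ is an endomorphism of $\query$, the fibre condition forces every $\setX_{\varx_{i}}$-constant in the image of $\homh$ to lie at a variable sent to $\varx_{i}$ by $\homg\circ\homh$, and after passing to the homomorphic core $\queryphicore$ --- where $\homg\circ\homh$ restricts to an automorphism and hence fixes the distinguished free variables --- one can re-route $\homh$ through $\queryphicore$ to obtain the desired ``clean'' homomorphism with unchanged image on $\varx_{1},\ldots,\varx_{k}$.

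I expect this last re-routing to be the main obstacle: with only the homomorphism $\homg$ available, one has to rule out that a homomorphism realising a ``good'' output tuple is nonetheless forced to route through a forbidden constant of some $\setX_{\varx_{i}}$ --- at a quantified variable, or, when $\query$ has self-joins so that one tuple may legitimately serve several atoms, at a free variable other than $\varx_{i}$. Everything else --- the bookkeeping of the $2^{k}$ parallel instances, the update and preprocessing accounting, and the inclusion--exclusion identity --- is routine; the content of the lemma is precisely the interplay, mediated by $\homg$ and by passage to the core, between the passage from $\DB$ to $\DB_{S}$ and the homomorphisms $\query\to\DB$.
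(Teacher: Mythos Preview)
Your re-routing step has a genuine gap: the claim that, after passing to the core, $g\circ h$ ``fixes the distinguished free variables'' is false. An endomorphism of a core is indeed an automorphism of the underlying atom structure, but such an automorphism may permute the free variables nontrivially; nothing in the hypotheses rules this out. Concretely, take $\query(x,y)=\bigl(Exy\wedge Eyx\bigr)$ (its own core), $\setX_x=\{a\}$, $\setX_y=\{b\}$, and $E^{\DB}=\{(a,b),(b,a)\}$ with $g(a)=x$, $g(b)=y$. For $S=\{1\}$ your rule deletes a tuple whenever its $g$-assigned atom has an element of $\setX_x$ at the $x$-position; both $(a,b)$ (assigned to $Exy$) and $(b,a)$ (assigned to $Eyx$) have $a$ at that position, so $\DB_{\{1\}}$ is empty and $\Setsize{\query(\DB_{\{1\}})}=0$, whereas $\Setsize{\{t\in\query(\DB):t_1\notin\setX_x\}}=\Setsize{\{(b,a)\}}=1$. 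By symmetry the same holds for $S=\{2\}$ and $S=\{1,2\}$, so your inclusion--exclusion returns $2-0-0+0=2$ instead of the correct value $1$. For the output $(b,a)$ the composite $g\circ h$ swaps $x$ and $y$, and no re-routing can produce a homomorphism into the empty $\DB_{\{1\}}$ with the same image on $(x,y)$.

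The paper's proof confronts exactly this permutation symmetry head-on. Rather than restricting to sub-databases---which, in the presence of self-joins, cannot control at which atom a homomorphism uses a given tuple---it first counts the tuples whose coordinates land in $\setX_{\varx_1},\ldots,\setX_{\varx_k}$ \emph{up to permutation}, proves that this count equals $\Setsize{\Pi}$ times the desired number (where $\Pi$ is the set of permutations of the free variables extending to endomorphisms of $\query$, computed once in time $2^{\poly(\query)}$), and only then applies inclusion--exclusion. The intermediate counts are obtained not by deleting elements but by \emph{blowing up} each element of the relevant union of $\setX$-sets to $\ell$ copies for $\ell=1,\ldots,k$, yielding a Vandermonde system that can be inverted. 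This blow-up is invariant under how a homomorphism reuses tuples and is what makes the argument go through.
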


In the static setting, a similar result 
was shown by Chen and Mengel (see Section~7.1 in \cite{Chen.2015}) and it turns out
that our dynamic version can be proven using
the same techniques.
We remark that the lemma holds even if we drop the additional
requirement on the existence of the homomorphism $\homg$.
However, as the databases we construct in our lower bound proof have the desired
structure, this additional requirement helps to simplify
the proof.

\begin{proof}[Proof of Lemma~\ref{lem:simulating_unary_relations}]
We first reduce the given task to
counting tuples up to permutations,
that is, computing the size of the set  
$\setcalR(\DB) \deff$
\[
\Setc{(\verta_1,\ldots,\verta_k)\in\query(\DB)}{
  \text{there is a permutation $\perm\colon [k]\to [k]$ 
        with $\verta_i\in\setX_{\varx_{\perm(i)}}$ for all $i\in[k]$}}\,.
\]
Let $\Pi$ be the set of all permutations $\perm\colon [k]\to [k]$
such that the mapping $\big(\varx_i\mapsto \varx_{\perm(i)}\big)_{i\in[k]}$ extends to an
endomorphism on $\query$ 
(i.e., a homomorphism from $\query$ to $\query$).
We now show that
\begin{equation}
  \label{eq:4}
  \Setsize{
    \query(\DB) \cap\bigl(\setX_{\varx_1} \times \cdots \times
    \setX_{\varx_{k}}\bigr)
   }
   \,\cdot\,
   \Setsize{\Pi}
   \ \ = \ \ 
   \setsize{\setcalR(\DB)}\,.
\end{equation}
First note that if $(\verta_1,\ldots,\verta_k)\in\query(\DB) \cap
    \bigl(\setX_{\varx_1} \times \cdots \times
    \setX_{\varx_{k}}\bigr)$, then
    $(\verta_{\perm(1)},\ldots,\verta_{\perm(k)})\in\setcalR(\DB)$ for
    all $\perm\in\Pi$. Thus, the $\leq$-direction of \eqref{eq:4} follows because all $\setX_{\varx_i}$ are pairwise
    disjoint.
    For the other direction,
    consider an arbitrary tuple
    $(\verta_{1},\ldots,\verta_{k})\in\setcalR(\DB)$. In particular,
    there is a permutation $\perm:[k]\to[k]$ such that
    $\verta_i\in\setX_{x_{\perm(i)}}$ for all $i\in[k]$.
    Furthermore, since
    $(\verta_{1},\ldots,\verta_{k})\in\eval{\query}{\DB}$, there is a
    homomorphism $\homh\colon\query\to\DB$ with $\homh(\varx_\indi)=\verta_\indi$
    for all $\indi\in[k]$.
    When combining $\homh$ with the
    homomorphism $\homg\colon\DB\to\query$ given by the lemma's
    assumption, we obtain the endomorphism $f\deff(\homg\circ\homh)$ on
    $\phi$ that satisfies
    $f(\varx_\indi)=\varx_{\pi(i)}$ for all
    $\indi\in[k]$.
    Thus, $\pi\in\Pi$. To conclude the proof of the $\geq$-direction
    of \eqref{eq:4}, it remains to show that
    $(\verta_{\pi^{-1}(1)},\ldots,\verta_{\pi^{-1}(k)})\in
    \eval{\phi}{\DB}$, i.e., it remains to show that there is a
    homomorphism $\homh':\query\to\DB$ with
    $\homh'(\varx_\indi)=\verta_{\pi^{-1}(\indi)}$ for all $\indi\in[k]$.
    Since $\pi$ is a permutation, $\pi^{-1}=\pi^m$ for some $m\geq 1$.
    Thus, iterating $f$ for $m$ times yields the endomorphism $f^m$ on
    $\phi$ with
    $f^m(\varx_\indi)=\varx_{\pi^m(i)}=\varx_{\pi^{-1}(i)}$ for all
    $i\in[k]$. Therefore, choosing $h'\deff (h\circ f^m)$
    completes the proof of \eqref{eq:4}.

    As the set $\Pi$ depends only on the query and can be computed in time
    $2^{\poly(\query)}$ in the preprocessing phase, it suffices
    to store and update information on the number
    $\setsize{\setcalR(\DB)}$, whenever the database $\DB$ is updated.
    To do this efficiently, we store for every $\setI\subseteq [k]$ and every
    $\indj\in \set{0,\ldots,k}$
    the sizes of the auxiliary sets 
    \begin{equation}
      \label{eq:5}
      \setcalR_{\setI,\indj} \ \ \deff \ \
      \setc{\,(\verta_1,\ldots,\verta_k)\in\query(D)}{\Setsize{\set{\verta_1,\ldots,\verta_k}\cap
      \bigl(\textstyle\bigcup_{\indi\in\setI}\setX_{\varx_\indi}\bigr)}=\indj\,}.
    \end{equation}
    Note that
    \begin{equation*}
    \setcalR(\DB) \quad = \quad
    \setcalR_{[k],k} \ \setminus \
    \bigcup_{i\in [k]} \setcalR_{[k]\setminus\set{i},k}\,.
    \end{equation*}
    Hence, we can use the cardinalities of the sets
    $\setcalR_{\setI,\indj}$ (with index $j\deff k$)
    to compute $\setsize{\setcalR(\DB)}$ by the
    following application of the inclusion-exclusion principle
    \begin{align}
      \label{eq:4Strich}
      \setsize{\setcalR(\DB)}
      &\ \ = \ \ \setsize{\setcalR_{[k],k}} -
      \Bigl|\bigcup_{\indi\in[k]}\setcalR_{[k]\setminus\set{\indi},k}\Bigr|
      \ \ = \ \ \setsize{\setcalR_{[k],k}} - \sum_{\emptyset\neq
        \setI\subseteq [k]} (-1)^{\setsize{\setI}-1}\cdot
      \Bigl|\bigcap_{\indi\in\setI}\setcalR_{[k]\setminus\set{\indi},k}\Bigr|\\
      & \label{eq:5}
      \ \ = \ \ \sum_{
        \setI\subseteq [k]} (-1)^{\setsize{\setI}}\cdot
      \Bigl|\setcalR_{[k]\setminus\setI,k}\Bigr|.
    \end{align}
    In
    order to compute the numbers $\setsize{\setcalR_{\setI,\indj}}$
    efficiently, we
    consider for every $\setI\subseteq [k]$ and every 
    $\ell\in [k]$ the
    database $\DB_{\setI,\ell}$ which is obtained from $\DB$ by
    replacing every element from $\bigcup_{\indi\in\setI}\setX_{\varx_\indi}$
    by $\ell$ copies of itself. Precisely, we consider fresh
    elements $\langle\verta\rangle^1,\ldots,\langle\verta\rangle^k$
    for every $\verta$ from the domain and define for every $r$-ary
    $\relR\in\schema$
  \begin{equation*}
    \label{eq:6}
    R^{\DB_{\setI,\ell}} \deff
    \Set{\bigl(\langle\verta_1\rangle^{\inds_1},\ldots,\langle\verta_r\rangle^{\inds_r}\bigr)
      \colon
      (\verta_1,\ldots,\verta_r)\in
    R^{\DB}, \inds_\indi = 1\text{ for }\indi\notin\setI,\text{ and }\inds_\indi
    \in [\ell] \text{ for }\indi\in\setI}.
\end{equation*}
We maintain these 
$k2^k$ auxiliary databases $\DB_{\setI,\ell}$ in parallel and call the $\COUNT$ routines to determine the new
result sizes $\Setsize{\query(\DB_{\setI,\ell})}$.
All this can be done in time
$(\updatetime+\countingtime) 2^{\bigoh(k)}$.
Afterwards, we compute the numbers
$\Setsize{\setcalR_{\setI,\indj}}$ given the cardinalities
$\Setsize{\query(\DB_{\setI,\ell})}$ as follows.
For every $\ell\in[k]$ we have
\begin{equation*}
  \label{eq:7}
  \Setsize{\query(\DB_{\setI,\ell})} 
  \quad = \quad
  \sum_{\indj=0}^k \ \ell^\indj \cdot \Setsize{\setcalR_{\setI,\indj}}.
\end{equation*}
Hence, in order to compute the values of
$\Setsize{\setcalR_{\setI,\indj}}$ 
it suffices to solve $2^k$ systems of linear
equations of the form $V\vec x = \vec b$, where $V$ is a $k\times (k+1)$ Vandermonde
matrix with $V_{\ell,\indj}\deff\ell^\indj$ (for all $\ell\in[k]$ and $\indj\in\set{0,\ldots,k}$).
Therefore, all these values are uniquely determined and %
  can be computed in time $2^{\bigoh(k)}$.
Finally, we can use \eqref{eq:5} to compute and store the desired
value $\setsize{\setcalR(\DB)}$.
Note that the entire time used for performing an update is 
$\bigOh((\updatetime+\countingtime)2^{\bigOh(k)})$, as claimed.
\end{proof}

\begin{proof}[Proof of Theorem~\ref{thm:counting_CQ_lowerbound}]
  Let $\varphi$ be the non-\qhier homomorphic core of the given conjunctive query.
  For contradiction, assume that the counting problem for $\varphi$ can be solved with 
  update time $\updatetime=\bigOh(n^{1-\epsilon})$ and counting time $\countingtime=\bigOh(n^{1-\epsilon})$.

  We first handle the case where $\varphi$ does not satisfy the first condition \eqref{item:hier-cond}
  of Definition~\ref{def:consquanthier}.  Hence, as in the Boolean
  case, there are variables $\varx,\vary$ and atoms $\psix$,
  $\psixy$, $\psiy$ with
  $\Vars(\psix)\cap\set{\varx,\vary}=\set{\varx}$,
  $\Vars(\psixy)\cap\set{\varx,\vary}=\set{\varx,\vary}$, and
  $\Vars(\psiy)\cap\set{\varx,\vary}=\set{\vary}$.  Again we reduce
  from \OuMv, assume that
  $\Vars(\queryphi)=\set{\varx,\vary,\varz_1,\ldots,\varz_\ell}$, and
  let
  $\DB=\DB(\queryphi,\matM,\vecu,\vecv)$ be the $\sigma$-db defined above
  for a given $\dimn\times\dimn$ matrix $\matM$ and
  $\dimn$-dimensional vectors $\vecu$ and $\vecv$.

  Note that the lower bound for counting the query result does not follow from the lower bound for Boolean queries
  (Theorem~\ref{thm:modelchecking_joins_intro}), because it might be
  the case that the core of the Boolean version
  $\exists\varx\eexists\vary\eexists\varz_1\cdots\eexists\varz_\ell\,\query$
  of $\query$ actually \emph{is} \qhier (consider
  $\bodyjoin{\relE\varx\varx\und \relE\varx\vary\und \relE\vary\vary}$ for example).
  To take care of this, we apply
  Lemma~\ref{lem:simulating_unary_relations} and let 
  $\setX_{\varz_1}\deff\set{\vertc_1}$,
  \ldots,
  $\setX_{\varz_\ell}\deff\set{\vertc_\ell}$,
  $\setX_\varx\deff\setc{\verta_\indi}{\indi\in[\dimn]}$,
  $\setX_\vary\deff\setc{\vertb_\indi}{\indi\in[\dimn]}$
  be the corresponding sets
  that partition the domain of $\DB(\queryphi,\matM,\vecu,\vecv)$.
  Let us say that a homomorphism $\homh\colon\queryphi\to \DB$
  is \emph{\good} if $\homh(\varw)\in\setX_{\varw}$ for all
  $\varw\in\free(\query)$.
  As in the proof of Claim~\ref{claim:respectinghom_core}, we
  have that every \good homomorphism respects $\partPfull$ since
  otherwise 
  $(\homDBtoquery\circ\homh)$ would be a homomorphism from $\queryphi$ into a
  proper subquery of $\queryphi$, contradicting that $\queryphi$
  is a core.
  By Lemma~\ref{lem:simulating_unary_relations} we can count the
  number of 
  result tuples produced by good homomorphisms $h:\phi\to\DB$, and this can be done with counting time $\bigOh(1)$ and
  update time $\bigOh(n^{1-\epsilon})$ 
  (here, the factor $2^{\bigOh(k)}$ mentioned in the lemma is subsumed by the $\bigOh$-notation,
  since the query $\queryphi$ is fixed).
  By Claim~\ref{claim:respectinghom_uMv}, the number of result tuples produced by good homomorphisms $h:\phi\to\DB$ is
  $>0$ if, and only if, $\vecu\trans \matM \vecv = 1$.
  Thus, we can proceed in a similar way as in the proof of Theorem~\ref{thm:modelchecking_joins_intro}
  and solve 
  \OuMv within time $\bigOh(n^{3-\epsilon})$, contradicting the \OMvcon by
  Theorem~\ref{thm:OuMv}.

  For the second case suppose that $\query$ does not satisfy
  condition \eqref{item:quant-cond} of
  Definition~\ref{def:consquanthier}. Thus assume that there are two
  variables $\varx\in\free(\query), \vary\in\Vars(\query)\setminus\free(\query)$ and two atoms
  $\sgpsixy, \sgpsiy$ of $\query$ with
  $\Vars(\sgpsixy)\cap\set{\varx,\vary}=\set{\varx,\vary}$ and
  $\Vars(\sgpsiy)\cap\set{\varx,\vary}=\set{\vary}$. Without
  loss of generality we assume that $\query = \query(\varx,\varz_1,\dots,\varz_{\ellalt})$
  and $\Vars(\queryphi)=\set{\varx,\vary,\varz_1,\ldots,\varz_\ell}$
  for some $\ellalt\leq\ell$.  We
  reduce from \OV, generalising our example in
  Lemma~\ref{lem:ExyTy-count}.  Suppose that
  $\vecuset=\Set{\vecu^{\,1},\ldots,\vecu^{\,\dimn}}$ and
  $\vecvset=\Set{\vecv^{\,1},\ldots,\vecv^{\,\dimn}}$ are two sets of $n$
  Boolean vectors, each of length $d=\lceil\log^2 n\rceil$.  
  We fix a domain 
  $\domn$ that consists of $n+d+\ell$ elements $\setc{a_i}{i\in[n]}\cup\setc{b_j}{j\in[d]}\cup\setc{c_s}{s\in\ell}$, 
  and we let
  $\setX_\varx\deff\setc{\verta_\indi}{\indi\in[\dimn]}$,
  $\setX_\vary\deff \setc{\vertb_\indj}{\indj\in[d]}$, and
  $\setX_{\varz_\inds}\deff\set{\vertc_\inds}$ for all
  $\inds\in[\ell]$.
  As before,
  for $(\indi,\indj)\in [\dimn]\times[d]$ we let $\iotasubij$ be the
  injective mapping from $\Vars(\queryphi)$ to $\domn$ with
  $\iotasubij(\varx)=\verta_\indi$, 
  $\iotasubij(\vary)=\vertb_\indj$, and
  $\iotasubij(\varz_\inds)=\vertc_\inds$ for all $\inds\in[\ell]$.

  For each vector $\vecv\in \vecvset$ we define a $\sigma$-db
  $\DB=\DB(\queryphi,\vecuset,\vecv)$ with
  $\adom{\DB}\subseteq\domn$ as follows.
  For every atom $\sgpsi=\relR\varw_1\cdots\varw_\arityr$ in
  $\queryphi$ we include in $\relR^\DB$ the tuple
  $\big(\iotasubij(\varw_1),\ldots,\iotasubij(\varw_\arityr)\big)$
  \begin{mi}
  \item for all $(\indi,\indj)\in [\dimn]\times[d]$ such that the
    $\indj$-th component of $\vecu^{\,\indi}$ is 1, \ if
    $\sgpsi=\sgpsixy$, 
  \item for all $(\indi,\indj)\in [\dimn]\times[d]$ such that
    the
    $\indj$-th component of $\vecv$ is 1, \ if
    $\sgpsi=\sgpsiy$, \ and
  \item for all $(\indi,\indj)\in [\dimn]\times[d]$, \ if $\sgpsi\notin\set{\sgpsixy,\sgpsiy}$.
  \end{mi}
  From this definition it follows that
  \begin{equation}
    \label{eq:3}
    \query(\DB)\ \cap\
    \bigl(\setX_{\varx} \times \setX_{\varz_1} \times \cdots \times
    \setX_{\varz_{\ellalt}}\bigr)
    \ \ = \ \
    \Setc{\,(\verta_\indi,\vertc_1,\ldots,\vertc_{\ellalt})}{
        \indi\in[\dimn], \ (\vecu^{\,\indi})\trans\vecv \neq 0 \, }
  \end{equation}
  and hence the size of this set equals the number of vectors from
  $\vecuset$ that are non-orthogonal to $\vecv$.
  By Lemma~\ref{lem:simulating_unary_relations} we can 
  count
  this number using $\bigOh(1)$ counting time and $\bigOh(n^{1-\epsilon})$ update time.

  In order to solve \OV, we first apply 
  the preprocessing phase for the empty database and then perform
  $\bigoh(\dimn d)$ update steps
  to build $\DB(\query,\vecuset,\vecv^{\,1})$.
  Afterwards, we compute the number of vectors in $\vecuset$ that are
  non-orthogonal to $\vecv^{\,1}$. If this number is $<n$, we
  know that there is an orthogonal pair of vectors. Otherwise, we apply
  at most $d$ update steps to the relation of the atom $\psiy$ in order to obtain the database
  $\DB(\query,\vecuset,\vecv^{\,2})$ and check again if $\vecv^{\,2}$ is
  orthogonal to some vector in $\vecuset$.
  We repeat this procedure for all vectors in $\vecvset$ and this
  allows us to solve \OV in time 
  $\bigOh(ndn^{1-\epsilon})=
  \bigoh(\dimn^{2-\smalleps/2})$, 
  contradicting the \OVcon{}.
\end{proof}

We now prove the hardness result
for enumerating the results of self-join free conjunctive queries.

\begin{proof}[Proof of Theorem~\ref{thm:enumerating_CQ_intro}] 
If $\queryphi$ does not satisfy condition \eqref{item:hier-cond}
of Definition~\ref{def:consquanthier}, then 
also the query's Boolean version 
$\exists\varx_1\cdots\exists\varx_k\,\queryphi$ (where
$\free(\queryphi)=\set{\varx_1,\ldots,\varx_k}$) is non-\qhier; and this Boolean version is its own core, since
$\queryphi$ is self-join free.
The lower bound now follows immediately from
Theorem~\ref{thm:modelchecking_joins_intro}, because if the results of 
$\queryphi$ can be enumerated with preprocessing time $\preprocessingtime$, update time $\updatetime$, and 
delay $\delaytime$, then we can 
answer $\exists\varx_1\cdots\exists\varx_k\,\queryphi$ with preprocessing time $\preprocessingtime$, 
update time $\updatetime$, and answer time
$\answertime = \delaytime$: to answer the Boolean query, just start enumerating the answers of $\queryphi$
and answer $\Yes$ or $\No$ depending on whether the first output is a tuple or
the \EOE message.

  We can therefore
  assume that $\queryphi$ satisfies condition \eqref{item:hier-cond},
  but not \eqref{item:quant-cond} of Definition~\ref{def:consquanthier}.
  Hence there are a free
  variable $\varx$, a quantified variable $\vary$, and two atoms
  $\sgpsixy$, $\sgpsiy$ such that
  $\Vars(\sgpsixy)\cap\set{\varx,\vary}=\set{\varx,\vary}$ and
  $\Vars(\sgpsiy)\cap\set{\varx,\vary}=\set{\vary}$.  Without loss of
  generality we let
  $\Vars(\queryphi)=\set{\varx,\vary,\varz_1,\ldots,\varz_\ell}$ and
  $\free(\queryphi)=\set{\varx,\varz_1,\ldots,\varz_\ellalt}$ and
  consider the query $\queryphi(\varx,\varz_1,\ldots,\varz_\ellalt)$.

  For contradiction, assume that there is a dynamic algorithm that enumerates $\queryphi$
  with $\bigoh(\actdomsize^{1-\smalleps})$ delay and
  $\bigoh(\actdomsize^{1-\smalleps})$ update time.
  We want to use this
  algorithm to solve \OMv in time $\bigoh(n^{3-\smalleps})$.  For this
  we encode an $\dimn\times\dimn$ matrix $\matM$ and an
  $\dimn$-dimensional vector $\vecv$ into a 
  database, in a
  similar way as we have done for \OuMv:  We
  define the $\sigma$-db $\DB=\DB(\queryphi,\matM,\vecv)$ over
  the domain
  $\domn=\setc{\verta_\indi, \vertb_\indi}{\indi\in[\dimn]} \cup
  \setc{\vertc_\inds}{\inds\in[\ell]}$ consisting of $2n+\ell$
  elements. 
  For $\indi,\indj\in[\dimn]$ we let
  $\iotasubij:\Vars(\queryphi)\to\domn$ be the injective mapping that
  sets 
  $\iotasubij(\varx)=\verta_\indi$, 
  $\iotasubij(\vary)=\vertb_\indj$, and
  $\iotasubij(\varz_\inds)=\vertc_\inds$ for all $\inds\in[\ell]$.
  For every atom
  $\sgpsi=\relR\varw_1\cdots\varw_\arityr$ in $\queryphi$ we
  include in $\relR^\DB$ the tuple
  $(\iotasubij(\varw_1),\ldots,\iotasubij(\varw_\arityr))$
  \begin{itemize}
   \item
    for all $\indi,\indj\in[\dimn]$ such that $\vecv_\indj=1$, 
    if $\sgpsi=\sgpsiy$,
   \item
    for all $\indi,\indj\in[\dimn]$ such that
    $\matM_{\indi,\indj}=1$, if $\sgpsi=\sgpsixy$, and
   \item
    for all $\indi,\indj\in[\dimn]$, if
    $\sgpsi\notin\set{\sgpsixy,\sgpsiy}$.
  \end{itemize}
  Note that $\iotasubij$ is a homomorphism 
  from $\phi$ to $\DB$
  if and only if
  $\matM_{\indi,\indj}=1$ and $\vecv_\indj=1$.
  Recall that 
  $\eval{\queryphi}{\DB}=
  \Setc{\big(\homh(x),\homh(z_1),\ldots,\homh(z_{\ell'})\big)}{\text{$\homh$ is a homomorphism from $\phi$ to $\DB$}}$.
  Because $\queryphi$ is
  self-join free,
  every homomorphism from $\queryphi$ to $\DB$ agrees
  with some $\iotasubij$.  Therefore,
  $\eval{\queryphi}{\DB}$ is the set of all tuples 
  $(\verta_\indi,\vertc_1,\ldots,\vertc_\ellalt)$ for which there
  exists an index $\indj$ such that $\iotasubij$ is
  a homomorphism from $\queryphi$ to $\DB$.
  Hence,
  $(\verta_\indi,\vertc_1,\ldots,\vertc_\ellalt)\in\eval{\queryphi}{\DB}$
  if and only if $(\matM\vecv)_i=1$. 
  As $\setsize{\eval{\queryphi}{\DB}}\leq\dimn$, we can 
  enumerate the entire query result $\eval{\queryphi}{\DB}$ in time
  $\bigoh(\dimn\delaytime)=\bigoh(\dimn^{2-\smalleps})$, and from this query result we can 
  easily compute the vector $\matM\vecv$. All this is done within time $\bigoh(\dimn^{2-\smalleps})$.
  
  When a vector $\vecv^{\,\indt}$ arrives in the dynamic phase of \OMv, we
  update $\DB(\queryphi,\matM,\vecv^{\,\indt-1})$ to
  $\DB(\queryphi,\matM,\vecv^{\,\indt})$ using at most $\dimn$
  insertions or deletions of tuples in the relation of the atom
  $\sgpsiy$.  As this can be done in time
  $\dimn\updatetime=\bigoh(n^{2-\smalleps})$, we can compute
  $\matM\vecv^{\,\indt}$ in overall time
  $\bigoh(\dimn^{2-\smalleps})$ and hence solve \OMv in time
  $\bigoh(\dimn^{3-\smalleps})$.
\end{proof}

\makeatletter{}%
\newcommand{\pointerfarben}{%
 \tikzstyle{pointer}=[draw,->,thick,double]
 \tikzstyle{pointerx}=[draw,->,thick,double]
 \tikzstyle{pointery}=[draw,->,thick,double,color=red]
 \tikzstyle{pointeryy}=[draw,->,thick,double,color=blue]
 \tikzstyle{pointerz}=[draw,->,thick,double,color=violet]
 \tikzstyle{pointerzz}=[draw,->,thick,double,color=brown]
}

\section{Upper Bound}\label{sec:upperbound}

This section is devoted to the proof of Theorem~\ref{thm:upperbound}.
First of all, note that if we can prove the theorem for \emph{connected} \qhier queries, then the result easily follows
also for non-connected \qhier queries: 
If $\queryphi_1(\ov{x}_1),\ldots,\queryphi_j(\ov{x}_j)$ are the
connected components of a \qhier query
$\queryphi(\ov{x}_1,\ldots,\ov{x}_j)$ for (possibly empty) tuples 
$\ov{x}_1,\ldots,\ov{x}_j$ of variables, 
then during the preprocessing phase we build the data structures for
all the $\queryphi_i$, and 
when the database is updated, we update all these data structures.
Note that $\queryphi(\DB)=\queryphi_1(\DB)\times\cdots\times\queryphi_j(\DB)$. 
Thus, the $\COUNT$ routine for $\queryphi$ can perform a $\COUNT$ for each $\queryphi_i$ and output the number 
$\setsize{\queryphi(\DB)}= \prod_{i=1}^j \setsize{\queryphi_i(\DB)}$.
Accordingly, the $\ENUMERATE$ routine can easily be obtained by a nested loop through the 
$\ENUMERATE$ routines for all the $\queryphi_i$'s
For the remainder of this section we assume w.l.o.g.\ that
$\queryphi(x_1,\ldots,x_k)$ is a \emph{connected} \qhier conjunctive
query, $\Vars(\queryphi)=\set{x_1,\ldots,x_m}$ with
$0\leq k\leq m$, and $\queryphi$ is of the form
\begin{equation}
 \phi \quad = \quad
 \exists x_{k+1}\cdots\exists x_m \body{ \qatom_1 \und \cdots \und \qatom_\ell}\,.
\end{equation}
From Lemma~\ref{lem:querytree} we know that $\queryphi$ has a
\Querytree. We use the lemma's algorithm to construct in time
$\poly(\queryphi)$ a \Querytree $\querytree$ of
$\queryphi$. For the remainder of this section, we simply write $T$ to
denote $\querytree$.    
Recall that the vertex set $V$ of $T$ is the set of variables in $\queryphi$,
i.e., $V=\set{x_1,\ldots,x_m}$. We will write $\vroot$ to denote the
root node of $T$.

\subsection{Further notation}

The following notation will be convenient for describing and analysing
our algorithm.
For a node $\varv$ of $T$, we write $\pa[\varv]$ to denote the set of all
nodes of $T$ that occur in the path from $\vroot$ to $\varv$ in $T$
(including $\varv$),
and we let $\pa[\varv) := \pa[\varv] \setminus \{v\}$. 
$N(\varv) := \setc{\varu}{(\varv,\varu)\in E(T)}$ is the set of children of $\varv$ in $T$.
A node $\varv$ of $T$ \emph{represents} an atomic query
$\qatom$ iff $\pa[v]=\Vars(\qatom)$, i.e., 
the variables in $\pa[\varv]$ are exactly the variables in $\qatom$. 
For each $\varv \in V$, we write $\atm{\varv}$ for the set of all atoms
$\qatom_j$ of $\query$ that are represented by $\varv$. 
Note that $\atm{v}\subseteq\atoms(v)$.

An \emph{assignment} %
is a partial mapping from $\Var$ to $\Dom$. %
As usual, we write $\dom{\assign}$ for the domain of $\assign$. For a
set $S\subseteq \Var$, by  $\restrict{\assign}{S}$ we denote the
restriction of $\assign$ to $\dom{\assign}\cap S$.
For  $x\in \Var$ and
$a\in\Dom$
we write $\extend{\assign}{x}{a}$ for the assignment $\assign'$ with domain
$\dom{\assign}\cup\set{x}$, where $\assign'(x)=a$ and
$\assign'(y)=\assign(y)$ for all
$y\notin \dom{\assign}\setminus\set{x}$.
An assignment $\beta$ is called an \emph{expansion} of $\assign$
(for short: $\beta\supseteq\assign$) if
$\dom{\beta}\supseteq\dom{\assign}$ and $\restrict{\beta}{\dom{\assign}}=\assign$.
The \emph{empty assignment} $\emptyassign$ is the assignment with empty domain.  
For pairwise distinct variables $\varv_1,\ldots,\varv_d$ and constants
$a_1,\ldots,a_d\in\Dom$ we write
$\Assign{\varv_1,\ldots,\varv_{d}}{a_1,\ldots,a_d}$ to denote the
assignment $\assign$ with $\dom{\assign}=\set{\varv_1,\ldots,\varv_d}$
and $\assign(\varv_i)=a_i$ for all $i\in[d]$.

\subsection{The data structure} \label{sec:data_structure}

\newcommand{\itesingle}[1]{\ensuremath{\texttt{\upshape\large[}#1\texttt{\upshape\large]}}}
\renewcommand{\itesingle}[1]{\ensuremath{\bm{[}#1\bm{]}}}

We now describe the data structure $\DS$ that 
will be built by the $\PREPROCESS$ routine and maintained while
executing the $\UPDATE$ routine.
Our data structure for a given database $\DB$ represents so-called \emph{items}.
Each item is determined by a variable $v\in V$, an assignment
$\assign:\pa[v)\to \Dom$, and a constant $a\in\Dom$;
we will write 
$\ite{\varv}{\assign}{a}$ to denote this item. 
For an item $i=\ite{\varv}{\assign}{a}$
we write 
$\varitem{i}$, $\assitem{i}$, and $\constitem{i}$ to denote the item's
variable $v$, assignment $\alpha$, and constant $a$.
Moreover, for every item $i=\ite{\varv}{\assign}{a}$ and every child $\varu$ of $\varv$ in
$T$ there is a doubly linked list $\llist{i}{\varu}$ (the \emph{$\varu$-list} of $i$) which contains items
of the form $\ite{\varu}{\extend{\assign}{\varv}{a}}{b}$ and we have one pointer $\childitem{i}{\varu}$ that points
from $i$ to the first element in $\llist{i}{\varu}$.
It is important to note that not every item of the form
$\ite{\varu}{\extend{\assign}{\varv}{a}}{b}$ that is present in our
data structure will be contained in the corresponding list
$\llist{i}{\varu}$, but for those items that are contained we store 
two pointers
$\nextlistitem{i}$ and $\prevlistitem{i}$
to navigate in $\llist{i}{\varu}$.
The \emph{parent item} of an item of the form
$\ite{\varu}{\extend{\assign}{\varv}{a}}{b}$, where $u$ is a child of
$v$ in $T$, is defined to be the item
$\ite{\varv}{\assign}{a}$.

Let us now state which items are actually contained in our data structure.
\newcommand{\existenceConditionText}{%
An item $\ite{\varv}{\assign}{a}$ is present in our data structure
if and only if there is an atom $\qatom\in\atoms(\varv)$
such that there is an expansion
$\beta\supseteq\extend{\assign}{\varv}{a}$ with
$\dom{\beta}=\Vars(\qatom)$ and $\satisfy{\DB}{\beta}{\qatom}$.\xspace}
\existenceConditionText{}
It follows that every fact $\relR(a_1,\ldots,a_r)$ in the database
gives rise to a constant number of items and that the overall number of items in our
data structure is therefore linear in the size of the database.
The definition also ensures that whenever an item is present in our
data structure, then so is its parent item.

Let us now specify which of the present items are actually contained
in the corresponding list $\llist{i}{\varu}$. We will need
the following definition.

An item $\ite{\varv}{\assign}{a}$ is \emph{fit}
if and only if
there is an 
expansion
$\beta\supseteq\extend{\assign}{\varv}{a}$ such that
$\satisfy{\DB}{\beta}{\bigwedge_{\qatom\in\atoms(\varv)}\qatom}$.
The doubly linked list $\llist{i}{\varu}$ contains precisely those
items that are fit.
Being fit is a necessary requirement for participating in
the query result and this is why we exclude unfit items from the lists.
Note that whenever a tuple is inserted into or deleted from the database,
this affects the ``fit''-status of only a constant number of
items. Furthermore, provided we have 
constant-time 
access to the items, we can
update their status and include or exclude them from the
corresponding lists in constant time. 
In addition to the items, our data structure also has a designated 
$\start$ pointer that points to (the first element of) a doubly linked
list \start-list
$\startlist$,
which consists of all fit items of the form $\ite{\vroot}{\emptyset}{b}$.

In order to count the number of output tuples, we store for every item $i$
its \emph{weight} $\fitcount{i}$. The weight is used to measure the number of
tuples in the query result that extend the item's partial assignment
and is defined as follows.
For an item $i=\ite{\varv}{\assign}{a}$ we let
\begin{align*}
  \extensionsetof{i} &\ \ \deff \ \ \setc{\,\beta\supseteq\extend{\assign}{\varv}{a}}{\dom\beta =
    \textstyle\bigcup_{\qatom\in\atoms(\varv)}\Vars(\qatom);\quad
  \satisfy{\DB}{\beta}{\textstyle\bigwedge_{\qatom\in\atoms(\varv)}\qatom}\,},
  \\
  \fitcount{i} &\ \ \deff \ \ \Setsize{\extensionsetof{i}}.
\end{align*}
By definition we have that $i$ is fit if and only if
$\fitcount{i}>0$, and we will use the weights to quickly determine
whether an 
item becomes fit after an insertion of a tuple (or unfit
after a deletion).
In order to efficiently 
update these numbers, we store for every
list $\llist{i}{\varu}$ in our data structure as well for the
\start-list the sum of the weights of their elements
\begin{align}
  \label{eq:9}
 \listcount{i}{u} &\ \ \deff \ \ \textstyle\sum_{i'\in\llist{i}{\varu}} \fitcount{i'}
                    \qquad\text{ and }\qquad
 \startcount \ \ \deff \ \  \textstyle\sum_{i\in\startlist} \fitcount{i}.
\end{align}
The weights will also be used to determine the size of the query
result. For example, suppose that $\query$ is quantifier-free. Then it follows from the definition of that $\eval{\query}{\DB}$
is the disjoint union of the sets $\extensionsetof{i}$ for all
$i\in\startlist$ (where every tuple in $\eval{\query}{\DB}$ is viewed
as an assignment \mbox{$\beta:\free(\query)\to\Dom$).} Therefore,
$\setsize{\eval{\query}{\DB}}=\startcount$ and we can immediately
answer a $\COUNT$ request by reporting the stored value $\startcount$.
Now suppose that $\varphi$ is Boolean. Then we respond to an $\ANSWER$
request by reporting whether $\startcount>0$.
The remaining case, when $\query$ contains quantified and free
variables, is similar and will be handled in Section~\ref{sec:upperbound_counting}.
To illustrate the overall shape of our data structure,
let us consider the following example.
 
\ifthenelse{\isundefined{\ARTICLEFORMAT}}{
\makeatletter{}%
\begin{figure}[pt]
\begin{tikzpicture}[scale=0.85] 
 \node at (-4,2.2) {{\Querytree }$T$:};
 \node[draw,circle,thick] (x) at (0,2) {$\varx\vphantom{y'}$};
   \node at (1.4,2) {\scriptsize$\atm{x}=\emptyset$};
 \node[draw,circle,thick] (y) at (-1,1) {$\vary\vphantom{y'}$};
   \node at (-2.8,1) {\scriptsize$\atm{y}=\set{Exy}$};
 \node[draw,circle,thick] (y') at (1,1) {$\vary'$};
   \node at (2.9,1) {\scriptsize$\atm{y'}=\set{Exy'}$};
 \node[draw,circle,thick] (z) at (-2,0) {$\varz\vphantom{y'}$};
   \node at (-2.8,-0.65) {\scriptsize$\atm{z}=\set{Rxyz\,,\, Syxz}$};
 \node[draw,circle,thick] (z') at (0,0) {$\varz'\vphantom{y'}$};
   \node at (1.9,-0.2) {\scriptsize$\atm{z'}=\set{Rxyz'}$};
 \draw[->,thick] (x) -- (y);
 \draw[->,thick] (x) -- (y');
 \draw[->,thick] (y) -- (z);
 \draw[->,thick] (y) -- (z');

\begin{scope}[yshift=-1.5cm,xshift=-3cm]
  \node at (-1.5,-1) {$\DS$:};
 \tikzstyle{items1}=[fill=white,draw,thick,rectangle]
 \tikzstyle{items2}=[fill=black,thick,rectangle]
 \tikzstyle{list}=[thick]
 \tikzstyle{pointer}=[->,thick,double]
 \pointerfarben
 
 \node (start) at (-0.25,0) {$\start$};
 
 \newcommand{\xvalueformula}[1]{#1 * 0.7 + 1}
 \foreach \x in {0,1,2,3,4,5,6} {
  \pgfmathsetmacro{\xvalue}{\xvalueformula{\x}}
  \node[items1] (x_\x) at (\xvalue,0) {};
 }
 \pgfmathsetmacro{\xvalue}{\xvalueformula{7}}
 \node (x_7) at (\xvalue,0) {};
 
 \draw[list] (x_0) -- (x_1);
 \draw[list] (x_1) -- (x_2);
 \draw[list] (x_2) -- (x_3);
 \draw[list] (x_3) -- (x_4);
 \draw[list] (x_4) -- (x_5);
 \draw[list] (x_5) -- (x_6);
 \draw[list,dashed] (x_6) -- (x_7);
 \pgfmathsetmacro{\xvalue}{\xvalueformula{4}}
 \node[items2,label={above:$a$}] (sel_x) at (\xvalue,0) {};
 
 \renewcommand{\xvalueformula}[1]{#1 * 0.7 + 4}
 \foreach \x in {0,1,2,3,4} {
  \pgfmathsetmacro{\xvalue}{\xvalueformula{\x}}
  \node[items1] (y'_\x) at (\xvalue,-1) {};
 }
 \pgfmathsetmacro{\xvalue}{\xvalueformula{5}}
 \node (y'_5) at (\xvalue,-1) {};
 \draw[list] (y'_0) -- (y'_1);
 \draw[list] (y'_1) -- (y'_2);
 \draw[list,dashed] (y'_2) -- (y'_3);
 \draw[list] (y'_3) -- (y'_4);
 \draw[list,dashed] (y'_4) -- (y'_5);
 \pgfmathsetmacro{\xvalue}{\xvalueformula{2}}
 \node[items2,label={above:$b$}] (sel_y') at (\xvalue,-1) {};
 \pgfmathsetmacro{\xvalue}{\xvalueformula{3}}
 \node[items2,label={above:$d$}] (sel_y') at (\xvalue,-1) {};
 
 \renewcommand{\xvalueformula}[1]{(-1) * #1 * 0.7 + 3}
 \foreach \x in {3,2,1,0} {
  \pgfmathsetmacro{\xvalue}{\xvalueformula{\x}}
  \node[items1] (y_\x) at (\xvalue,-1) {};
 }
 \pgfmathsetmacro{\xvalue}{\xvalueformula{4}}
 \node (y_4) at (\xvalue,-1) {};
 \draw[list] (y_0) -- (y_1);
 \draw[list] (y_1) -- (y_2);
 \draw[list] (y_3) -- (y_2);
 \draw[list,dashed] (y_3) -- (y_4);
 \pgfmathsetmacro{\xvalue}{\xvalueformula{2}}
 \node[items2,label={above:$b$}] (sel_y) at (\xvalue,-1) {};
 
 \renewcommand{\xvalueformula}[1]{#1 * 0.7 + 3}
 \foreach \x in {0,1,2,3,4} {
  \pgfmathsetmacro{\xvalue}{\xvalueformula{\x}}
  \node[items1] (z'_\x) at (\xvalue,-2) {};
 }
 \pgfmathsetmacro{\xvalue}{\xvalueformula{5}}
 \node (z'_5) at (\xvalue,-2) {};
 \draw[list] (z'_0) -- (z'_1);
 \draw[list] (z'_1) -- (z'_2);
 \draw[list,dashed] (z'_2) -- (z'_3);
 \draw[list] (z'_3) -- (z'_4);
 \draw[list,dashed] (z'_3) -- (z'_4);
 \pgfmathsetmacro{\xvalue}{\xvalueformula{2}}
 \node[items2,label={above:$c$}] (sel_z') at (\xvalue,-2) {};
 \pgfmathsetmacro{\xvalue}{\xvalueformula{3}}
 \node[items2,label={above:$e$}] (sel_z') at (\xvalue,-2) {};
 
 \renewcommand{\xvalueformula}[1]{(-1) * #1 * 0.7 + 2}
 \foreach \x in {3,2,1,0} {
  \pgfmathsetmacro{\xvalue}{\xvalueformula{\x}}
  \node[items1] (z_\x) at (\xvalue,-2) {};
 }
 \pgfmathsetmacro{\xvalue}{\xvalueformula{4}}
 \node (z_4) at (\xvalue,-2) {};
 \draw[list] (z_0) -- (z_1);
 \draw[list] (z_1) -- (z_2);
 \draw[list] (z_3) -- (z_2);
 \draw[list,dashed] (z_3) -- (z_4);
 \pgfmathsetmacro{\xvalue}{\xvalueformula{2}}
 \node[items2,label={above:$c$}] (sel_z) at (\xvalue,-2) {};
 
 \draw[pointerx] (start) -- (x_0); 
 \draw[pointery] (sel_x) -- (y_0) node [midway, left, draw=none] {\
   \scriptsize $\vary$};
 \draw[pointeryy] (sel_x) -- (y'_0) node [midway, right, draw=none] {
   \scriptsize $\vary'$};
 \draw[pointerz] (sel_y) -- (z_0) node [midway, left, draw=none] {
   \scriptsize $\varz$};
 \draw[pointerzz] (sel_y) -- (z'_0) node [midway, right, draw=none] {\ 
   \scriptsize $\varz'$}; 
\end{scope}
\end{tikzpicture}
\caption{A $\Querytree$ $T$ along with the atoms represented by the
  tree's nodes, and an illustration of the data structure
  $\DS$ for Example~\ref{upperbound:ex1}.}\label{upperbound:ex1:fig1}
\end{figure}

}{
\makeatletter{}%
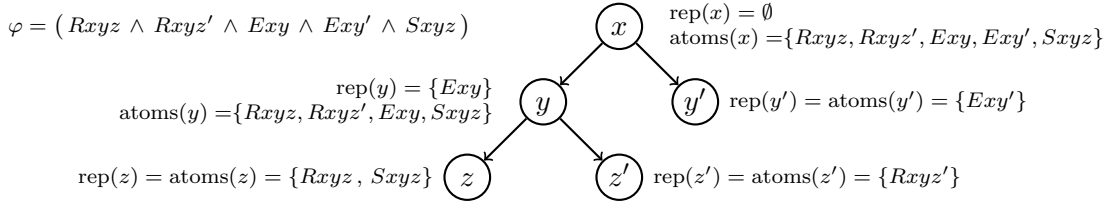
\begin{figure}[pt]
  \begin{tikzpicture}   
 \node[draw,circle,thick,inner sep = 0mm, minimum size =
 6mm,label=right:{\scriptsize\begin{tabular}{l}$\atm{x}=\emptyset$\\ $\atoms(x)=$\set{R\varx\vary\varz, R\varx\vary\varz', E\varx\vary, E\varx\vary', S\varx\vary\varz}\end{tabular}}] (x) at (0,2) {$\varx\vphantom{y'}$};
 \node[draw,circle,thick,inner sep = 0mm, minimum size = 6mm,label=left:{\scriptsize\begin{tabular}{r}$\atm{y}=\set{E\varx\vary}$\\ $\atoms(y)=$\set{R\varx\vary\varz, R\varx\vary\varz', E\varx\vary, S\varx\vary\varz}\end{tabular}}] (y) at (-1,1) {$\vary\vphantom{y'}$};
 \node[draw,circle,thick,inner sep = 0mm, minimum size = 6mm,label=right:{\scriptsize$\atm{y'}=\atoms(y')=\set{Exy'}$}] (y') at (1,1) {$\vary'$};
 \node[draw,circle,thick,inner sep = 0mm, minimum size = 6mm, label=left:{\scriptsize$\atm{z}=\atoms(z)=\set{Rxyz\,,\, Sxyz}$}] (z) at (-2,0) {$\varz\vphantom{y'}$};
 \node[draw,circle,thick,inner sep = 0mm, minimum size = 6mm, label=right:{\scriptsize$\atm{z'}=\atoms(z')=\set{Rxyz'}$}] (z') at (0,0) {$\varz'\vphantom{y'}$};
 \draw[->,thick] (x) -- (y);
 \draw[->,thick] (x) -- (y');
 \draw[->,thick] (y) -- (z);
 \draw[->,thick] (y) -- (z');
 \node at (-5,2) {\scriptsize $\query= 
\big(\,R\varx\vary\varz \uund R\varx\vary\varz' \uund E\varx\vary \uund E\varx\vary' \uund S\varx\vary\varz\,\big)
 $};
\end{tikzpicture}
\centering
\caption{A $\Querytree$ $T$ for the query in Example~\ref{upperbound:ex1} along with the atoms represented by the
  tree's nodes.}\label{upperbound:ex1:fig1}
\end{figure}

}
\begin{figure}[pt]
\centering
\renewcommand{\DBzero}[1]{#1}
\renewcommand{\DBone}[1]{}
\subfigure[Data structure $\DS_0$ for the database $\DBstart$.]{%
\makeatletter{}%

\begin{tikzpicture}
  \begin{scope}
 \tikzstyle{items1}=[fill=white,draw,thick,rectangle]
 \tikzstyle{items2}=[fill=black,thick,rectangle]
 \tikzstyle{list}=[-,thick]
 \tikzstyle{pointer}=[->,>=stealth,shorten >=0.5mm,shorten <=0.5mm]

 \pgfmathsetmacro{\boxsize}{0.35cm}
 \pgfmathsetmacro{\listdist}{0.1+0.35}
 \pgfmathsetmacro{\ydist}{1.5}
 
 \tikzstyle{itembox}=[fill=white,draw=black,rectangle,inner sep = 0, minimum size = \boxsize]
 \newcommand{\nrbox}[1]{\node[itembox,label=below:{\tiny\textbf{#1}}]}

 \nrbox{1} (a1) at (0.5 + 0*\listdist,0) {\small $a$};
 \nrbox{1} (a2) at (0.5 + 1*\listdist,0) {\small $b$};

 \draw[list] (a1) -- (a2);

 \nrbox{1} (bb1) at (1.8,0) {\small $a$};
 \nrbox{1} (bb2) at (1.8 + \listdist,0) {\small $b$};
 \nrbox{1} (bb3) at (1.8 + 2*\listdist,0) {\small $c$};

 \draw[list] (bb1) -- (bb2);
 \draw[list] (bb2) -- (bb3);

 \nrbox{1} (b2) at (3.5,0) {\small $c$};
 \nrbox{1} (b3) at (4.5,0) {\small $c$};
 \nrbox{1} (b4) at (6,0) {\small $b$};
 
 \nrbox{1} (e1) at (7 + 0*\listdist,0) {\small $a$};
 \nrbox{1} (e2) at (7 + 1*\listdist,0) {\small $b$};
 \nrbox{1} (e3) at (7 + 2*\listdist,0) {\small $c$};

 \draw[list] (e1) -- (e2);
 \draw[list] (e2) -- (e3);

 \nrbox{1} (f1) at (9.5,0) {\small $a$};
 
 \nrbox{1} (g1) at (10.5 + 0*\listdist,0) {\small $a$};
 \nrbox{1} (g2) at (10.5 + 1*\listdist,0) {\small $b$};
 \nrbox{1} (g3) at (10.5 + 2*\listdist,0) {\small $c$};

 \draw[list] (g1) -- (g2);
 \draw[list] (g2) -- (g3);

 \nrbox{6} (A1) at (0 + 3*\listdist,\ydist) {\small $e$};
 \nrbox{1} (A2) at (4,\ydist) {\small $f$};
  \draw[list] (A1) -- (A2); 

 \nrbox{1} (B1) at (5+0.3,\ydist) {\small $e$};
 \nrbox{1} (B2) at (5+0.3+\listdist,\ydist) {\small $f$};
 \draw[list] (B1) -- (B2);
 
 \nrbox{3} (C1) at (6.5,\ydist) {\small $g$};

 \nrbox{1} (E1) at (7.5 \DBone{+ 1} + 0*\listdist,\ydist) {\small $d$};
 \nrbox{1} (E2) at (7.5 \DBone{+ 1} + 1*\listdist,\ydist) {\small $g$};
 \nrbox{1} (E3) at (7.5 \DBone{+ 1} + 2*\listdist,\ydist) {\small $h$};

 \draw[list] (E1) -- (E2);
 \draw[list] (E2) -- (E3);

\node[rectangle,draw=white] at (12.5,2*\ydist) {};
 
\DBzero{
 \nrbox{0} (F1) at (10,\ydist) {\small $p$};
 \node at (10,\ydist+0.5) {\small \ite{y}{\Assign{x}{b}}{p}};

}%

\DBone{
 \nrbox{3} (F1) at (6.5+\listdist,\ydist) {\small $p$};
 \draw[list] (C1) -- (F1);
 
 \nrbox{1} (G1) at (7.5 \DBone{+ 1} + 3*\listdist,\ydist) {\small $p$};
 \draw[list] (E3) -- (G1);
}%

 \node[fill=white,draw=black,rectangle,inner sep =
 1mm,label=below:{\tiny\textbf{$\startcount=$ \DBone{38}\DBzero{23}}}] (start) at (2,0.25+2*\ydist) {\small start};
 \nrbox{14} (AA1) at (4.5,2*\ydist) {\small $a$};
 \nrbox{\DBzero{9}\DBone{24}} (AA2) at (7,2*\ydist) {\small $b$};
 \draw[list] (AA1) -- (AA2);

 \newcommand{\varlabel}[1]{node [midway,fill=white,inner sep = 0.5mm] {\small $#1$}}
 
 \draw[pointer] (start) -- (AA1) \varlabel{x};
 
 \draw[pointer] (AA1) -- (A1) \varlabel{y};
 \draw[pointer] (AA1) -- (B1) \varlabel{y'};
 \draw[pointer] (AA2) -- (C1) \varlabel{y};
 \draw[pointer] (AA2) -- (E1) \varlabel{y'};

 \draw[pointer] (A1) -- (a1) \varlabel{z};
 \draw[pointer] (A1) -- (bb1) \varlabel{z'};
 \draw[pointer] (A2) -- (b2) \varlabel{z};
 \draw[pointer] (A2) -- (b3) \varlabel{z'};
 
 \draw[pointer] (C1) -- (b4) \varlabel{z};
 \draw[pointer] (C1) -- (e1) \varlabel{z'};
 
 \draw[pointer] (F1) -- (f1) \varlabel{z};
 \draw[pointer] (F1) -- (g1) \varlabel{z'};

 \end{scope}
\end{tikzpicture}

\label{fig:DSzero}}
\renewcommand{\DBzero}[1]{}
\renewcommand{\DBone}[1]{#1}
\subfigure[Data structure $\DS_1$
   after
  inserting $E(b,p)$ into $\DBstart$.]{%
\makeatletter{}%

\begin{tikzpicture}
  \begin{scope}
 \tikzstyle{items1}=[fill=white,draw,thick,rectangle]
 \tikzstyle{items2}=[fill=black,thick,rectangle]
 \tikzstyle{list}=[-,thick]
 \tikzstyle{pointer}=[->,>=stealth,shorten >=0.5mm,shorten <=0.5mm]

 \pgfmathsetmacro{\boxsize}{0.35cm}
 \pgfmathsetmacro{\listdist}{0.1+0.35}
 \pgfmathsetmacro{\ydist}{1.5}
 
 \tikzstyle{itembox}=[fill=white,draw=black,rectangle,inner sep = 0, minimum size = \boxsize]
 \newcommand{\nrbox}[1]{\node[itembox,label=below:{\tiny\textbf{#1}}]}

 \nrbox{1} (a1) at (0.5 + 0*\listdist,0) {\small $a$};
 \nrbox{1} (a2) at (0.5 + 1*\listdist,0) {\small $b$};

 \draw[list] (a1) -- (a2);

 \nrbox{1} (bb1) at (1.8,0) {\small $a$};
 \nrbox{1} (bb2) at (1.8 + \listdist,0) {\small $b$};
 \nrbox{1} (bb3) at (1.8 + 2*\listdist,0) {\small $c$};

 \draw[list] (bb1) -- (bb2);
 \draw[list] (bb2) -- (bb3);

 \nrbox{1} (b2) at (3.5,0) {\small $c$};
 \nrbox{1} (b3) at (4.5,0) {\small $c$};
 \nrbox{1} (b4) at (6,0) {\small $b$};
 
 \nrbox{1} (e1) at (7 + 0*\listdist,0) {\small $a$};
 \nrbox{1} (e2) at (7 + 1*\listdist,0) {\small $b$};
 \nrbox{1} (e3) at (7 + 2*\listdist,0) {\small $c$};

 \draw[list] (e1) -- (e2);
 \draw[list] (e2) -- (e3);

 \nrbox{1} (f1) at (9.5,0) {\small $a$};
 
 \nrbox{1} (g1) at (10.5 + 0*\listdist,0) {\small $a$};
 \nrbox{1} (g2) at (10.5 + 1*\listdist,0) {\small $b$};
 \nrbox{1} (g3) at (10.5 + 2*\listdist,0) {\small $c$};

 \draw[list] (g1) -- (g2);
 \draw[list] (g2) -- (g3);

 \nrbox{6} (A1) at (0 + 3*\listdist,\ydist) {\small $e$};
 \nrbox{1} (A2) at (4,\ydist) {\small $f$};
  \draw[list] (A1) -- (A2); 

 \nrbox{1} (B1) at (5+0.3,\ydist) {\small $e$};
 \nrbox{1} (B2) at (5+0.3+\listdist,\ydist) {\small $f$};
 \draw[list] (B1) -- (B2);
 
 \nrbox{3} (C1) at (6.5,\ydist) {\small $g$};

 \nrbox{1} (E1) at (7.5 \DBone{+ 1} + 0*\listdist,\ydist) {\small $d$};
 \nrbox{1} (E2) at (7.5 \DBone{+ 1} + 1*\listdist,\ydist) {\small $g$};
 \nrbox{1} (E3) at (7.5 \DBone{+ 1} + 2*\listdist,\ydist) {\small $h$};

 \draw[list] (E1) -- (E2);
 \draw[list] (E2) -- (E3);

\node[rectangle,draw=white] at (12.5,2*\ydist) {};
 
\DBzero{
 \nrbox{0} (F1) at (10,\ydist) {\small $p$};
 \node at (10,\ydist+0.5) {\small \ite{y}{\Assign{x}{b}}{p}};

}%

\DBone{
 \nrbox{3} (F1) at (6.5+\listdist,\ydist) {\small $p$};
 \draw[list] (C1) -- (F1);
 
 \nrbox{1} (G1) at (7.5 \DBone{+ 1} + 3*\listdist,\ydist) {\small $p$};
 \draw[list] (E3) -- (G1);
}%

 \node[fill=white,draw=black,rectangle,inner sep =
 1mm,label=below:{\tiny\textbf{$\startcount=$ \DBone{38}\DBzero{23}}}] (start) at (2,0.25+2*\ydist) {\small start};
 \nrbox{14} (AA1) at (4.5,2*\ydist) {\small $a$};
 \nrbox{\DBzero{9}\DBone{24}} (AA2) at (7,2*\ydist) {\small $b$};
 \draw[list] (AA1) -- (AA2);

 \newcommand{\varlabel}[1]{node [midway,fill=white,inner sep = 0.5mm] {\small $#1$}}
 
 \draw[pointer] (start) -- (AA1) \varlabel{x};
 
 \draw[pointer] (AA1) -- (A1) \varlabel{y};
 \draw[pointer] (AA1) -- (B1) \varlabel{y'};
 \draw[pointer] (AA2) -- (C1) \varlabel{y};
 \draw[pointer] (AA2) -- (E1) \varlabel{y'};

 \draw[pointer] (A1) -- (a1) \varlabel{z};
 \draw[pointer] (A1) -- (bb1) \varlabel{z'};
 \draw[pointer] (A2) -- (b2) \varlabel{z};
 \draw[pointer] (A2) -- (b3) \varlabel{z'};
 
 \draw[pointer] (C1) -- (b4) \varlabel{z};
 \draw[pointer] (C1) -- (e1) \varlabel{z'};
 
 \draw[pointer] (F1) -- (f1) \varlabel{z};
 \draw[pointer] (F1) -- (g1) \varlabel{z'};

 \end{scope}
\end{tikzpicture}

\label{fig:DSone}}
 \caption{Data structure
   for the database in
   Example~\ref{upperbound:ex1}. Some unfit items are omitted.}
  \label{fig:datastructure}
\end{figure}

\begin{example}\label{upperbound:ex1}
  Consider the query \,$\query(x,y,z,y',z')\deff$
  \[
\big(\;R\varx\vary\varz \uund R\varx\vary\varz' \uund E\varx\vary \uund E\varx\vary' \uund S\varx\vary\varz\;\big)
  \]
 and 
 the database $\DBstart$ with
 $E^\DBstart=\set{
   (a,e),\,\allowbreak
   (a,f),\,\allowbreak
   (b,d),\,\allowbreak
   (b,g),\,\allowbreak
   (b,h)
 }$, 
 $S^\DBstart=\set{
   (a,e,a),\,\allowbreak 
   (a,e,b),\,\allowbreak 
   (a,f,c),\,\allowbreak 
   (b,g,b),\,\allowbreak 
   (b,p,a)
 }$, and 
 $R^\DBstart=S^\DBstart\,\cup\,\set{
   (a,e,c),\,\allowbreak
   (b,g,a),\,\allowbreak
   (b,g,c),\,\allowbreak
   (b,p,b),\,\allowbreak
   (b,p,c)
 }$. 
  Figure~\ref{upperbound:ex1:fig1} depicts a
  \Querytree $T$ for $\query$. 
  The data structure $\DS_0$ that represents the database $\DBstart$
  is shown in Figure~\ref{fig:DSzero}.
  Every box represents an item $i=\ite{\varv}{\alpha}{a}$ and contains
  the item's constant $a$. The number below each box is the weight
  $\fitcount{i}$ of the item.
  The 
  arrows labelled with $\varu$ represent the pointer from an item $i$
  to the first element of its $\varu$-list
  $\llist{i}{\varu}$. Horizontal lines between the items indicate
  pointers in the doubly linked
  lists $\llist{i}{\varu}$ and $\startlist$.
  There are seven further unfit items
  (omitted in the figure) that do
  not have connections to any other items:   $\ite{y}{\Assign{x}{b}}{d}$,
  $\ite{y}{\Assign{x}{b}}{h}$,
  $\ite{z}{\Assign{x,y}{a,e}}{c}$,
  $\ite{z}{\Assign{x,y}{b,g}}{a}$,
  $\ite{z}{\Assign{x,y}{b,g}}{c}$,
  $\ite{z}{\Assign{x,y}{b,p}}{b}$, and
  $\ite{z}{\Assign{x,y}{b,p}}{c}$.
 \end{example}

To enable quick access to the existing items, we store a number of 
arrays, where the elements of 
$\Dom$ are used to index the positions in an array.
For every $\varv\in V$ we store a $d$-ary array
$\larray{\varv}$, where $d := |\pa[\varv]|$.
We let $\varv_1,\ldots,\varv_d$ be the list of variables in
$\pa[\varv]$ in the order in which they are encountered in the path
from $T$'s root to the node $\varv$ (in particular, $\varv_1=\vroot$
and $\varv_d=\varv$).
For constants $a_1,\ldots,a_d\in\Dom$, the entry
$\larray{\varv}[a_1,\ldots,a_d]$ 
represents the item 
$\ite{\varv}{\Assign{\varv_1,\ldots,\varv_{d-1}}{a_1,\ldots,a_{d-1}}}{a_d}$,
if such an item exists. 
Otherwise, the entry $\larray{\varv}[a_1,\ldots,a_d]$ is initialised to
$\Null$.
This will enable us to check in constant time whether a particular item
exists, and if so, to also access this item within constant
time.
In the following, whenever an item 
$i=\ite{\varv}{\Assign{\varv_1,\ldots,\varv_{d-1}}{a_1,\ldots,a_{d-1}}}{a_d}$
is newly created, we tacitly let $\larray{\varv}[a_1,\ldots,a_d]\deff
i$, and we set $\larray{\varv}[a_1,\ldots,a_d]\deff
0$ if $i$ is deleted.
Note that while the support of all arrays (i.e., the non-$\Null$ entries) is
linear in the size of the current database $\DB$, a huge amount of
storage has to be reserved for these arrays.
In more practical settings one has to replace these arrays by more
space-efficient
data structures, such as suitable hash tables, that allow quick access to the
items.

\subsection{Enumeration}

\newcommand{\I}[1]{\multicolumn{1}{|c}{#1}}
\newcommand{\II}[1]{\multicolumn{1}{|c|}{#1}}
\begin{table}
  \centering
\begin{tabular}{|l||ccccccccccccccccccccccc|}
  \hline
  $\varx$  &{$a$}&{$a$}&{$a$}&{$a$}&{$a$}&{$a$}&{$a$}&{$a$}&{$a$}&{$a$}&{$a$}&{$a$}&{$a$}&{$a$}&\I{$b$}&{$b$}&{$b$}&{$b$}&{$b$}&{$b$}&{$b$}&{$b$}&{$b$}\\\hline
  $\vary$  &{$e$}&{$e$}&{$e$}&{$e$}&{$e$}&{$e$}&{$e$}&{$e$}&{$e$}&{$e$}&{$e$}&{$e$}&\I{$f$}&{$f$}&\I{$g$}&{$g$}&{$g$}&{$g$}&{$g$}&{$g$}&{$g$}&{$g$}&{$g$}\\\hline
  $\varz$  &{$a$}&{$a$}&{$a$}&{$a$}&{$a$}&{$a$}&\I{$b$}&{$b$}&{$b$}&{$b$}&{$b$}&{$b$}&\I{$c$}&{$c$}&\I{$b$}&{$b$}&{$b$}&{$b$}&{$b$}&{$b$}&{$b$}&{$b$}&{$b$}\\\hline
  $\varz'$ &{$a$}&{$a$}&\I{$b$}&{$b$}&\I{$c$}&{$c$}&\I{$a$}&{$a$}&\I{$b$}&{$b$}&\I{$c$}&{$c$}&\I{$c$}&{$c$}&\I{$a$}&{$a$}&{$a$}&\I{$b$}&{$b$}&{$b$}&\I{$c$}&{$c$}&{$c$}\\\hline
  $\vary'$ &{$e$}&\I{$f$}&\I{$e$}&\I{$f$}&\I{$e$}&\I{$f$}&\I{$e$}&\I{$f$}&\I{$e$}&\I{$f$}&\I{$e$}&\I{$f$}&\I{$e$}&\I{$f$}&\I{$d$}&\I{$g$}&\I{$h$}&\I{$d$}&\I{$g$}&\I{$h$}&\I{$d$}&\I{$g$}&\II{$h$}\\
  \hline
\end{tabular}
  \caption{Enumeration of $\eval{\query}{\DBstart}$ using data
    structure $\DS_0$ depicted in Figure~\ref{fig:DSzero}.}
  \label{tab:enumeration}
\end{table}

We now discuss how the data structure can be used to enumerate the
query result with constant delay.
For our Example~\ref{upperbound:ex1}, the $23$ result tuples of the
enumeration process are shown in Table~\ref{tab:enumeration}.
To enumerate the result of a non-Boolean conjunctive query
$\query(\varx_1,\ldots,\varx_k)$,
let $T'$ be the subtree of $T$ induced on $V'\deff\free(\query)=\set{\varx_1,\ldots,\varx_k}$.
Note that by the definition of a \Querytree, we know that $T'$ is
connected and contains $\vroot$.
For each node $\varv$ of $T'$, let us fix an (arbitrary) linear order
on the children of $\varv$ in $T'$. In our example query we have
$T'=T$ and we let
$\vary<\vary'$ and $\varz<\varz'$.
If the \start-list is empty, the $\ENUMERATE$ routine stops
immediately with output $\EOE$. Otherwise, we proceed as follows
to determine the first tuple in the query result. 
Let $i_{\vroot}$ be the
first item in the \start-list.  Inductively, for every $\varv\in V'$ for which $i_{\varv}$
has been chosen already, we choose $i_{\varu}$ for every child $\varu$
of $\varv$ in $T'$ 
by letting $i_{\varu}$ be the first item in the $\varu$-list of
$i_{\varv}$.
From the resulting items $\ITEMS\deff (i_{\varv})_{\varv\in V'}$
we obtain the first tuple $(a_1,\ldots,a_k)$ in the query result by
letting 
$a_\ell$ be the constant of item $i_{\varx_\ell}$ for each $\ell\in[k]$.
Thus, within time $\bigoh(k)$ we can output the first tuple
that belongs to the query result.

To jump from one output tuple to the next, we proceed as follows.
Assume that $\ITEMS=(i_{\varv})_{\varv\in V'}$ are the items which
determined the result tuple $(a_1,\ldots,a_k)$ that has just been output.
Let $y_1,\ldots,y_k$ be the list
of all nodes $V'$ of $T'$ in \emph{document order}, i.e., obtained by a
pre-order depth-first left-to-right traversal of $T'$ (in particular,
$y_1=\vroot$).
In Example~\ref{upperbound:ex1}, the vertices thus are ordered
$\varx,\vary,\varz,\varz',\vary'$.

Determine the maximum index $j\in[k]$ such that the
item $i_{\vary_j}$ is \emph{not} the last item of its doubly linked list.
If no such $j$ exists, stop with output $\EOE$. Otherwise, let $i'_{\vary_j}$
be the item indicated by $\nextlistitem{i_{\vary_j}}$. 
For every $\mu<j$ we let $i'_{\vary_\mu}\deff i_{\vary_\mu}$.
For $\mu = j{+}1,\ldots,k$ the item $i'_{\vary_\mu}$ is determined
inductively (along the document order) to be the first element of the
$\mu$-list of its parent.

From the resulting items $i'_{\vary_1},\ldots,i'_{\vary_k}$ 
we obtain the next tuple $(a'_1,\ldots,a'_k)$ in the query result by
letting 
$a'_\ell$ be the constant of item $i'_{\varx_\ell}$
for each $\ell\in[k]$.
Note that the delay between outputting two consecutive result tuples
is $\bigoh(k)$.  The result of the enumeration process for
Example~\ref{upperbound:ex1} is given in Table~\ref{tab:enumeration},
where the change of an item for a variable $\vary_\ell$ (i.e.
$i'_{\vary_\ell}\neq i_{\vary_\ell}$) for two consecutive tuples is
indicated by a separating line.

The pseudo-code for the described $\ENUMERATE$ routine is given in 
Algorithm~\ref{alg:enum}.
Our next goal is to prove the correctness of this algorithm. To this
end, we let
\begin{equation}
 \extensionsetfreeof{i}
 \ \ \deff \ \ 
 \setc{\,\restrict{\beta}{\free(\query)}}{\beta\in\extensionsetof{i}\,}
\end{equation}
and note that $(a_1,\ldots,a_k)\in\eval{\query}{\DB}$ if, and
only if, there is an $i\in\startlist$ such that the corresponding
$\beta\colon \free(\query)\to \Dom$ with $\beta(\varx_j)=a_j$ for all
$j\in[k]$ is contained in $\extensionsetfreeof{i}$.

\begin{algorithm}
\caption{Enumeration algorithm}
\label{alg:enum}
\begin{algorithmic}[1]
  \State \textbf{Input:} Data structure $\DS$, tree $T'$ with vertices
  $V'=\set{y_1,\ldots,y_k}$ (in document order) for query
  $\query(x_1,\ldots,x_k)$ (with $\set{x_1,\ldots,x_k}=V'$) 
  \State \textbf{Variables:} Item variables
  $i_{y_\ell}=\ite{y_\ell}{\alpha}{a_\ell}$ for $\ell\in[k]$.
 \State
 \If{$\startlist = \emptyset$}
 \State Halt and output the end-of-enumeration message $\EOE$.
 \EndIf
 \State Let $i_{y_1}$ be the first element of the start-list $\startlist$.
 \For{$\mu = 2$\ \textbf{to}\ $k$}
 \State $i_{y_\mu} \gets \textsc{Set}((i_{y_1},\ldots,i_{y_{\mu{-}1}}),\mu)$
 \EndFor
 \State \textsc{visit}$(i_{y_1},\ldots,i_{y_{k}})$
 \State
 \Function{Set}{$\ITEMS,\mu$} 
 \State \textbf{Input:} $\mu \in \set{2,\ldots,k}$ and a 
   list of items
 $\ITEMS=(i_{y_1},\ldots,i_{y_{\mu-1}})$.
 \State Let $i_{y_\ell} \in \ITEMS$ with $\ell \in [\mu-1]$ be the
 item such that $y_\ell$ is the parent of $y_\mu$ in 
 $T'$. 
\State \Comment{Note that $\ell < \mu$ for the parent $y_\ell$ of
  $y_\mu$,  
 since
 $y_1,\ldots,y_\mu$ are sorted in document order.}
 \State \Return the first element of the $y_\mu$-list of $i_{y_\ell}$
 \EndFunction
 \State
 \Procedure{visit}{$\mathcal{I}$}
 \State \textbf{Input:} A list of items $\ITEMS=(i_{y_1},\ldots,i_{y_{k}})$.
 \State Output the tuple $(a_1,\ldots,a_k)$, where $a_\ell$ is the
 constant of item 
  $i_{x_\ell}$, for all $\ell\in[k]$.
 \If{every item in $\mathcal{I}$ is the last of its list}
 \State Halt and output the end-of-enumeration message $\EOE$.
 \EndIf
 \State Let $j\in[k]$ be maximal such that $i_{y_j}$ is not the last
 item of its list.
 \For{$\mu = 1$ \textbf{to} $j{-}1$}
 \State $i'_{y_\mu} \gets i_{y_\mu}$
 \EndFor
 \State $i'_{y_j} \gets \nextlistitem{i_{y_j}}$
 \For{$\mu = j{+}1$ \textbf{to} $k$}
 \State $i'_{y_\mu} \gets \textsc{Set}((i'_{y_1},\ldots,i'_{y_{\mu-1}}),\mu)$
 \EndFor
 \State \textsc{visit}$(i'_{y_1},\ldots,i'_{y_{k}})$.
 \EndProcedure
\end{algorithmic}
\end{algorithm}

For every item $i = \ite{v}{\alpha}{a}$ let $\beta^i \isdef \alpha\frac{a}{v}$.
By construction of the algorithm we know that 
for any $\ITEMS=(i_{y_1},\ldots,i_{y_k})$ for which the procedure \textsc{visit}$(\ITEMS)$ is executed by 
Algorithm~\ref{alg:enum}, and 
for any two variables
$y_j,y_k \in V'$ with $j < k$, 
the following is true:

\begin{itemize}
 \item If $y_j \in \pa[y_k)$, then $\assignb^{i_{y_j}} \subseteq \assignb^{i_{y_k}}$.
 \item Otherwise, $\assignb^{i_{y_j}} \cap \assignb^{i_{y_k}} =  \assignb^{i_w}$, where
 $w$ is 
 the lowest common ancestor of $y_j$ and $y_k$ in $T'$.
\end{itemize}
In particular, 
$\assignb_\ITEMS \isdef \bigcup_{v \in V'} \assignb^{i_v}$ is a well-defined function 
$\assignb_\ITEMS\colon\free(\varphi) \to \Dom$. 

The next lemma establishes the correctness of Algorithm~\ref{alg:enum}.

\begin{lemma}\label{lem:enum-correctness-1}\ \hfill
\begin{enumerate}[(a)]
\item\label{lem:enum-correctness-a}
If $\ITEMS=(i_{y_1},\ldots,i_{y_k})$ is a
list of items for which
the procedure \textsc{visit}$(\ITEMS)$ is executed by 
Algorithm~\ref{alg:enum}, then
$\assignb_\ITEMS \in \bigcup_{i\in\startlist}\extensionsetfreeof{i}$.
\item\label{lem:enum-correctness-b}
For every $\beta \in\bigcup_{i\in\startlist}\extensionsetfreeof{i}$ we have
$\beta=\beta_\ITEMS$ for some list of items
$\ITEMS=(i_{y_1},\ldots,i_{y_k})$ for which
the procedure \textsc{visit}$(\ITEMS)$ is executed by 
Algorithm~\ref{alg:enum}.
\item\label{lem:enum-correctness-c}
Algorithm~\ref{alg:enum} does not output duplicates.
\end{enumerate}
\end{lemma}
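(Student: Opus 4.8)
The plan is to introduce \emph{valid lists of items}, show that the lists on which Algorithm~\ref{alg:enum} calls \textsc{visit} are exactly the valid lists and that each is reached exactly once, and then read off (a)--(c). Let $y_1,\dots,y_k$ enumerate $V'=\free(\queryphi)$ in document order (the case $k=0$ is trivial: the algorithm just tests $\startlist\neq\emptyset$), and call $\ITEMS=(i_{y_1},\dots,i_{y_k})$ \emph{valid} if $i_{y_1}\in\startlist$ and, for each $\mu\geq 2$, $i_{y_\mu}\in\llist{i_p}{y_\mu}$, where $p$ is the parent of $y_\mu$ in $T$; note that $p$ is also the parent of $y_\mu$ in $T'$ (as $V'$ is connected and contains $\vroot$, it is upward closed from the root) and that $p=y_{\mu'}$ for some $\mu'<\mu$. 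Two bookkeeping facts will be used repeatedly: every item occurring in a valid list is \emph{fit} (membership in $\startlist$ or in any $\llist{\cdot}{\cdot}$ forces fitness by construction of $\DS$); and whenever $i_p$ is fit and $u$ is a child of $p$ in $T$, the list $\llist{i_p}{u}$ is non-empty (restrict a fitness witness of $i_p$ to a suitable domain, using $\atoms(u)\subseteq\atoms(p)$). In particular, every \textsc{Set} call in Algorithm~\ref{alg:enum} succeeds, every list passed to \textsc{visit} is valid, and, as before the lemma, $\assignb_\ITEMS\deff\bigcup_{v\in V'}\beta^{i_v}$ is a well-defined function $V'\to\Dom$ with $\assitem{i_v}=\restrict{\assignb_\ITEMS}{\pa[v)}$ and $\constitem{i_v}=\assignb_\ITEMS(v)$, so a valid list is determined by $\assignb_\ITEMS$.

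For soundness (a), fix a valid list $\ITEMS$ and put $\assignb\deff\assignb_\ITEMS$. Since the variable set of every atom is a path from $\vroot$, the set $\atoms(\vroot)$ is the set of all atoms of $\queryphi$, so $\extensionsetfreeof{i_{y_1}}$ is exactly the set of restrictions to $\free(\queryphi)$ of valuations $\gamma$ on $\Vars(\queryphi)$ with $\gamma(\vroot)=\constitem{i_{y_1}}$ satisfying the quantifier-free part of $\queryphi$; since $i_{y_1}\in\startlist$, it suffices to construct such a $\gamma$ extending $\assignb$. Using that each $i_v$ ($v\in V'$) is fit, fix witnessing valuations $\gamma_v\supseteq\restrict{\assignb}{\pa[v]}$ with $\dom{\gamma_v}=\bigcup_{\qatom\in\atoms(v)}\Vars(\qatom)$ and $\satisfy{\DB}{\gamma_v}{\Und_{\qatom\in\atoms(v)}\qatom}$. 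Define $\gamma(v)\deff\assignb(v)$ for $v\in V'$ and $\gamma(z)\deff\gamma_{w(z)}(z)$ for a quantified $z$, where $w(z)$ is the deepest node of $V'$ on the path $\pa[z]$. The key point is that for every atom $\qatom$, with $v_\qatom$ its representing node, every quantified $z\in\Vars(\qatom)=\pa[v_\qatom]$ satisfies $w(z)=w_\qatom$, the deepest $V'$-node on $\pa[v_\qatom]$; hence $\restrict{\gamma}{\Vars(\qatom)}=\restrict{\gamma_{w_\qatom}}{\Vars(\qatom)}$, and $\gamma_{w_\qatom}$ satisfies $\qatom$ because $\qatom\in\atoms(w_\qatom)$. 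Thus $\gamma$ satisfies all atoms, extends $\assignb$, and sends $\vroot$ to $\constitem{i_{y_1}}$, giving $\assignb\in\extensionsetfreeof{i_{y_1}}$.

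For completeness (b), given $\beta\in\extensionsetfreeof{i_0}$ with $i_0=\ite{\vroot}{\emptyset}{a}\in\startlist$, pick a witness $\gamma\supseteq\set{\vroot\mapsto a}$ on $\Vars(\queryphi)$ satisfying the quantifier-free part with $\restrict{\gamma}{V'}=\beta$, and set $i_v\deff\ite{v}{\restrict{\gamma}{\pa[v)}}{\gamma(v)}$ for $v\in V'$. Each $i_v$ is fit (witnessed by $\gamma$, using $\atoms(v)\subseteq\atoms(\vroot)$) and has the correct shape to lie in $\startlist$ (for $v=\vroot$) or in $\llist{i_p}{v}$ (for its parent $p$), so $(i_v)_{v\in V'}$ is a valid list whose associated assignment is $\beta$; by the characterization of visited lists, $\beta$ is output. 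For no duplicates (c): the map $\ITEMS\mapsto\assignb_\ITEMS$ is injective on valid lists (by the determinacy noted above), and \textsc{visit} is called on each valid list exactly once, so no tuple is output twice.

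The main obstacle is the remaining claim, that Algorithm~\ref{alg:enum} calls \textsc{visit} on each valid list exactly once; this is the only place where the document order and the fixed linear orders on the doubly linked lists matter. I would treat the nested loop as an odometer over valid lists ordered lexicographically, with $y_k$ least significant, comparing two valid lists at their first point of disagreement in document order --- well-defined because all strict ancestors of that node precede it in document order and hence agree on the two lists, so the relevant doubly linked list is the same for both. I would then check three things: the list built before the first \textsc{visit} call is lexicographically least; the body of \textsc{visit} --- which advances the highest-index coordinate $y_j$ that is not the last element of its list and resets all coordinates after $y_j$ (processed in document order, so each one's parent is already fixed) to the first element of its recomputed list --- produces exactly the lexicographic successor among valid lists; and the recursion halts right after the lexicographically greatest valid list. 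An induction on the lexicographic order then completes the proof. The delicate part is verifying that this ``carry'' always lands on a valid list and on the immediate successor, which is exactly where the tree-shaped dependence of the lists on their ancestors, together with the non-emptiness of a fit item's child-lists, comes in.
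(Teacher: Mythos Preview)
Your proof is correct and takes essentially the same route as the paper: soundness via stitching fitness witnesses of the items into a global satisfying assignment, completeness by reading items off such a global witness, and no-duplicates via the lexicographic (odometer) traversal order together with the observation that $\assignb_\ITEMS$ determines $\ITEMS$. Your explicit framing via ``valid lists'' and the direct construction in (a) assigning each quantified $z$ the value from the fitness witness at its deepest free ancestor $w(z)$ are clean presentational choices (and in fact handle the case of an internal node of $T'$ with quantified $T$-children more carefully than the paper's bottom-up union), but the underlying argument matches the paper's.
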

\begin{proof}
For the proof of \eqref{lem:enum-correctness-a}
note that
by construction of the algorithm we know that any item $i_v$ of $\ITEMS$ appears either in the start-list or 
in the $v$-list of the item $i_u$, where $u$ is the parent node of $v$ in $T'$. Therefore, every item 
in $\ITEMS$ is fit. 
For all $v \in V'$ let $\gamma_v$ be the following inductively defined assignment:
\begin{itemize}
 \item If $v$ is a leaf of $T'$, let $\gamma_v$ be an assignment $\gamma_v \supseteq \assign\frac{a}{v}$ such that
  $\satisfy{\DB}{\gamma_v}{\bigwedge_{\qatom\in\atoms(\varv)}\qatom}$
  where $i_v = \ite{v}{\assign}{a}$. Note that such 
  an assignment exists since $i_v$ is fit. 
 \item For all other nodes $v$ of $T'$ let $\gamma_v \isdef \bigcup_{w \in N(v)} \gamma_w$. Note that 
  by construction of our data structure 
  we know for any two $w,w' \in N(v)$ that $\gamma_{w}(u) =
  \gamma_{w'}(u)$ holds for all $u \in \pa[v]$.
\end{itemize}
Note that $\dom{\gamma_{\vroot}} = \Vars(\varphi) = \bigcup_{\psi \in \atoms(\vroot)} \Vars(\psi)$ and
$\satisfy{\DB}{\gamma_{\vroot}}{\bigwedge_{\qatom\in\atoms({\vroot})}\qatom}$. Thus,
$\gamma_{\vroot} \in \extensionsetof{i_{\vroot}}$. 
It is straightforward to see that $\beta^{i_v} \subseteq \gamma_v$
for every $v\in V'$. Therefore,
\[
  \beta_\ITEMS\ =\ \bigcup_{v\in V'} \beta^{i_v}\ \subseteq\ \bigcup_{v \in V'} \gamma_v\ =\ \gamma_{\vroot} .
\]
Since $\dom{\beta_\ITEMS} = \free(\varphi)$, we have $\beta_\ITEMS \in
\extensionsetfreeof{i_{\vroot}}$.
Furthermore, 
$i_{\vroot} \in \startlist$, and therefore the proof of \eqref{lem:enum-correctness-a} is complete.

For the proof of \eqref{lem:enum-correctness-b} let $\beta \in \extensionsetfreeof{i}$ for an
 item $i \in \startlist$. There exists 
 an assignment 
 $\beta' \supseteq \beta$
 with $\beta' \in \extensionsetof{i}$. We consider for all $v \in V$ the items
 $i_v = \ite{v}{\assign}{a}$ where $\beta' \supseteq
 \assign\frac{a}{v}$. 
 Note that each item $i_v$ is fit. Therefore, the items $i_v$ are either contained in the start-list or
 in the $v$-list of their parents. 
 By induction we show that $\ITEMS\deff (i_{y_1},\ldots,i_{y_k})$ will be considered by the algorithm.
 This holds for $i_{\vroot}$, since the algorithm
 iterates through all elements in the start-list. 
 This establishes the induction base.
 For the induction step let $u,v \in V$ with $u \in N(v)$.
 By the induction hypothesis, $i_v$ will be considered by the algorithm. Therefore,
 the item $i_u$ will be considered by the algorithm, since it is contained in the $u$-list of $i_v$ and
 the algorithm ensures that all elements of the $u$-list of $i_v$ will be considered.
 In summary, $\textsc{visit}(\ITEMS)$ will be called by the
 algorithm. 
 Since $\beta_\ITEMS = \beta$, the proof of
 \eqref{lem:enum-correctness-b} is complete.

For the proof of \eqref{lem:enum-correctness-c}
we define a linear order $\prec$ on all item tuples $\ITEMS$ by
 setting $(i_{y_1},\ldots,i_{y_k})\prec(i_{y_1}',\ldots,i_{y_k}')$ if, and
 only if, there is a $\mu\in[k-1]$ such that $i_{y_1}=i_{y_1}'$,
 \ldots, $i_{y_\mu}=i_{y_\mu}'$, and $i_{y_{\mu+1}}'$ occurs after
 $i_{y_{\mu+1}}$ in $y_{\mu+1}$-list of $i_{y_\mu}$. 
 By the definition of the algorithm, the procedure
 \textsc{visit}$(\ITEMS)$ is called along this linear order.
 To complete the proof, note that every tuple
 $\beta$ that is output by the algorithm uniquely determines an $\ITEMS$ such that $\beta_\ITEMS =
 \beta$ and $\beta$ is reported by call of \textsc{visit}$(\ITEMS)$.
\end{proof}

\subsection{Preprocessing and update}
\label{sec:PreprocessingAndUpdate}

In the preprocessing phase we compute the 
\Querytree 
(by Lemma~\ref{lem:querytree} this can be done in time $\poly(\queryphi)$), 
and we initialise the data structure for the
empty database. Afterwards, we perform 
$\card{\DBstart}$ update steps
to ensure that the data structure represents the initial
database $\DBstart$. By ensuring that the update time is constant, it
follows that the preprocessing time is linear in the size of the
initial database.

To illustrate the
result of an
update step, consider again our database $\DBstart$ from
Example~\ref{upperbound:ex1} and suppose that the tuple $(b,p)$ is
inserted into relation $\relE$. The data structure $\DS_1$ for the
resulting database $\DB_1$ is shown in Figure~\ref{fig:DSone}.
When an update command arrives, we have to modify our data structure accordingly so
that it meets the requirements described in Section~\ref{sec:data_structure}.
For convenience, we summarise the conditions below.
\begin{enumerate}[(a)]
\item \label{item:invariant-a} \existenceConditionText{}
\item \label{item:invariant-b} For every item
  $i=\ite{\varv}{\assign}{a}$ and every $\varu\in N(\varv)$ the list
  $\llist{i}{\varu}$ 
 consists of 
 all fit items
of the form $\ite{\varu}{\extend{\assign}{\varv}{a}}{b}$ that are
present in the data structure.
\item \label{item:invariant-c} The \start-list \startlist  contains
all fit items of the form $\ite{\vroot}{\emptyset}{b}$
that are
present in the data structure.
\item \label{item:invariant-d} All values of $\fitcount{i}$,
  $\listcount{i}{u}$, and $\startcount$ are correct. 
\end{enumerate}

To be able to quickly
decide whether an item meets the requirement \eqref{item:invariant-a}, 
we store
for every $i=\ite{\varv}{\assign}{a}$ and every $\qatom\in
\atoms(v)$ the number $\presentcount{i}{\qatom}$ of expansions
$\beta\supseteq\extend{\assign}{\varv}{a}$ with
$\dom{\beta}=\Vars(\qatom)$ and $\satisfy{\DB}{\beta}{\qatom}$.
Whenever a tuple is inserted into or deleted from a relation, we
increment or decrement these numbers accordingly. The number of values $\presentcount{i}{\qatom}$
that change after an
insertion or deletion of a tuple into an $r$-ary relation $\relR$
is $r$ times the number of occurrences of $\relR$ in $\query$.
Hence these numbers can be updated in time $\poly(\query)$.
An item $i=\ite{\varv}{\assign}{a}$ satisfies condition \eqref{item:invariant-a} if, and
only if, there is a $\qatom\in\atoms(\varv)$ such that $\presentcount{i}{\qatom}>0$.
By using the arrays $\larray{\varv}$ and maintaining these numbers we
can therefore create or remove items in constant time
in order to preserve invariant \eqref{item:invariant-a}.

Now we show how to update the weights $\fitcount{i}$ and the variables
 $\listcount{i}{u}$, $\startcount$ which store
sums of weights of list elements.
The key lemma is the following (where we let products ranging over 
the empty set be $1$).
\begin{lemma}\label{lem:counting_composition}
  For every item $i=\ite{\varv}{\assign}{a}$ in the data structure it holds that
  \begin{align}
    \label{eq:10}
  \fitcount{i}\quad = \quad
    \textstyle\prod_{\qatom\in\atm{\varv}}\presentcount{i}{\qatom}\;\cdot\;\textstyle\prod_{\varu\in
      N(\varv)}\listcount{i}{u}.
  \end{align}
\end{lemma}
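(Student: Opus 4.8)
The plan is to show that $\extensionsetof{i}$ is, up to a bijection, a Cartesian product with one factor per atom represented by $\varv$ and one factor per child of $\varv$ in the \Querytree, and that these factors are counted exactly by the numbers $\presentcount{i}{\qatom}$ and $\listcount{i}{\varu}$ appearing in \eqref{eq:10}. The whole argument rests on the tree structure guaranteed by Lemma~\ref{lem:querytree}. First I would record the relevant structural facts. Since for every atom $\qatom$ the set $\Vars(\qatom)$ is a directed root-path of $T$, any $\qatom\in\atoms(\varv)$ satisfies $\pa[\varv]\subseteq\Vars(\qatom)$, because the unique root-path to $\varv$ is contained in every root-path through $\varv$. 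Splitting $\atoms(\varv)$ according to whether $\Vars(\qatom)$ stops at $\varv$ (then $\qatom\in\atm{\varv}$) or continues to the — necessarily unique — child $\varu\in N(\varv)$ that follows $\varv$ on this path (then $\qatom\in\atoms(\varu)$) yields the disjoint decomposition
\[
  \atoms(\varv)\;=\;\atm{\varv}\;\dcup\;\bigcup_{\varu\in N(\varv)}\atoms(\varu),
\]
where moreover $\atoms(\varu)\subseteq\atoms(\varv)$ for each child $\varu$, the sets $\atoms(\varu)$ are pairwise disjoint (two distinct siblings cannot both lie on one root-path), and each is disjoint from $\atm{\varv}$. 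Writing $W_{\varv}\deff\bigcup_{\qatom\in\atoms(\varv)}\Vars(\qatom)$, this gives the variable decomposition $W_{\varv}=\pa[\varv]\;\dcup\;\bigcup_{\varu\in N(\varv)}\bigl(W_{\varu}\setminus\pa[\varv]\bigr)$, with the pieces $W_{\varu}\setminus\pa[\varv]$ pairwise disjoint since $W_{\varu}\setminus\pa[\varv]$ lies in the subtree of $T$ rooted at $\varu$.

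Next I would decompose $\extensionsetof{i}$. Fix $i=\ite{\varv}{\assign}{a}$ and put $\assign^{+}\deff\extend{\assign}{\varv}{a}$, an assignment with domain $\pa[\varv]$; an element of $\extensionsetof{i}$ is precisely a map $\beta\colon W_{\varv}\to\Dom$ with $\restrict{\beta}{\pa[\varv]}=\assign^{+}$ and $\satisfy{\DB}{\beta}{\qatom}$ for all $\qatom\in\atoms(\varv)$. By the previous paragraph, choosing such a $\beta$ amounts to choosing the restrictions $\restrict{\beta}{W_{\varu}}$ for the children $\varu\in N(\varv)$ independently: these restrictions agree on the shared part $\pa[\varv]$, which is forced to equal $\assign^{+}$, and their remaining domains $W_{\varu}\setminus\pa[\varv]$ partition $W_{\varv}\setminus\pa[\varv]$. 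Since $\Vars(\qatom)=\pa[\varv]$ for $\qatom\in\atm{\varv}$ and $\Vars(\qatom)\subseteq W_{\varu}$ for $\qatom\in\atoms(\varu)$, the condition ``$\satisfy{\DB}{\beta}{\qatom}$ for all $\qatom\in\atoms(\varv)$'' is equivalent to ``$\satisfy{\DB}{\assign^{+}}{\qatom}$ for all $\qatom\in\atm{\varv}$'' together with, for each child $\varu$, ``$\satisfy{\DB}{\restrict{\beta}{W_{\varu}}}{\qatom}$ for all $\qatom\in\atoms(\varu)$''. Now observe that for $\qatom\in\atm{\varv}$ the only expansion of $\assign^{+}$ with domain $\Vars(\qatom)=\pa[\varv]$ is $\assign^{+}$ itself, so $\presentcount{i}{\qatom}\in\{0,1\}$ with value $1$ iff $\satisfy{\DB}{\assign^{+}}{\qatom}$. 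Hence $\extensionsetof{i}=\emptyset$ whenever $\prod_{\qatom\in\atm{\varv}}\presentcount{i}{\qatom}=0$, and otherwise restriction is a bijection
\[
  \extensionsetof{i}\;\longleftrightarrow\;\prod_{\varu\in N(\varv)}\Bigl\{\,\gamma\colon W_{\varu}\to\Dom\ :\ \restrict{\gamma}{\pa[\varv]}=\assign^{+},\ \satisfy{\DB}{\gamma}{\qatom}\text{ for all }\qatom\in\atoms(\varu)\,\Bigr\}.
\]

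Finally I would identify each child-factor with $\listcount{i}{\varu}$. For fixed $\varu$, partition the $\varu$-factor by the value $b\deff\gamma(\varu)$; since $\pa[\varu)=\pa[\varv]$, fixing $\restrict{\gamma}{\pa[\varv]}=\assign^{+}$ and $\gamma(\varu)=b$ is the same as fixing $\restrict{\gamma}{\pa[\varu]}=\extend{\assign^{+}}{\varu}{b}$, i.e.\ the assignment–constant pair of the item $\ite{\varu}{\assign^{+}}{b}$, so the $b$-part of the $\varu$-factor equals $\extensionsetof{\ite{\varu}{\assign^{+}}{b}}$. Thus the $\varu$-factor has size $\sum_{b\in\Dom}\bigl|\extensionsetof{\ite{\varu}{\assign^{+}}{b}}\bigr|$, a finite sum as each nonempty summand forces $b\in\adom{\DB}$. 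Because an item is fit iff its extension set is nonempty, this sum equals $\sum_{i'\in\llist{i}{\varu}}\fitcount{i'}=\listcount{i}{\varu}$, the non-fit items contributing $0$. Combining the two displays: if $\prod_{\qatom\in\atm{\varv}}\presentcount{i}{\qatom}=1$ then $\fitcount{i}=|\extensionsetof{i}|=\prod_{\varu\in N(\varv)}\listcount{i}{\varu}$, and if $\prod_{\qatom\in\atm{\varv}}\presentcount{i}{\qatom}=0$ both sides of \eqref{eq:10} vanish; in all cases \eqref{eq:10} follows, with empty products read as $1$ (consistent with $\varv$ being a leaf or representing no atom). The main obstacle is Step~1 together with the verification in Step~2 that the domains $W_{\varu}$ overlap only in the forced part $\pa[\varv]$ and that each atom's satisfaction localises to a single factor — this is exactly where the defining property of a \Querytree (every $\Vars(\qatom)$ a directed root-path) is used; the remaining steps are routine bookkeeping. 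One should also note, to ensure $\extensionsetof{i}$ is well-formed and finite, that every variable of $\queryphi$ occurs in some atom (true for connected queries), so $\atoms(\varu)\neq\emptyset$ and $\pa[\varu]\subseteq W_{\varu}$ for all nodes $\varu$.
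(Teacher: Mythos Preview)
Your proposal is correct and follows essentially the same approach as the paper's own proof: decompose $\atoms(\varv)$ into $\atm{\varv}$ and the $\atoms(\varu)$ for children $\varu$, observe that the variable sets $W_{\varu}$ overlap only in $\pa[\varv]$, and use this to factor $\extensionsetof{i}$ as a product indexed by the children, each factor being counted by $\listcount{i}{\varu}$. The only presentational difference is that the paper splits into a leaf case and an inner-node case, whereas you treat both uniformly via empty products; your version is slightly more explicit about why the \Querytree property forces the disjointness of the $W_{\varu}\setminus\pa[\varv]$, but the underlying argument is the same.
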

\begin{proof}
  Note that for $\qatom\in\atm{\varv}$ we have
  $\presentcount{i}{\qatom}\in\set{0,1}$. Furthermore,
  $\prod_{\qatom\in\atm{\varv}}\presentcount{i}{\qatom} = 1$ if and
  only if $\satisfy{\DB}{\extend{\assign}{\varv}{a}}{\qatom}$ for all
$\qatom\in\atm{\varv}$.
  Now we have two cases. 

  First suppose that $\varv$ is a leaf in $T$.
  In this situation we know that $\atm{\varv}=\atoms(\varv)$ and
  therefore we have to show that $\fitcount{i} = 
  \textstyle\prod_{\qatom\in\atoms(\varv)}\presentcount{i}{\qatom}$,
  which follows immediately from the definition of
  $\extensionsetof{i}$.

  For the second case suppose that 
  $N(\varv)=\set{\varu_1,\ldots,\varu_\ell}$ for $\ell\geq 1$.
  Note that $\atoms(\varv)$ can be decomposed into pairwise disjoint
  sets $\atm{\varv}$, $\atoms(\varu_1)$, \ldots,
  $\atoms(\varu_\ell)$ and that the pairwise intersection of sets
  $U_j\deff\bigcup_{\qatom\in\atoms(\varu_j)}\Vars(\qatom)$ is $\dom{\assign}\cup\set{\varv}$.
  If there is some $\qatom\in\atm{\varv}$ such that
  $\notsatisfy{\DB}{\extend{\assign}{\varv}{a}}{\qatom}$, then
  $\presentcount{i}{\qatom}=0$, and hence
  $\extensionsetof{i}=\emptyset$ and $\fitcount{i}=0$.
  Otherwise, $\presentcount{i}{\qatom}=1$ for all
  $\qatom\in\atm{\varv}$ and we have to show that
  $\fitcount{i}=\prod_{\varu\in N(\varv)}\listcount{i}{u}$.
  Note that for every $\beta\in\extensionsetof{i}$ and every
  $\varu_j\in N(\varv)$ there is a unique $i'\in\llist{i}{\varu_j}$ such
  that
  $\restrict{\beta}{U_j}\in\extensionsetof{i'}$.
  Furthermore, for every choice of $\ell$ assignments
  $\beta_j\in\extensionsetof{i'_j}$ with $i'_j\in\llist{i}{\varu_j}$ for $j\in[\ell]$, there is a
  unique $\beta=\beta_1\cup\cdots\cup\beta_\ell$ such that
  $\beta\in\extensionsetof{i}$.
  It follows that $\fitcount{i} =
    \textstyle\prod_{j\in[\ell]}\big(\sum_{i'\in\llist{i}{\varu_j}}\Setsize{\extensionsetof{i'}}\big)=\textstyle\prod_{\varu\in 
      N(\varv)}\listcount{i}{u}$, which concludes
    the proof.
\end{proof}

We have already argued that upon insertion and deletion of a fact only a constant number of values
$\presentcount{i}{\qatom}$ change.
It follows from Lemma~\ref{lem:counting_composition} that the same
holds true for the values of $\fitcount{i}$,
$\listcount{i}{u}$, and $\startcount$.
Moreover, the lemma enables us to compute the new values bottom up in
time $\poly(\query)$. 

Now we argue how to ensure condition \eqref{item:invariant-b} and
\eqref{item:invariant-c}.
While recomputing the new weights $\fitcount{i}$ we check for every considered item $i$
whether $\fitcount{i}$ becomes zero or non-zero
(recall that $i$ is fit if and only
if $\fitcount{i}>0$) and add or remove $i$ from the
corresponding list.
For every item this takes constant time, as we have random
access to the items and the lists are doubly linked.

To summarise, the update procedure is as follows.
Upon receiving $\Update\;R(b_1,\ldots,b_r)$ we repeat the following
for all atoms $\qatom=R\varz_{1}\cdots\varz_{r}$ of $\queryphi$ that
satisfy $\big(\varz_s=\varz_t
\ \Rightarrow \ b_s=b_t\big)$ for all $s,t\in[r]$.
Let $d=|\set{\varz_1,\ldots,\varz_r}|$, let
$\vroot=\varv_1,\ldots,\varv_d$ be the path in $T$ from $\vroot$ to
the vertex $v_d$ that represents $\qatom$, and 
denote by
$\assign=\Assign{\varv_1,\ldots,\varv_{d}}{a_1,\ldots,a_{d}}$ the
assignment with $\assign(\varz_{j})=b_j$ for all $j\in[r]$.
Then for $j=d,\ldots,1$ and
$i_j\defi\ite{\varv_j}{\Assign{\varv_1,\ldots,\varv_{j-1}}{a_1,\ldots,a_{j-1}}}{a_j}$
we repeat the following steps.
\begin{enumerate}
\item If $\Update=\Insert$, then create $i_j$ (if it not already exists)
  and increment $\presentcount{i_j}{\qatom}$.\\
  Otherwise, decrement $\presentcount{i_j}{\qatom}$.
\item \label{item:compute_fitcount} Let $\fitcount{i_j}_{\text{\upshape old}}\defi \fitcount{i_j}$
  and compute
  $\fitcount{i_j}$
  using Lemma~\ref{lem:counting_composition} as a product of
  $\poly(\query)$ numbers.
\item If $\fitcount{i_j}>0$, then add $i_j$ to
  $\llist{i_{j-1}}{\varv_j}$ (if $j>1$) or  $\startlist$ (if
  $j=1$) 
  (unless it is already present in the according list).\\
  Otherwise, remove $i_j$ from the corresponding list.
\item \label{item:compute_listcount} Compute the new value of $\listcount{i_{j-1}}{\varv_j}$ (if
  $j>1$) or $\startcount$ (if $j=1$) by subtracting $\fitcount{i_j}_{\text{\upshape old}}$
  and adding $\fitcount{i_j}$.
\item If $\Update=\Delete$ and $\presentcount{i_j}{\qatom}=0$ for all
  $\qatom\in\atoms(\varv_j)$, then delete $i_j$ from the data structure.
\end{enumerate}

\subsection{Counting in the presence of quantifiers}
\label{sec:upperbound_counting}

We have already shown that in order to count the number of output tuples in
the case where $\query$ is quantifier-free, it suffices to report the
value of $\startcount$.
If $\query$
is non-Boolean and contains quantified variables
$\xtuple=(\varx_{k+1},\ldots,\varx_m)$,
we maintain the same
numbers $\fitcount{i}$, $\listcount{i}{u}$, $\startcount$ as before
and additionally 
store numbers $\fitcountfree{i}$,
$\listcountfree{i}{u}$, $\startcountfree$ for all present items
$i=\ite{\varv}{\assign}{a}$ with $\varv\in\free(\query)$.
Recall that
 $\extensionsetfreeof{i} \deff \setc{\restrict{\beta}{\free(\query)}}{\beta\in\extensionsetof{i}}$. For an item $i=\ite{\varv}{\assign}{a}$ with $\varv\in\free(\query)$ we let
\begin{align*}
  \fitcountfree{i} &\ \ \deff \ \ \Setsize{\extensionsetfreeof{i}}
  \\
  \listcountfree{i}{\varu} &\ \ \deff \ \ \textstyle\sum_{i'\in\llist{i}{\varu}}
                     \fitcountfree{i'} \qquad\text{for all }\varu\in
                         N(\varv)\cap \free(\varphi)
                     \\
 \startcountfree &\ \ \deff \ \ \textstyle\sum_{i'\in\startlist}
                   \fitcountfree{i'}.
\end{align*}
By definition we have $\setsize{\eval{\query}{\DB}}=\startcountfree$,
and we use this stored value to answer a $\COUNT$ request 
in time $\bigOh(1)$.
To efficiently update the values, we utilise
the following technical lemma, which is similar to Lemma~\ref{lem:counting_composition}. 
\begin{lemma}\label{lem:counting_composition_free}
  For every item $i=\ite{\varv}{\assign}{a}$ with $\varv\in\free(\query)$ it holds that
  \begin{align}
    \label{eq:10}
    \fitcountfree{i}\ \ = \ \
    \begin{cases}
      \ 0 &\text{if }\fitcount{i}=0 \\
      \ \textstyle\prod_{\varu\in
      N(\varv)\cap\free(\query)}\listcountfree{i}{u}&\text{otherwise.}
    \end{cases}
  \end{align}
\end{lemma}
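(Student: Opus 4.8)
The plan is to follow the structure of the proof of Lemma~\ref{lem:counting_composition}, but to track only the restriction of assignments to $\free(\query)$. First I would dispose of the trivial case: if $\fitcount{i}=0$ then $\extensionsetof{i}=\emptyset$ by definition, hence $\extensionsetfreeof{i}=\emptyset$ and $\fitcountfree{i}=0$. So from now on assume $i$ is fit, write $N(\varv)=\set{\varu_1,\ldots,\varu_\ell}$, and recall from the proof of Lemma~\ref{lem:counting_composition} that $\atoms(\varv)$ is the disjoint union of $\atm{\varv}$ and the sets $\atoms(\varu_1),\ldots,\atoms(\varu_\ell)$, and that the sets $U_j\deff\bigcup_{\qatom\in\atoms(\varu_j)}\Vars(\qatom)$ pairwise intersect exactly in $\pa[\varv]$. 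Observe also that, since $i$ is fit, Lemma~\ref{lem:counting_composition} forces $\listcount{i}{\varu_j}>0$ and therefore $\llist{i}{\varu_j}\neq\emptyset$ for every $j\in[\ell]$; this will be needed for the gluing step below.

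Next I would determine which variables are free. Since $\varv\in\free(\query)$, the set $\free(\query)$ is nonempty and hence, by the definition of a \Querytree, a connected vertex subset of $T$ containing $\vroot$; consequently $\free(\query)$ meets every path from the root in a prefix of that path. This gives $\pa[\varv]\subseteq\free(\query)$ and, for each child $\varu_j$, that $U_j\cap\free(\query)=\pa[\varv]$ whenever $\varu_j\notin\free(\query)$ (the entire subtree below $\varu_j$ is then quantified), while $\varu_j\in U_j\cap\free(\query)$ whenever $\varu_j\in\free(\query)$. Writing $J\deff\setc{j\in[\ell]}{\varu_j\in\free(\query)}$, it follows that every $\gamma\in\extensionsetfreeof{i}$ has domain $\pa[\varv]\cup\bigcup_{j\in J}(U_j\cap\free(\query))$, agrees with $\extend{\assign}{\varv}{a}$ on $\pa[\varv]$, and is therefore completely determined by the family of restrictions $\bigl(\restrict{\gamma}{U_j}\bigr)_{j\in J}$.

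The heart of the argument is to show that $\gamma\mapsto\bigl(\restrict{\gamma}{U_j}\bigr)_{j\in J}$ is a bijection from $\extensionsetfreeof{i}$ onto $\prod_{j\in J}\bigl(\bigcup_{i'\in\llist{i}{\varu_j}}\extensionsetfreeof{i'}\bigr)$. Well-definedness and injectivity I would read off from the domain description above together with the fact (from the proof of Lemma~\ref{lem:counting_composition}) that for $\beta\in\extensionsetof{i}$ and each $j$ the restriction $\restrict{\beta}{U_j}$ lies in $\extensionsetof{i'_j}$ for a unique item $i'_j\in\llist{i}{\varu_j}$. For surjectivity, given a tuple in the target I would lift each coordinate indexed by $j\in J$ to some $\beta_j\in\extensionsetof{i'_j}$ whose free restriction is that coordinate, additionally pick an arbitrary $\beta_j\in\extensionsetof{i'_j}$ for each $j\in[\ell]\setminus J$ (possible because $\llist{i}{\varu_j}\neq\emptyset$ and its items are fit), and glue them using the converse direction of the decomposition in the proof of Lemma~\ref{lem:counting_composition} to obtain $\beta\in\extensionsetof{i}$ whose free restriction maps to the given tuple. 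Finally, for each $j\in J$ the sets $\extensionsetfreeof{i'}$, $i'\in\llist{i}{\varu_j}$, are pairwise disjoint because distinct items of the list disagree on $\varu_j$, which now lies in the domain; hence $\bigl|\bigcup_{i'\in\llist{i}{\varu_j}}\extensionsetfreeof{i'}\bigr|=\sum_{i'\in\llist{i}{\varu_j}}\fitcountfree{i'}=\listcountfree{i}{\varu_j}$, and the bijection yields $\fitcountfree{i}=\prod_{j\in J}\listcountfree{i}{\varu_j}=\prod_{\varu\in N(\varv)\cap\free(\query)}\listcountfree{i}{\varu}$.

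The step I expect to be most delicate is the bookkeeping around the quantified children: making precise that a child $\varu_j\notin\free(\query)$ contributes nothing new to $\restrict{\beta}{\free(\query)}$ (so that the product on the right-hand side ranges only over $N(\varv)\cap\free(\query)$), yet still has a nonempty list $\llist{i}{\varu_j}$ that must be populated for the gluing step — which is exactly where fitness of $i$ and Lemma~\ref{lem:counting_composition} enter. Everything else is the same disjoint-decomposition and distributivity computation as in Lemma~\ref{lem:counting_composition}, carried out after restricting to $\free(\query)$.
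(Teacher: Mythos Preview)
Your proposal is correct and follows essentially the same approach as the paper: dispose of the unfit case, use the connectedness of $\free(\query)$ in the \Querytree to see that quantified children contribute nothing new to the free restriction while fitness of $i$ guarantees their lists are nonempty, and then reuse the disjoint-decomposition argument of Lemma~\ref{lem:counting_composition}. Your write-up is in fact more explicit than the paper's, which simply says ``by the same argument as in the proof of Lemma~\ref{lem:counting_composition}'' after noting the two observations about quantified children; your careful handling of the bijection and of the $J=\emptyset$ case (via the empty product) spells out what the paper leaves implicit.
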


\begin{proof}
  First note that, in general, we have
  $\extensionsetfreeof{i}=\emptyset$ $\iff$
  $\extensionsetof{i}=\emptyset$,
  and hence $\fitcountfree{i}=0$ $\iff$ $\fitcount{i}=0$.  
  If $N(\varv)\cap\free(\query)=\emptyset$, then the lemma holds as
  $\extensionsetfreeof{i}$ is either empty or 
  consists (only) of the assignment $\restrict{\alpha}{\free(\query)}$,
  and hence $\fitcountfree{i}\in\set{0,1}$.  Now suppose
  that 
  $N(\varv)\cap\free(\query)\neq\emptyset$
  and $\fitcount{i}>0$.  Because $i$ is fit and because
  all descendants of quantified
  variables are quantified, it follows
  that for every $\varu\in N(v)\setminus\free(\query)$ we have
  $\satisfy{\DB}{\extend{\assign}{\varv}{a}}{\exists\xtuple
    \textstyle\bigwedge_{\qatom\in\atoms(\varu)}\qatom}$.  Therefore,
  by the same argument as in the proof of
  Lemma~\ref{lem:counting_composition}
  we obtain that
  $\Setsize{\extensionsetfreeof{i}} = \textstyle\prod_{\varu\in
    N(\varv)\cap\free(\query)} \big(\sum_{i'\in\llist{i}{\varu}}
  \Setsize{\extensionsetfreeof{i'}}\big) = \textstyle\prod_{\varu\in
    N(\varv)\cap\free(\query)} \listcountfree{i}{u}$.
\end{proof}

Now we can 
update the numbers $\fitcountfree{i}$,
$\listcountfree{i}{u}$, $\startcountfree$ in the same way as we
have done for $\fitcount{i}$,
$\listcount{i}{u}$, $\startcount$.
In particular, we enrich the update procedure described in Section~\ref{sec:PreprocessingAndUpdate} by
the following two steps \ref{item:compute_fitcount}a and
\ref{item:compute_listcount}a and execute them after
\ref{item:compute_fitcount} and \ref{item:compute_listcount}, respectively.

\begin{itemize}
\item[\ref{item:compute_fitcount}a.] \label{item:compute_fitcountfree}
  If $\varv_j\in\free(\query)$, let
  $\fitcountfree{i_j}_{\text{\upshape old}}\defi \fitcountfree{i_j}$
  and compute $\fitcountfree{i_j}$ using
  Lemma~\ref{lem:counting_composition_free}.
\item[\ref{item:compute_listcount}a.] \label{item:compute_listcountfree}
  If $\varv_j\in\free(\query)$, compute the new value of $\listcountfree{i_{j-1}}{\varv_j}$ (if
  $j>1$) or $\startcountfree$ (if $j=1$) by subtracting
  $\fitcountfree{i_j}_{\text{\upshape old}}$ and adding
  $\fitcountfree{i_j}$.
\end{itemize}

\makeatletter{}%

\section{Discussion}\label{sec:discussion}

We studied the complexity of answering conjunctive queries under
database updates and showed that they can be
evaluated efficiently if they are \qhier.
For \emph{Boolean} conjunctive queries and the task of computing the
result size of non-Boolean queries we proved corresponding lower
bounds based on algorithmic conjectures and obtained a complete picture of the 
queries that 
can be answered efficiently under updates.
Moreover, the \qhier property also precisely characterises those \emph{self-join free}
conjunctive queries that can be enumerated efficiently under updates.

A natural open problem is the missing classification of the
enumeration problem for conjunctive queries that contain self-joins. 
As an intriguing example consider the two CQs
\begin{align*}
  \label{eq:2}
  \query_1(\varx,\vary) &\ \ \deff \ \ \bodyjoin{\relE\varx\varx \und \relE\varx\vary \und
  \relE\vary\vary} \\
  \query_2(\varx,\vary,\varz_1,\varz_2) &\ \ \deff \ \ \bodyjoin{\relE\varx\varx \und \relE\varx\vary \und \relE\vary\vary
  \und \relE\varz_1\varz_2}\,.
\end{align*}
It is easy to see that in the static setting, the results
of both queries can be enumerated with constant delay after linear time
preprocessing (this also immediately follows from 
\cite{Bagan.2007}, since the queries are free-connex acyclic).
But both queries are non-\qhier. 
By similar arguments as in our lower
bound proofs in Section~\ref{sec:lowerbounds}, one can show that 
the results of
$\query_1$ cannot be enumerated with 
$\bigoh(n^{1-\smalleps})$ update time and $\bigoh(n^{1-\smalleps})$
delay, unless the \OMvcon fails (see Appendix~\ref{sec:details_self_joins}).  
However, $\query_2$ \emph{can} be
enumerated with constant delay and constant update time after a linear
time preprocessing phase,
  as the following argument shows.  If
  the $\set{E}$-database $\DB$ is a digraph without loops, then the
  query result is empty.  Otherwise, there is a loop $(c,c)\in E^\DB$
  and we can immediately report the output tuples $(c,c)\times E^\DB$
  with constant delay.  During this enumeration process, which 
  takes time $\Theta(\size{\DB})$, we have enough time to preprocess,
  from scratch, the
  query $\query_1$
  on the database $\DBstrich$ obtained from $\DB$ by deleting the
  tuple $(c,c)$. Afterwards, we can
  enumerate with constant delay the remaining tuples, i.e., the tuples
  in $\eval{\query_1}{\DBstrich}\times \relE^{\DB}$
  (see 
   the Appendix~\ref{sec:details_self_joins} for details). 

Note that even in the static setting, a complexity classification for
enumerating the results of join queries with self-joins
is not in sight and it seems likely that there is no structural
characterisation of tractable queries (see \cite{Bulatov.2012} for a discussion on that
matter).

On a more conceptual level, the main contribution of this paper is to
initiate a systematic theoretical investigation of the 
computational complexity of query evaluation
under database updates.
We are excited by the fruitful connections between database theory and
the theory of dynamic algorithms ---
in particular, that  a known concept from the database
literature that classifies hard queries in various settings (``being
non-hierarchical'') is tightly connected to the underlying
combinatorial hardness shared by many dynamic algorithms (as captured
by the \OMvcon).
We suspect that there are further
settings in which the theory of
dynamic algorithms helps to advance our understanding of query evaluation
under a dynamically changing database.
Currently, we are working towards characterising the complexity of more expressive
queries such conjunctive queries with negation and unions of
conjunctive queries.

\ifthenelse{\isundefined{\USEBIBLATEX}}{
\bibliographystyle{abbrv}
\bibliography{literature}
}{
        \section{Bibliography}
        \printbibliography 
}

{%
\makeatletter{}%

\clearpage
\onecolumn

\appendix
\section*{APPENDIX}

This appendix contains proof details that 
were omitted from Section~\ref{sec:discussion}.

\makeatletter{}%

\bigskip

\section{Details on the Discussion about Queries with Self-Joins}
\label{sec:details_self_joins}

\medskip 

\noindent
We prove the statements about the dynamic enumeration complexity
of the queries $\query_1$ and $\query_2$ as discussed in Section~\ref{sec:discussion}.

\begin{lemma}\label{lem:ExxExyEyy}
Suppose there is an $\epsilon>0$ and a dynamic algorithm with
arbitrary preprocessing time and
$\updatetime=\dimn^{1-\smalleps}$ update time that enumerates 
\,$\query_1(\varx,\vary) \deff \bodyjoin{\relE\varx\varx \und \relE\varx\vary \und
  \relE\vary\vary}$\,
with $\delaytime=\dimn^{1-\smalleps}$ delay on databases whose active
domain has size $n$,
then the \OMvcon fails.
\end{lemma}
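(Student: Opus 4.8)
The plan is to reduce the online matrix–vector multiplication problem (\OMv) to dynamic enumeration of $\query_1$, mirroring the reduction used for $\phitwo$ in Lemma~\ref{lem:ExyTy}, but with the extra twist that $\query_1$ is a self-join query over a single binary relation $\relE$. Given an $\dimn\times\dimn$ matrix $\matM$, the idea is to represent it as a bipartite-style digraph on two disjoint copies of $[\dimn]$: introduce fresh vertices $\verta_1,\ldots,\verta_\dimn$ for the "row" side and $\vertb_1,\ldots,\vertb_\dimn$ for the "column" side, put the loop $\relE\verta_\indi\verta_\indi$ for every $\indi$ (so that every $\verta_\indi$ can serve as the "$\varx$" of $\query_1$), encode $\matM$ by the edges $\relE\verta_\indi\vertb_\indj$ whenever $\matM_{\indi,\indj}=1$, and encode the incoming vector $\vecv$ by the loops $\relE\vertb_\indj\vertb_\indj$ for those $\indj$ with $\vecv_\indj=1$. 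Then a pair $(\verta_\indi,\vertb_\indj)$ is in $\eval{\query_1}{\DB}$ iff $\relE\verta_\indi\verta_\indi$, $\relE\verta_\indi\vertb_\indj$, and $\relE\vertb_\indj\vertb_\indj$ all hold, i.e.\ iff $\matM_{\indi,\indj}=1$ and $\vecv_\indj=1$; consequently the set of $\verta_\indi$ occurring as a first component of such a pair is exactly $\setc{\verta_\indi}{(\matM\vecv)_\indi=1}$.

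First I would set up the static database encoding $\matM$ (the loops on the $\verta$-side and the matrix edges), which costs $\bigoh(\dimn^2)$ insertions and hence $\bigoh(\dimn^2\updatetime)=\bigoh(\dimn^{3-\smalleps})$ time during an initial "build" phase after the (arbitrary-time) preprocessing on the empty database. Then, in the dynamic phase, when $\vecv^{\,\indt}$ arrives I would perform at most $\dimn$ insertions/deletions of loops $\relE\vertb_\indj\vertb_\indj$ to make the $\vertb$-side loops agree with $\vecv^{\,\indt}$, run $\ENUMERATE$, read off from the enumerated tuples which $\verta_\indi$ appear as first coordinates, and output the corresponding characteristic vector as $\matM\vecv^{\,\indt}$. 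Since $\setsize{\eval{\query_1}{\DB}}\le\dimn$ (each $\verta_\indi$ pairs with at most ... well, possibly many $\vertb_\indj$, but the first coordinate ranges over $\le\dimn$ values), one must be a little careful: the output size could be up to $\dimn^2$ if many entries of $\matM$ and $\vecv$ are $1$. To keep the enumeration cheap I would instead argue that we only need to enumerate until we have seen $\dimn$ distinct first coordinates, or — cleaner — restrict $\matM$'s encoding so that the reduction still forces the full vector $\matM\vecv$ to be recoverable; the simplest fix is to note that we can stop the enumeration as soon as all $\dimn$ possible first-coordinate values have been observed or the \EOE message appears, so at most $\dimn\cdot\dimn$ tuples... this is the point that needs the most care.

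The cleanest way to avoid the quadratic-output issue is a slight change of encoding: make the $\vertb_\indj$ vertices "consumed" so that each contributes at most once. Concretely, for the $j$-th column use a private copy $\verta_\indi^{(\indj)}$ is too expensive; better: encode $\matM$ row by row is also too expensive. The right move, and the one I expect the paper takes, is to observe that it suffices to enumerate $\dimn$ tuples: reading off the first coordinate of each enumerated tuple and collecting the distinct values, after at most $\dimn$ outputs we may have missed some $\verta_\indi$ only if the enumeration legitimately ended — but that is false if there are many tuples per row. So instead I would reduce from \OuMv-style "vector output" more carefully, or simply accept delay $\delaytime=\dimn^{1-\smalleps}$ times output size $\le\dimn^2$, giving per-step time $\bigoh(\dimn^2\delaytime+\dimn\updatetime)=\bigoh(\dimn^{3-\smalleps})$ and hence overall $\bigoh(\dimn^{4-\smalleps})$ — too weak. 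Thus the genuine obstacle, and the key idea to get right, is ensuring the query result has size $\bigoh(\dimn)$: this is achieved by encoding $\matM\vecv$ not as a set of pairs but by making the second component determined, e.g.\ route everything through a single shared "column" vertex $\vertc$, putting $\relE\verta_\indi\vertc$ for all $\indi$ with $(\matM\vecv)_\indi=1$ — but that requires already knowing $\matM\vecv$. The resolution in the paper is almost certainly to use a three-layer gadget so that $\vary$ is forced onto a bounded set; I would set $\vary$ to range over a vertex set of size $\bigoh(1)$ per row by a product-construction argument analogous to Lemma~\ref{lem:simulating_unary_relations}, after which $\setsize{\eval{\query_1}{\DB}}=\bigoh(\dimn)$, the per-step cost is $\bigoh(\dimn\updatetime+\dimn\delaytime)=\bigoh(\dimn^{2-\smalleps})$, and the total is $\bigoh(\dimn^{3-\smalleps})$, contradicting the \OMvcon. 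I would finish by checking that $\setsize{\adom{\DB}}=\bigoh(\dimn)$ throughout so that the $\updatetime$ and $\delaytime$ bounds of the hypothetical algorithm apply with parameter $\bigoh(\dimn)$, and that all updates are legal insertions/deletions of $\relE$-tuples.
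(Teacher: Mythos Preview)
Your reduction idea is on the right track (two copies of $[\dimn]$, matrix as cross-edges, loops to activate vertices), but you have correctly spotted the fatal issue and did not resolve it: with loops on \emph{all} $\verta_\indi$, the result $\eval{\query_1}{\DB}$ can have $\Theta(\dimn^2)$ tuples, and no gadgetry of the kind you sketch fixes this without either blowing up the domain or requiring you to already know $\matM\vecv$.

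The paper's trick is to reduce from \OuMv rather than from \OMv. Given $\vecu^{\,\indt}$ and $\vecv^{\,\indt}$, one puts the loop $(\verta_\indi,\verta_\indi)$ iff $\vecu^{\,\indt}_\indi=1$ and the loop $(\vertb_\indj,\vertb_\indj)$ iff $\vecv^{\,\indt}_\indj=1$; the cross-edges $(\verta_\indi,\vertb_\indj)$ for $\matM_{\indi,\indj}=1$ are static. Then a tuple $(\verta_\indi,\vertb_\indj)\in\eval{\query_1}{\DB}$ exists iff $(\vecu^{\,\indt})\trans\matM\vecv^{\,\indt}=1$, so a single bit suffices per round. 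The only other result tuples are the diagonals $(\verta_\indi,\verta_\indi)$ and $(\vertb_\indj,\vertb_\indj)$ coming from the loops themselves, of which there are at most $2\dimn$. Hence it is enough to enumerate $2\dimn{+}1$ tuples and check whether any off-diagonal pair $(\verta_\indi,\vertb_\indj)$ appears. Each round costs $2\dimn\updatetime+(2\dimn{+}1)\delaytime=\bigoh(\dimn^{2-\smalleps})$, giving $\bigoh(\dimn^{3-\smalleps})$ overall and contradicting the \OMvcon via Theorem~\ref{thm:OuMv}.

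So the two ideas you are missing are: (i) switch the target from computing $\matM\vecv$ to the Boolean value $(\vecu)\trans\matM\vecv$, which lets the $\vecu$-vector control the $\verta$-side loops as well; and (ii) observe that the ``junk'' output of $\query_1$ (the diagonal tuples) is bounded by the number of loops, so a truncated enumeration of $2\dimn{+}1$ steps decides the bit.
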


\begin{proof}
  We show that a dynamic enumeration algorithm for $\query_1(x,y)$ 
  helps to solve \OuMv in
  time $\bigoh(\dimn^{3-\smalleps})$.
  We start the preprocessing phase of our evaluation algorithm for 
  $\query_1$ with the empty database $\DB=(\relE^{\DB})$
  where $\relE^{\DB}=\emptyset$. As this database has
  constant size, the preprocessing phase finishes in constant time.

  Given an $n\times n$ matrix $M$, we fix $2n$ distinct elements 
  $\setc{\verta_\indi,\vertb_\indi}{\indi \in [\dimn]}$ of $\Dom$, 
  and perform at most $n^2$ update steps to insert
  the tuple $(\verta_\indi,\vertb_\indj)$ into $\relE^{\DB}$,
  for all $(i,j)$ with $M_{i,j}=1$.
  All this is done within time
  $n^2\updatetime=\bigOh(n^{3-\epsilon})$, and
  afterwards $\DB$ is the $\set{\relE}$-db 
  with $\relE^{\DB}=\setc{(\verta_\indi,\vertb_\indj)}{\matM_{\indi,\indj}=1}$.

  When receiving two vectors $\vecu^{\,\indt}$ and $\vecv^{\,\indt}$ in the dynamic
  phase of \OuMv, we insert and delete loops in $\DB$ such that the following
  is true:
  \begin{itemize}
   \item
  $(\verta_\indi,\verta_\indi)\in \relE^\DB$
  \ $\iff$ \ the
  $\indi$-th entry of $\vecu^{\,\indt}$ is $1$,
   \item
  $(\vertb_\indi,\vertb_\indi)\in \relE^\DB$ \ $\iff$ \ the
  $\indi$-th entry of $\vecv^{\,\indt}$ is $1$.
  \end{itemize}
   Now we
  enumerate the result of $\query_1(x,y)=
  \bodyjoin{\relE\varx\varx \und \relE\varx\vary \und
    \relE\vary\vary}$ evaluated on $\DB$
  for $2\dimn{+}1$ steps, and we output $1$ if there was some pair
  $(\verta_\indi,\vertb_\indj)$ in the output, and otherwise we output $0$.  Note
  that such a pair occurs among the first $2\dimn{+}1$ output pairs as
  there are at most $2\dimn$ loops $(\verta_\indi,\verta_\indi)$ and
  $(\vertb_\indj,\vertb_\indj)$.  From the definition of $\DB$ it
  follows that the output agrees with
  $(\vecu^{\,\indt})\trans \matM \vecv^{\,\indt}$.  
  For each $t\in[n]$, all this is done within time 
  $2\dimn\updatetime +
  (2\dimn+1)\delaytime=\bigoh(\dimn^{2-\smalleps})$.
  The overall running time is $\bigoh(\dimn^{3-\smalleps})$.
  This contradicts the \OMvcon{} by Theorem~\ref{thm:OuMv}.
\end{proof}

\medskip

\begin{lemma}\label{lem:ExxExyEyyEz1z2}
  The
  results of the query 
\ $\query_2(\varx,\vary,\varz_1,\varz_2) \deff
\bodyjoin{\relE\varx\varx \und \relE\varx\vary \und \relE\vary\vary
  \und \relE\varz_1\varz_2}$ \ 
  can be enumerated with constant delay and constant update time after a linear time preprocessing phase.
\end{lemma}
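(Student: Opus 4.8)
Following the idea sketched in Section~\ref{sec:discussion}, the plan is to exploit that the companion query $\query_1$, although hard to maintain on its own by Lemma~\ref{lem:ExxExyEyy}, has a result that is always \emph{contained} in $\relE^{\DB}$ and hence small, so it can be recomputed from scratch whenever $\query_2$ is to be enumerated — while the extra atom $\relE\varz_1\varz_2$ of $\query_2$ generates enough output tuples to ``pay'' for that recomputation. (It is cleaner to recompute the explicit relation $\eval{\query_1}{\DB}$ than, as hinted in Section~\ref{sec:discussion}, to rebuild a constant-delay structure for $\query_1$ on a modified database, precisely because this relation is a subrelation of $\relE^{\DB}$.) Concretely, view the $\set{\relE}$-database $\DB$ as a digraph and let $L_{\DB}\deff\setc{\verta}{(\verta,\verta)\in\relE^{\DB}}$ be its set of loop vertices. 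Then $\eval{\query_1}{\DB}=\setc{(\verta,\vertb)}{\verta,\vertb\in L_{\DB},\ (\verta,\vertb)\in\relE^{\DB}}\subseteq\relE^{\DB}$, so $\setsize{\eval{\query_1}{\DB}}\le\setsize{\relE^{\DB}}$; moreover $\eval{\query_2}{\DB}=\eval{\query_1}{\DB}\times\relE^{\DB}$, and $\eval{\query_1}{\DB}\neq\emptyset$ iff $L_{\DB}\neq\emptyset$ (any $\verta\in L_{\DB}$ gives $(\verta,\verta)\in\eval{\query_1}{\DB}$).

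First I would fix the data structure: keep $\relE^{\DB}$ in a doubly linked list together with a two-dimensional array of pointers into that list (indexed by $\Dom\times\Dom$), and keep $L_{\DB}$ in a doubly linked list together with a one-dimensional array of pointers into it. Building these from $\DBstart$ costs $\bigoh(\size{\DBstart})$ time (linear preprocessing), and on an update command $\Insert$/$\Delete\ \relE(\verta,\vertb)$ one updates the list and array for $\relE^{\DB}$ in $\bigoh(1)$ time and, if $\verta=\vertb$, adds or removes $\verta$ from the $L_{\DB}$-structure in $\bigoh(1)$ time (since $(\verta,\verta)\in\relE^{\DB}$ holds exactly while it has been inserted and not yet deleted). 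Hence $\updatetime=\bigoh(1)$. Note we deliberately do \emph{not} maintain $\eval{\query_1}{\DB}$ explicitly: inserting a single loop at $\verta$ can add $\Theta(\setsize{\relE^{\DB}})$ pairs to $\eval{\query_1}{\DB}$, which is exactly the phenomenon behind the hardness of $\query_1$ in Lemma~\ref{lem:ExxExyEyy}.

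The enumeration routine is the crux. On a call to $\ENUMERATE$: if $L_{\DB}=\emptyset$, immediately output $\EOE$. Otherwise let $\vertc$ be the first element of the $L_{\DB}$-list; then $(\vertc,\vertc)\in\eval{\query_1}{\DB}$ and $\relE^{\DB}\neq\emptyset$. In \emph{Phase~1}, walk through the list of $\relE^{\DB}$ and for each $(\varz_1,\varz_2)\in\relE^{\DB}$ output $(\vertc,\vertc,\varz_1,\varz_2)$; interleaved with this (one step per emitted tuple), run a second cursor through the list of $\relE^{\DB}$ and build an explicit list $R$ of all pairs $(\vertb_1,\vertb_2)\in\relE^{\DB}$ with $\vertb_1,\vertb_2\in L_{\DB}$ (tested in $\bigoh(1)$ via the array) and $(\vertb_1,\vertb_2)\neq(\vertc,\vertc)$. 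Since Phase~1 emits one tuple per element of $\relE^{\DB}$, it produces $\setsize{\relE^{\DB}}$ tuples, which is exactly enough to let the second cursor finish its $\bigoh(\setsize{\relE^{\DB}})$ work, so at the end of Phase~1 we have $R=\eval{\query_1}{\DB}\setminus\set{(\vertc,\vertc)}$ in hand. In \emph{Phase~2}, run a nested loop emitting $(\vertb_1,\vertb_2,\varz_1,\varz_2)$ for every $(\vertb_1,\vertb_2)\in R$ and every $(\varz_1,\varz_2)\in\relE^{\DB}$, then output $\EOE$. Together the two phases enumerate $\bigl(\set{(\vertc,\vertc)}\cup R\bigr)\times\relE^{\DB}=\eval{\query_1}{\DB}\times\relE^{\DB}=\eval{\query_2}{\DB}$, each tuple exactly once (the phases are disjoint because $(\vertc,\vertc)\notin R$), and every individual step — locating $\vertc$, advancing a list cursor, a single $L_{\DB}$-membership test, the Phase~1$\to$Phase~2 switch, emitting $\EOE$ — costs $\bigoh(1)$, giving constant delay. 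The only point that needs care, and the main (mild) obstacle, is this amortisation in Phase~1: one must verify that the number of Phase~1 output tuples ($\setsize{\relE^{\DB}}\ge 1$ whenever $L_{\DB}\neq\emptyset$) is at least the work needed to rebuild $\eval{\query_1}{\DB}$, so that $R$ is complete precisely when Phase~2 begins; the degenerate cases ($\relE^{\DB}$ being a single loop, or $R=\emptyset$) are immediately consistent with this.
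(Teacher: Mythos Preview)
Your proof is correct and follows the same two-phase scheme as the paper: emit $\set{(\vertc,\vertc)}\times\relE^{\DB}$ first, and interleave this with a linear-time recomputation that prepares enumeration of $\bigl(\eval{\query_1}{\DB}\setminus\set{(\vertc,\vertc)}\bigr)\times\relE^{\DB}$. Your version is slightly more elementary in that you rebuild $\eval{\query_1}{\DB}$ directly by filtering $\relE^{\DB}$ against $L_{\DB}$, whereas the paper invokes the black-box static constant-delay result for free-connex acyclic queries on $\DBstrich$; your observation that $\eval{\query_1}{\DB}\subseteq\relE^{\DB}$ makes this shortcut possible and avoids that dependency.
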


\begin{proof}
  We first observe that in the static setting, the results of the
  query
  \,$\query_1(\varx,\vary)\deff 
     \bodyjoin{\relE\varx\varx \und \relE\varx\vary \und \relE\vary\vary}$\, 
  can be
  enumerated with constant delay after $\bigoh(\size{\DB})$ preprocessing (this
  is easy to see and also follows from \cite{Bagan.2007}, since the
  query is free-connex acyclic).

  Our dynamic algorithm for enumerating the results of $\query_2$
  stores a doubly linked list of all elements $\vertc$ such that
  $(\vertc,\vertc)\in \relE^\DB$.
  Furthermore, we store an adjacency matrix of $\relE^{\DB}$ as well
  as a doubly linked list of all tuples in $\relE^\DB$.

  To achieve constant update time, we implement the latter by three
  2-dimensional arrays $\arrayA$, $\arrayB_{next}$, $\arrayB_{prev}$ and two
  tuples $\texttt{first}_{\arrayB}$ and $\texttt{last}_{\arrayB}$, all of which are
  initialised by 0.

  Upon an update command of the form $\Insert\;E(i,j)$, we proceed as
  follows. 
  If $\arrayA[i,j]=1$, we are done.
  Otherwise, we set $\arrayA[i,j]\deff 1$.
  If $\texttt{last}_{\arrayB}=0$, then we set
  $\texttt{first}_{\arrayB}\deff \texttt{last}_{\arrayB}\deff [i,j]$.
  Otherwise, we set $[i',j']\deff\texttt{last}_{\arrayB}$,
  $\arrayB_{prev}[i,j]\deff [i',j']$,
  $\arrayB_{next}[i',j']\deff [i,j]$, and
  $\texttt{last}_{\arrayB}\deff [i,j]$.

  Upon an update command of the form $\Delete\;E(i,j)$, we proceed as
  follows. 
  If $\arrayA[i,j]=0$, we are done. 
  Otherwise, we set $\arrayA[i,j]\deff 0$.
  If $[i,j]=\texttt{first}_{\arrayB}=\texttt{last}_{\arrayB}$, then we
  set $\texttt{first}_{\arrayB}\deff\texttt{last}_{\arrayB}\deff 0$
  and $\arrayB_{next}[i,j]\deff\arrayB_{prev}[i,j]\deff 0$. 
  Otherwise, if $\texttt{first}_{\arrayB}=[i,j]$, then we set 
  $\texttt{first}_{\arrayB}\deff\arrayB_{next}[i,j]$ and
  $\arrayB_{prev}\texttt{first}_{\arrayB}\deff
  \arrayB_{next}[i,j]\deff 0$.
  Otherwise, if $\texttt{last}_{\arrayB}=[i,j]$, then we set 
  $\texttt{last}_{\arrayB}\deff\arrayB_{prev}[i,j]$ and
  $\arrayB_{next}\texttt{last}_{\arrayB}\deff
  \arrayB_{prev}[i,j]\deff 0$.
  Otherwise, we set 
  $\arrayB_{next}\arrayB_{prev}[i,j]\deff \arrayB_{next}[i,j]$,
  $\arrayB_{prev}\arrayB_{next}[i,j]\deff \arrayB_{prev}[i,j]$, and
  $\arrayB_{prev}[i,j]\deff \arrayB_{next}[i,j]\deff 0$.

  Of course, we can use a similar data structure to store, and update
  within constant time, 
  the doubly linked list of all elements $\vertc$ such that
  $(\vertc,\vertc)\in \relE^\DB$.

  Upon a call of the $\ENUMERATE$ routine,
  we let $c_0$ be the first element in the list of all elements $\vertc$ such that
  $(\vertc,\vertc)\in \relE^\DB$.
  If no such element exists, we
  know that the query result is empty, and we can immediately output
  the end-of-enumeration message $\EOE$.
  Otherwise, 
  we immediately start to output all tuples
  $(\vertc_0,\vertc_0)\times \relE^\DB$ with constant delay $\delaytime$ (where we
  choose the constant $\delaytime$ to be large enough).  As this
  takes time $\delaytime\setsize{\relE^\DB}=\Omega(\delaytime\|\DB\|)$, there is
  enough time to perform, in the meantime, the full linear time preprocessing phase of the static
  enumeration algorithm for the query $\query_1(x,y)$ on the database
  $\DBstrich$ that is obtained from $\DB$ by deleting the tuple $(c_0,c_0)$.
  This allows to afterwards
  enumerate with constant delay all remaining tuples in $\query_2(\DB)$, i.e., all tuples
  in $\eval{\query_1}{\DBstrich}\times \relE^{\DB}$.
  \end{proof}

}

\end{document}

%
%
%
%
 %

%
%
%
%
